\newtheorem{theorem}{Theorem}[section]
\newtheorem{lemma}[theorem]{Lemma}
\newtheorem{corollary}[theorem]{Corollary}
\newtheorem{claim}[theorem]{Claim}
\newtheorem{fact}[theorem]{Fact}
\newtheorem{conjecture}[theorem]{Conjecture}
\newtheorem{proposition}[theorem]{Proposition}
\declaretheoremstyle[
spaceabove=\topsep, spacebelow=\topsep,
headfont=\normalfont\bfseries,
notefont=\bfseries, notebraces={}{},
bodyfont=\normalfont\itshape,
postheadspace=0.5em,
name={\ignorespaces},
numbered=no,
headpunct=.]
{mystyle}
\theoremstyle{definition}
\newtheorem{definition}[theorem]{Definition}
\newtheorem{assumption}[theorem]{Assumption}
\theoremstyle{remark}
\def\moverlay{\mathpalette\mov@rlay}
\def\mov@rlay#1#2{\leavevmode\vtop{%
		\baselineskip\z@skip \lineskiplimit-\maxdimen
		\ialign{\hfil$\m@th#1##$\hfil\cr#2\crcr}}}
\newcommand{\charfusion}[3][\mathord]{
	#1{\ifx#1\mathop\vphantom{#2}\fi
		\mathpalette\mov@rlay{#2\cr#3}
	}
	\ifx#1\mathop\expandafter\displaylimits\fi}
\renewenvironment{algomathdisplay}
 {\[}
 {\@endalgocfline\vspace{-\baselineskip}\]{\DontPrintSemicolon\;}}
\DeclareMathOperator{\GF}{GF}
\renewcommand{\poly}{\mathrm{poly}}
\renewcommand{\polylog}{\mathrm{polylog}}
\newcommand{\eps}{\varepsilon}
\newcommand{\Enc}{\mathsf{Enc}}
\newcommand{\Dec}{\mathsf{Dec}}
\newcommand{\Ext}{\mathsf{Ext}}
\newlang{\MCSP}{MCSP}
\newlang{\MOCSP}{MOCSP}
\newlang{\MFSP}{MFSP}
\newlang{\MKtP}{MKtP}
\newlang{\MKTP}{MKTP}
\newlang{\itrMCSP}{itrMCSP}
\newlang{\itrMKTP}{itrMKTP}
\newlang{\itrMINKT}{itrMINKT}
\newlang{\MINKT}{MINKT}
\newlang{\MINK}{MINK}
\newlang{\MINcKT}{MINcKT}
\newlang{\CMD}{CMD}
\newlang{\DCMD}{DCMD}
\newlang{\CGL}{CGL}
\newlang{\PARITY}{PARITY}
\renewlang{\Gap}{Gap}
\newlang{\Avoid}{\textnormal{\textsc{Avoid}}}
\newlang{\LossyCode}{\textsc{Lossy-Code}}
\newlang{\MissingString}{\textsc{Missing-String}}
\newlang{\SinkOfDAG}{\textsc{Sink-Of-DAG}}
\newlang{\Iter}{\textsc{Iter}}
\newlang{\Palindromes}{\textsc{Palindromes}}
\newlang{\Sparsification}{\textsc{Sparsification}}
\newlang{\HamEst}{\mathsf{HammingEst}}
\newlang{\HamHit}{\mathsf{HammingHit}}
\newlang{\CktEval}{\textsc{Circuit-Eval}}
\newlang{\Hard}{\textsc{Hard}}
\newlang{\cHard}{\textsc{cHard}}
\newlang{\Lin}{\textsc{Lin}}
\newlang{\CAPP}{CAPP}
\newlang{\GapUNSAT}{GapUNSAT}
\newlang{\OV}{OV}
\newlang{\PRIMES}{PRIMES}
\renewlang{\PCP}{PCP}
\newlang{\PCPP}{PCPP}
\newclass{\FMA}{FMA}
\newclass{\Avg}{Avg}
\newclass{\ZPEXP}{ZPEXP}
\newclass{\DLOGTIME}{DLOGTIME}
\newclass{\ALOGTIME}{ALOGTIME}
\newclass{\ATIME}{ATIME}%
\newclass{\SZKA}{SZKA}
\newclass{\Laconic}{Laconic\text{-}}
\newclass{\APEPP}{APEPP}
\newclass{\SAPEPP}{SAPEPP}
\newclass{\TFSigma}{TF\Sigma}
\newclass{\NTIMEGUESS}{NTIMEGUESS}
\newclass{\FZPP}{FZPP}
\newclass{\SearchNP}{SearchNP}
\newclass{\UEoPL}{UEoPL}
\newclass{\EoPL}{EoPL}
\newclass{\SoPL}{SoPL}
\newclass{\CLS}{CLS}
\newclass{\PWPP}{PWPP}
\newlang{\Formula}{Formula}
\newlang{\THR}{THR}
\newcommand{\MAJ}{\mathsf{MAJ}}
\newcommand{\DOR}{\mathsf{DOR}}
\newlang{\ETHR}{ETHR}
\newlang{\Midbit}{Midbit}
\newlang{\LCS}{LCS}
\newlang{\TAUT}{TAUT}
\newcommand{\dwPHP}{\mathrm{dwPHP}}
\newcommand{\Res}{\mathsf{Res}}
\newcommand{\PV}{\mathsf{PV}}
\newcommand{\APC}{\mathsf{APC}}
\newcommand{\AND}{\mathsf{AND}}
\newcommand{\XOR}{\mathsf{XOR}}
\newcommand{\calA}{\mathcal{A}}
\newcommand{\calB}{\mathcal{B}}
\newcommand{\calD}{\mathcal{D}}
\newcommand{\calG}{\mathcal{G}}
\newcommand{\calH}{\mathcal{H}}
\newcommand{\calL}{\mathcal{L}}
\newcommand{\calP}{\mathcal{P}}
\newcommand{\calU}{\mathcal{U}}
\newcommand{\calX}{\mathcal{X}}
\newcommand{\N}{\mathbb{N}}
\newcommand{\F}{\mathbb{F}}
\newcommand{\Eval}{\mathsf{Eval}}%
\newcommand{\Range}{\mathrm{Range}}
\newcommand{\Adv}{\mathsf{Adv}}
\newcommand{\TT}{\mathsf{TT}}
\newcommand{\Krajicek}{Kraj\'{\i}\v{c}ek\xspace}
\newcommand{\Jerabek}{Je\v{r}\'{a}bek\xspace}
\newcommand{\Prover}{\mathsf{Prover}}
\newcommand{\Verifier}{\mathsf{Verifier}}
\newcommand{\cbra}[1]{\left\{ #1 \right\}}
\newcommand{\iO}{i\mathcal{O}}
\definecolor{color1}{RGB}{46,134,193}
\def\Anonymity{0}%
\begin{document}

\newgeometry{margin=0.85in}
\title{Hardness of Range Avoidance and\\Proof Complexity Generators from Demi-Bits}
\ifnum\Anonymity=0
\author{
	Hanlin Ren\\ \small{Institute for Advanced Study} \\ \small{\texttt{\href{mailto:h4n1in.r3n@gmail.com}{h4n1in.r3n@gmail.com}}}
	\and
	Yichuan Wang\\ \small{University of California, Berkeley} \\ \small{\texttt{\href{mailto:yichuan.tcs@gmail.com}{yichuan.tcs@gmail.com}}}
    \and
    Yan Zhong\\ \small{Johns Hopkins University} \\ \small{\texttt{\href{yanzhong.cs@gmail.com}{yanzhong.cs@gmail.com}}}
}
\fi

\maketitle
\begin{abstract}
Given a circuit $G: \{0, 1\}^n \to \{0, 1\}^m$ with $m > n$, the \emph{range avoidance} problem ($\textsc{Avoid}$) asks to output a string $y\in \{0, 1\}^m$ that is not in the range of $G$. Besides its profound connection to circuit complexity and explicit construction problems, this problem is also related to the existence of \emph{proof complexity generators} --- circuits $G: \{0, 1\}^n \to \{0, 1\}^m$ where $m > n$ but for every $y\in \{0, 1\}^m$, it is infeasible to prove the statement ``$y\not\in\Range(G)$'' in a given propositional proof system. %

This paper connects these two problems with the existence of \emph{demi-bits generators}, a fundamental cryptographic primitive against nondeterministic adversaries introduced by Rudich (RANDOM~'97). 
\begin{itemize}
    \item We show that the existence of demi-bits generators implies $\textsc{Avoid}$ is hard for nondeterministic algorithms. This resolves an open problem raised by Chen and Li (STOC~'24). Furthermore, assuming the demi-hardness of certain LPN-style generators or Goldreich's PRG, we prove the hardness of $\Avoid$ even when the instances are constant-degree polynomials over $\mathbb{F}_2$. 
    \item We show that the dual weak pigeonhole principle is unprovable in Cook's theory $\mathsf{PV}_1$ under the existence of demi-bits generators secure against $\mathbf{AM}/_{O(1)}$, thereby separating \Jerabek's theory $\mathsf{APC}_1$ from $\mathsf{PV}_1$. Previously, Ilango, Li, and Williams (STOC~'23) obtained the same separation under different (and arguably stronger) cryptographic assumptions.
    \item We transform demi-bits generators to proof complexity generators that are \emph{pseudo-surjective} in certain parameter regime. Pseudo-surjectivity is the strongest form of hardness considered in the literature for proof complexity generators.
\end{itemize}

Our constructions are inspired by the recent breakthroughs on the hardness of $\textsc{Avoid}$ by Ilango, Li, and Williams (STOC~'23) and Chen and Li (STOC~'24). We use \emph{randomness extractors} to significantly simplify the construction and the proof.
\end{abstract}

\section{Introduction}

This paper makes progress on the hardness of the \emph{range avoidance problem} and the existence of \emph{proof complexity generators}. We begin with a brief overview of these two lines of research.

\subsection{Range Avoidance}
The \emph{range avoidance} problem ($\Avoid$) is a total search problem introduced by~\cite{KKMP21,Korten21,RenSW22}. Given a Boolean circuit $G: \{0,1\}^n \rightarrow \{0,1\}^m$ with $m > n$ (usually we also require $m\leq\poly(n)$), the goal is to find a string $y \in \{0,1\}^m$ such that $y \not\in \Range(G)$. This problem has attracted considerable interest due to its connection to central problems in complexity theory such as explicit constructions~\cite{Korten21, RenSW22, ChenHLR23, GuruswamiLW22, GajulapalliGNS23} and circuit lower bounds~\cite{Korten21, CHR24, Li24, KortenP24}. We refer the reader to~\cite{Korten-EATCS} for a comprehensive survey on the range avoidance problem. %

The range avoidance problem admits a trivial randomized algorithm: simply output a uniformly random $m$-bit string, which will lie outside the range of $G$ with high probability. On the other hand, deterministic algorithms for $\Avoid$ would imply breakthroughs in explicit constructions and circuit lower bounds~\cite{Korten21, RenSW22, GuruswamiLW22, GajulapalliGNS23}. Since such breakthroughs are widely believed to be \emph{true} (albeit \emph{difficult to prove}), the aforementioned results only suggest that deterministic algorithms for $\Avoid$ would be difficult to obtain \emph{unconditionally}, rather than that such algorithms are \emph{unlikely to exist}. This raises a natural question: Is there a deterministic algorithm for $\Avoid$?

Perhaps surprisingly, recent results suggested that the answer is likely \emph{no} under plausible cryptographic assumptions. Ilango, Li, and Williams~\cite{DBLP:conf/stoc/IlangoLW23} showed that $\Avoid$ is hard for deterministic algorithms assuming the existence of subexponentially secure indistinguishability obfuscation ($\iO$) and that $\NP\neq \coNP$. Chen and Li~\cite{DBLP:conf/stoc/ChenL24} extended this result and showed that $\Avoid$ is hard even for \emph{nondeterministic} algorithms, under certain assumptions regarding the nondeterministic hardness of LWE (Learning with Errors) or LPN (Learning Parity with Noise). In addition to providing compelling evidence for the hardness of $\Avoid$, these results establish a strong separation between deterministic and randomized algorithms (recall that there exists a trivial randomized algorithm for $\Avoid$). %

The hardness results in~\cite{DBLP:conf/stoc/IlangoLW23, DBLP:conf/stoc/ChenL24} open up several exciting research directions:
\begin{enumerate}
    \item {\bf Can the hardness of $\Avoid$ be based on weaker (or alternative) assumptions?}
    
    The assumptions used in prior work, i.e., $\iO$~\cite{DBLP:conf/stoc/IlangoLW23} and public-key encryption~\cite{DBLP:conf/stoc/ChenL24}, belong to Cryptomania in the terminology of Impagliazzo's worlds~\cite{Impagliazzo95}. Can we base the hardness of range avoidance on assumptions of a ``Minicrypt'' flavor, such as one-way functions or pseudorandom generators? Additionally, both~\cite{DBLP:conf/stoc/IlangoLW23} and~\cite{DBLP:conf/stoc/ChenL24} rely on \emph{subexponential} indistinguishability assumptions\footnote{More precisely, the assumptions in \cite{DBLP:conf/stoc/IlangoLW23, DBLP:conf/stoc/ChenL24} assert subexponential indistinguishability against polynomial-time adversaries. This level of security is referred to as ``JLS-security'' in~\cite{DBLP:conf/stoc/IlangoLW23}, where ``JLS'' comes from the strengths of the ``well-founded'' assumptions used to construct $\iO$ in~\cite{JainLS21}.}. Are such subexponential assumptions necessary?
    \item {\bf Can we obtain hardness of $\Avoid$ for instances computed by restricted circuits?}
    
    Previously, under assumptions related to LWE, Chen and Li~\cite{DBLP:conf/stoc/ChenL24} showed that $\Avoid$ remains hard even when each output bit of $G$ is computed by a so-called ``$\DOR\circ\MAJ\circ\AND_{O(\log n)}$ circuit''. No such results were known for other restricted circuit classes. The related \emph{remote point} problem has been shown to be hard under LPN-style assumptions for $\XOR\circ\AND_{O(\log n)}$ circuits (i.e., $O(\log n)$-degree polynomials over $\F_2$)~\cite{DBLP:conf/stoc/ChenL24}.
\end{enumerate}

This paper makes progress on both fronts. We show that $\Avoid$ is hard for nondeterministic algorithms under the existence of \emph{demi-bits generators} with sufficient stretch\footnote{The stretchability of generic demi-bits generators is only partially understood. Recent work of Tzameret and Zhang~\cite{tzameret_et_al:LIPIcs.ITCS.2024.95} shows that demi-bits generator with $1$-bit stretch $G : \{0, 1\}^n \to \{0, 1\}^{n+1}$ implies those with a \emph{sublinear} bits of stretch $G': \{0, 1\}^n \to \{0, 1\}^{n + n^c}$ for any constant $0 < c < 1$. This is the first proof that generic demi-bits generators are stretchable at all, but it still falls short of the \emph{linear} or \emph{polynomial} stretch assumed in our hypothesis.}%
~\cite{Rudich97}, e.g.\ generators $G : \{0,1\}^n \to \{0,1\}^{10n}$. A formal definition of demi-bits generators is deferred to~\autoref{sec: def of demi-bits}, and candidate constructions supporting their existence are discussed in~\autoref{sec: candidate demi-bits}. For the purpose of this introduction, it suffices to keep in mind that demi-bits generators are a version of cryptographic pseudorandom generators secure against nondeterministic adversaries.

We highlight three key features of our results here:
\begin{enumerate}
    \item {\bf Minicrypt-style assumptions against nondeterministic adversaries.}
    
    Roughly speaking, demi-bits generators are (cryptographic) pseudorandom generators secure against nondeterministic adversaries\footnote{In fact, Rudich~\cite{Rudich97} introduced two ways to define pseudorandomness against nondeterministic adversaries: super-bits and demi-bits. Demi-bits are weaker than super-bits.}. They are arguably a natural ``Minicrypt'' analog of pseudorandom generators in the context of cryptography against nondeterministic adversaries. Moreover, our results only rely on the \emph{super-polynomial} hardness of these demi-bits generators, thereby completely getting rid of the subexponential (or ``JLS''-style) assumptions used in prior work.
    \item {\bf Hardness for restricted circuit classes.}
    
    Under the assumption that certain concrete demi-bits generators are secure (e.g., those based on LPN or Goldreich's PRG), we show that the range avoidance problem remains hard for nondeterministic algorithms even when the underlying circuits belong to $\XOR\circ\AND_{O(1)}$, i.e., constant-degree polynomials over $\mathbb{F}_2$.

    \item {\bf Simplicity of the proof for hardness of range avoidance.}
    
    In contrast to~\cite{DBLP:conf/stoc/IlangoLW23,DBLP:conf/stoc/ChenL24}, which rely on sophisticated and delicate adaptations of high-end cryptographic assumptions, our proof of the hardness of range avoidance (\autoref{thm: demi-bit to Avoid notin FP}) is based solely on the most elementary pseudorandom constructions together with the existence of demi-bits. This approach distills the arguments of~\cite{DBLP:conf/stoc/IlangoLW23,DBLP:conf/stoc/ChenL24} to their essence, yielding a proof that is both conceptually cleaner and technically simpler. In fact, the proof of~\autoref{thm: demi-bit to Avoid notin FP} fits in just half a page. Moreover, by isolating the core ingredients, this simplified framework opens the door to potential extensions and generalizations that may be harder to see in the more cryptographically heavy approaches.
    
\end{enumerate}

\subsection{Proof Complexity Generators}

Let $G: \{0, 1\}^n \to \{0, 1\}^m$ be a Boolean circuit where $m > n$, and $\calP$ be a propositional proof system. We say that $G$ is a (secure) \emph{proof complexity generator}~\cite{ABRW04, krajivcek2001weak} against $\calP$ if, for every string $y \in \{0, 1\}^m$, the (properly encoded) statement ``$y \not\in \Range(G)$'' does not admit short proofs in $\calP$.\footnote{If $y$ is in fact in the range of $G$, then ``$y\not\in\Range(G)$'' is a false statement and hence has no proof in any sound proof system. Therefore, this requirement is equivalent to that, for every $y\not\in\Range(G)$, the tautology ``$y\not\in\Range(G)$'' is hard to prove in $\calP$.} A comprehensive survey about proof complexity generators can be found in~\cite{Krajicek-generator-book}.

The study of proof complexity generators is motivated by at least the following themes:
\begin{enumerate}
    \item {\bf Pseudorandomness in proof complexity~\cite{ABRW04}.}
    
    A standard \emph{pseudorandom generator} (PRG) $G: \{0, 1\}^n \to \{0, 1\}^m$~\cite{Yao82a} fools a (polynomial-time) algorithm $\calD$ if $\calD$ cannot distinguish the outputs of $G$ from truly random $m$-bit strings; that is, $\calD(\calU_m) \approx \calD(G(\calU_n))$, where $\calU_\ell$ denotes the uniform distribution over $\ell$-bit strings. Analogously, one can say that $G$ fools a propositional proof system $\calP$ if $\calP$ cannot distinguish between the outputs of $G$ and truly random $m$-bit strings, and a natural way of formalizing this is to say that $\calP$ cannot efficiently prove any string outside the range of $G$.

    Following the idea of pseudorandomness in proof complexity, subsequent works~\cite{Krajicek04, Pich11, Krajicek11, razborov2015pseudorandom, Khaniki22} studied the hardness of the \emph{Nisan--Wigderson} generator (\cite{NisanW94}) as a proof complexity generator in various settings. An influential conjecture of Razborov asserts that the Nisan--Wigderson generator based on any ``sufficiently hard'' function in $\NP\cap\coNP$ is a proof complexity generator against Extended Frege~\cite[Conjecture 2]{razborov2015pseudorandom}; that is, computational hardness can be transformed into proof complexity pseudorandomness.

    \item {\bf Candidate hard tautologies for strong proof systems.}
    
    There are two difficulties in proving lower bounds for strong proof systems such as Frege and Extended Frege: the lack of techniques and the lack of candidate hard tautologies. The latter problem was highlighted by Bonet, Buss, and Pitassi~\cite{BBP95}, who demonstrated that many combinatorial tautologies can be proved efficiently in Frege, hence disqualifying them as hard candidates. This issue has been further discussed in~\cite{krajivcek2001tautologies, Krajicek04, SanthanamT21, Khaniki22}.
    
    Tautologies from proof complexity generators are among the few natural candidates that appear hard for strong proof systems. It seems plausible that for some mapping $G: \{0, 1\}^n \to \{0, 1\}^{n+1}$ and some (or even \emph{every}) $y\in\{0, 1\}^{n+1}\setminus \Range(G)$, the natural CNF encoding of the tautology ``$y\not\in\Range(G)$'' requires super-polynomially long Extended Frege proofs.

    \item {\bf Unprovability of circuit lower bounds.}
    
    Given our very limited progress in circuit complexity, it is tempting to conjecture that circuit lower bounds are hard to prove in formal proof systems. For a Boolean function $f: \{0, 1\}^n \to \{0, 1\}$ and a size parameter $s$, one can write down a propositional formula ${\sf lb}(f, s)$ (of size $2^{O(n)}$) asserting that no circuit of size at most $s$ computes $f$. The proof complexity of such formulas have been studied extensively~\cite{Razborov98, Razborov04, Krajicek04, razborov2015pseudorandom, Pich15, PichS19, SanthanamT21, PichS22}, due to its implications for the metamathematics of complexity theory.

    Consider the \emph{truth table generator} $\TT: \{0,1\}^{\poly(s)} \to \{0,1\}^{2^n}$, which maps a size-$s$ circuit $C$ to its $2^n$-bit truth table. By definition, $\TT$ is a proof complexity generator against a proof system $\calP$ if and only if $\calP$ cannot efficiently prove any circuit lower bound ${\sf lb}(f, s)$. \Krajicek~\cite{Krajicek04} introduced the notion of \emph{pseudo-surjectivity} and showed that $\TT$ is the hardest pseudo-surjective generator: The existence of any generator pseudo-surjective against $\calP$ implies that $\TT$ is pseudo-surjective against $\calP$ (and thus that $\calP$ cannot prove circuit lower bounds). Razborov~\cite{razborov2015pseudorandom} further showed unprovability of circuit lower bounds in the proof system ${\rm Res}(\eps \log\log N)$ by exhibiting a proof complexity generator that is iterable for this system.\footnote{Iterability is a weaker notion than pseudo-surjectivity. \Krajicek~\cite{Krajicek04} also showed that $\TT$ is the ``hardest'' iterable generator.}
\end{enumerate}

\Krajicek~\cite{Krajicek04, Krajicek23, krajivcek2024existence} conjectured that there exists a proof complexity generator that is secure against every propositional proof system. One could also consider a slightly weaker conjecture that for every propositional proof system $\calP$, there is a proof complexity generator $C_\calP$ (possibly depending on $\calP$) that is hard against $\calP$. At first glance, these conjectures may appear unrelated to standard hardness assumptions in complexity theory or cryptography, as proof complexity generators require ``$y\not\in\Range(G)$'' to be hard to prove for \emph{every} $y$ (i.e., the \emph{best-case} $y$), while complexity-theoretic or cryptographic hardness assumptions tend to be either \emph{worst-case} or \emph{average-case}. We elaborate on the notion of ``best-case'' proof complexity in~\autoref{sec: perspective on best-case complexity}.

In this paper, we give strong evidence for the weaker conjecture by showing that it follows from the existence of demi-bits generators (with sufficiently large stretch)~\cite{Rudich97}. The latter is a natural and fundamental conjecture in the study of \emph{cryptography against nondeterministic adversaries}. Furthermore, we show that our generators are even pseudo-surjective under certain regimes.\footnote{The parameters of our pseudo-surjectivity results \emph{fall just short} of those required to apply \Krajicek's result~\cite{Krajicek04}, hence they do not imply the hardness of the truth table generator. This limitation is inherent; we discuss this issue in more details after presenting \autoref{thm: main pseudo-pcg}.} %

\subsection{Our Results}
\paragraph{Hardness of range avoidance.} Our main result is that the existence of demi-bits generators implies that $\Avoid\not\in\SearchNP$, i.e., $\Avoid$ is hard for nondeterministic search algorithms. %

\begin{theorem}[Main]\label{thm: main demi-bit to Avoid notin FP}
	If there exists a demi-bits generator $G: \{0, 1\}^n \to \{0, 1\}^{10n}$, then $\Avoid \notin \SearchNP$. %
\end{theorem}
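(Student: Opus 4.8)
The plan is to prove the contrapositive: assuming $\Avoid \in \SearchNP$, witnessed by a nondeterministic polynomial-time machine $M$ that on every input circuit $C\colon \{0,1\}^a \to \{0,1\}^b$ with $b > a$ has at least one accepting path and outputs a string outside $\Range(C)$ on each accepting path, I would build a polynomial-size nondeterministic circuit $D$ that breaks the demi-bit security of $G$ — that is, $\Pr_{z\sim\calU_{10n}}[D(z)=1] \ge \Pr_{x\sim\calU_n}[D(G(x))=1] + 1/\poly(n)$. It is worth stressing up front that this is the direction in which demi-bit security is non-trivial: the opposite inequality is always violated by the seed-guessing circuit $z \mapsto \exists x\,[G(x)=z]$, so a demi-bits generator is only required to resist attacks of the above form, and the circuit we construct must be of exactly this type.

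The starting point is that $G(\calU_n)$ is supported on at most $2^n$ strings whereas $\calU_{10n}$ spreads over $2^{10n}$, and I would convert this entropy gap into an $\Avoid$ instance via a strong seeded extractor $\Ext\colon \{0,1\}^{10n}\times\{0,1\}^{d} \to \{0,1\}^{\ell}$, with $d = \poly(n)$ and $\ell$ chosen so that $(x,s)\mapsto \Ext(G(x),s)$ maps fewer than $\ell$ bits into $\{0,1\}^{\ell}$ — hence is a legitimate $\Avoid$ instance whose range $I$ has density $2^{-\Omega(n)}$ — while $\Ext(\cdot,s)$ still sends $\calU_{10n}$ to a near-uniform distribution on $\{0,1\}^{\ell}$ for a good seed $s^\ast$ (fixed non-uniformly as advice). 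Feeding this instance to $M$ yields strings outside $I$; writing $Y$ for the set of all outputs $M$ can produce, we have $Y \subseteq \{0,1\}^{\ell}\setminus I$, and membership in $Y$ is $\NP$-verifiable (guess $M$'s accepting path), so the a priori $\coNP$-type assertion ``this string is not a compressed $G$-output'' becomes something an $\NP$ machine can certify. The natural distinguisher is then $D(z) := 1$ iff $\Ext(z,s^\ast) \in Y$: since $\Ext(G(x),s^\ast) \in I$ and $Y\cap I = \emptyset$, we get $D(G(x))=0$ for every $x$, while $D$ accepts $z\sim\calU_{10n}$ with probability roughly the density $\mu(Y)$ of $Y$ in $\{0,1\}^{\ell}$; a bound $\mu(Y) \ge 1/\poly(n)$ would contradict demi-bit security after amplifying to beat super-polynomial security by taking $O(\poly(n))$ independent copies.

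The main obstacle — the step I expect to be the crux — is forcing $\mu(Y)$ (or the analogous quantity in whatever refined construction is used) to be \emph{non-negligible} rather than merely positive. The difficulty is structural: on a single fixed instance $M$ may output only one string, so $\mu(Y)$ can be as small as $2^{-\ell}$; and since any length-increasing circuit misses at least half of its codomain, one cannot rig the instance so that the solver is \emph{forced} to output a prescribed, structured string. The honest fix is to make the $\Avoid$ instance fed to $M$ depend on (a hash of) the challenge $z$ itself, so that ranging over $z$ effectively ranges over exponentially many instances, or to design the instance so that \emph{every} non-output — not just the one $M$ happens to return — certifies a dense subset of $\{0,1\}^{10n}$ disjoint from $\Range(G)$, and then to transport that density back through $\Ext$ while preserving disjointness from $\Range(G)$. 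Reconciling ``the certified region is dense'' with ``the certified region avoids all $2^n$ images of $G$'' — two requirements that are in genuine tension for any reduction oblivious to the internal structure of $G$ — is exactly where the factor-$10$ stretch, the extractor error, and the output length $\ell$ must all be balanced at once; I expect this parameter bookkeeping, together with verifying that demi-bit stretch can be amplified as needed, to be the real content rather than any single conceptual leap.
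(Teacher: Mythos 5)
You have the right scaffolding — compose $G$ with a strong seeded extractor $\Ext$ to get an $\Avoid$ instance, and turn a $\SearchNP$ solver for $\Avoid$ into a one-sided adversary that rejects all of $\Range(G)$ — and you correctly identify the crux: with a single seed $s^*$ fixed non-uniformly as advice, the solver's output set $Y$ on the one instance $C_{s^*}$ may consist of a single string, so the pullback $\{z : \Ext(z, s^*) \in Y\}$ has density around $2^{-\ell}$ and your $D$ accepts essentially nothing. But you do not close this gap; you only gesture at two repairs (making the instance depend on a hash of $z$, or arranging that every non-output certifies a dense set of preimages), and neither, as stated, is how the argument actually goes — the second is simply false, since a single non-output $z$ of $C_r$ pulls back under $\Ext(\cdot, r)$ to a set of density only $\approx 2^{-m}$.

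The missing move is to quantify nondeterministically over the seed rather than fix one. Let $C_r(s) := \Ext(G(s), r)$ and define $\calB(y)$ to accept iff there exists $r$ such that some accepting branch of the solver $\calA(C_r)$ outputs $\Ext(y, r)$. Soundness is unchanged: if $y = G(s)$ then $\Ext(y, r) = C_r(s) \in \Range(C_r)$ for every $r$, so $\calA(C_r)$ never outputs it. For density, the word \emph{strong} in ``strong seeded extractor'' is doing all the work: writing $\calA'(r, z) = 1$ iff some branch of $\calA(C_r)$ outputs $z$, totality of the solver gives $\Pr_{r, z}[\calA'(r, z) = 1] \ge 2^{-m}$ (at least one accepted $z$ per $r$), and the strong extractor property — that $(r, \Ext(y, r))$ is $2^{-m-1}$-close to $(r, \calU_m)$ for all but at most $2^{N-1}$ exceptional $y$'s — yields that at least half of the $y \in \{0, 1\}^N$ satisfy $\Pr_r[\calA'(r, \Ext(y, r)) = 1] > 0$, i.e., have some $r$ for $\calB$ to guess. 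So $\calB$ accepts a $1/2$ fraction of random strings while rejecting all of $\Range(G)$, which also makes your amplification step unnecessary, and the ``tension'' you worry about dissolves: the accepting set has density $\ge 1/2$ while $\Range(G)$ has density $\le 2^{n-N}$, so there is nothing to reconcile once the quantification runs over seeds rather than sitting at a single fixed one.
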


In fact, we show that composing the demi-bits generator with a hash function in some pairwise independent hash family would yield a hard instance for $\Avoid$. In its full generality, our arguments hold for arbitrary \emph{strong seeded extractors}, and the theorem below follows from the leftover hash lemma~\cite{ILL89}, which guarantees that pairwise independent hash families are such extractors; see \autoref{thm: demi-bit to Avoid notin FP} for details. %

We present the version using pairwise independent hash families here due to its elegance:

\begin{restatable}{theorem}{ThmDemiBitsPlusHash}\label{thm: intro demi-bits plus hash}
    Let $G: \{0, 1\}^n \to \{0, 1\}^N$ be a demi-bits generator, $\calH = \{h: \{0, 1\}^N \to \{0, 1\}^m\}$ be a family of pairwise independent hash functions, and $\calA$ be a nondeterministic polynomial-time algorithm. If $N > 10m$ and $m > n$, then there exists $h\in \calH$ such that $\calA$ fails to solve the range avoidance problem on the input $h\circ G$.
\end{restatable}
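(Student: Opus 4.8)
The plan is to prove the contrapositive: assuming $\calA$ correctly solves $\Avoid$ on the instance $h\circ G$ for \emph{every} $h\in\calH$, I will build a nondeterministic polynomial-time distinguisher $\calB$ that violates the demi-bits security of $G$. For each $h\in\calH$ fix (say, lexicographically) one accepting computation of $\calA$ on input the circuit $h\circ G$, and let $y_h\in\{0,1\}^m$ be its output; by assumption $y_h\notin\Range(h\circ G)=h(\Range(G))$. The distinguisher $\calB$, on input $z\in\{0,1\}^N$, guesses a pair $(h,w)$, checks that $w$ is an accepting computation of $\calA$ on $h\circ G$, extracts its output $y$, and accepts iff $h(z)=y$. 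Since each $h$ has a polynomial-size description and is polynomial-time evaluable and $\calA$ runs in polynomial time, $\calB$ runs in time $\poly(N)=\poly(n)$ and is a legitimate nondeterministic adversary.

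Two claims then remain. \emph{(i) $\calB$ never accepts an output of $G$}: if $z=G(x)$ then $h(z)=h(G(x))\in h(\Range(G))$, while any output $y$ that $\calB$ recovers is a valid $\Avoid$ answer for $h\circ G$ and hence lies outside $h(\Range(G))$; so no guess makes $\calB$ accept, giving $\Pr_{z\sim G(\calU_n)}[\calB(z)=1]=0$. \emph{(ii) $\calB$ accepts at least half of all strings}: writing $U:=\{z\in\{0,1\}^N:\exists h\in\calH,\ h(z)=y_h\}$, I claim $|U|\ge 2^{N-1}$, and this is exactly the step where randomness extractors enter. Suppose not, so $\bar U:=\{0,1\}^N\setminus U$ has size $\ge 2^{N-1}$; let $X$ be uniform over $\bar U$, a source of min-entropy at least $N-1$. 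By the leftover hash lemma, for a uniformly random $h\in\calH$ the pair $(h,h(X))$ is $\eps$-close to $(h,\calU_m)$ with $\eps=\tfrac12\cdot 2^{-(N-1-m)/2}$, and since $N>10m$ we have $\eps<2^{-m}$. Consider the event ``the second coordinate equals $y_{(\text{first coordinate})}$''. Under $(h,\calU_m)$ this event has probability exactly $2^{-m}$, whereas under $(h,h(X))$ it has probability $0$, since every $x\in\bar U$ satisfies $h(x)\neq y_h$ for all $h$ by definition of $\bar U$. This forces $2^{-m}\le\eps<2^{-m}$, a contradiction; hence $|U|\ge 2^{N-1}$ and $\Pr_{z\sim\calU_N}[\calB(z)=1]\ge\tfrac12$. (With the slack $N>10m$ one can in fact push $\bar U$ down to size $2^{N}\cdot\negl(m)$, but $\tfrac12$ already suffices.)

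Combining (i) and (ii), $\calB$ is a nondeterministic polynomial-time algorithm with $\Pr_{z\sim\calU_N}[\calB(z)=1]-\Pr_{z\sim G(\calU_n)}[\calB(z)=1]\ge\tfrac12$, contradicting the demi-bits security of $G$; therefore $\calA$ must fail on $h\circ G$ for some $h\in\calH$. The only genuinely nontrivial part is the density bound (ii): a single hash $h$ yields advantage only $|h^{-1}(y_h)|/2^N\approx 2^{-m}$, which is negligible, so one must union over the entire (exponentially large) family $\calH$, and the real content is that the preimages $h^{-1}(y_h)$ cannot conspire to avoid a constant fraction of $\{0,1\}^N$; the leftover hash lemma packages this in two lines, replacing what would otherwise be a delicate second-moment estimate over $\calH$. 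Finally, pairwise independence is used only through the leftover hash lemma, so the same argument goes through verbatim with an arbitrary strong seeded extractor $\Ext\colon\{0,1\}^N\times\{0,1\}^d\to\{0,1\}^m$: $\calB$ guesses a seed $s$ and uses the $\Avoid$ instance $\Ext(\cdot,s)\circ G$, which yields the full statement (\autoref{thm: demi-bit to Avoid notin FP}).
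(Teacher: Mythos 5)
Your proof is correct and follows essentially the same reduction as the paper's proof of \autoref{thm: demi-bit to Avoid notin FP}: a nondeterministic adversary guesses a seed/hash and an accepting branch of $\calA$, the one-sided error on $\Range(G)$ is immediate from correctness of $\calA$, and the density bound for acceptance on uniform inputs comes from the extractor guarantee (applied, as you do, to the uniform distribution on the rejected set). The only cosmetic differences are that you fix a lexicographic witness $y_h$ per hash rather than quantify over all accepting outputs, and that you phrase the density step as a contradiction via LHL rather than invoking the paper's cardinality-form extractor fact; these are equivalent.
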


As discussed before, this result improves upon~\cite{DBLP:conf/stoc/IlangoLW23, DBLP:conf/stoc/ChenL24} in several key aspects. First, we only require super-polynomial hardness of the demi-bits generators, thereby completely eliminating the subexponential- or JLS-hardness assumptions. Second, our assumptions are solely based on the existence of demi-bits generators, a primitive arguably situated within ``nondeterministic Minicrypt.'' 
Finally, by instantiating the extractors with pairwise independent hash functions computable by linear transformations over $\F_2$ and using demi-bits generators computable by constant-degree $\F_2$-polynomials, we establish the hardness of $\Avoid$ even for circuits where each output bit is computable in constant $\F_2$-degree (i.e., $\XOR\circ\AND_{O(1)}$ circuits):

\begin{corollary}[Informal]\label{cor: main degree-O(1) Avoid notin FP}
	Assuming the existence of demi-bits generators computable in $\XOR\circ\AND_{O(1)}$ (\autoref{assumption: demi-bits in degree 2}), the range avoidance problem for $\XOR\circ\AND_{O(1)}$ circuits is not in $\SearchNP$.
\end{corollary}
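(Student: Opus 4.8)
The plan is to apply \autoref{thm: intro demi-bits plus hash} with the affine ($\F_2$-linear) pairwise independent hash family, so that the hard $\Avoid$ instance it produces is the composition of a constant-degree map with a degree-$1$ map and is therefore itself of constant $\F_2$-degree. Concretely, I would first fix a demi-bits generator $G\colon\bs n\to\bs N$ computable by $\XOR\circ\AND_{O(1)}$ circuits, as guaranteed by \autoref{assumption: demi-bits in degree 2}; say each output bit $G_j$ is an $\F_2$-polynomial of degree at most a constant $d_0$, and $G$ has stretch factor above $10$, i.e.\ $N>10m$ for some $m>n$. (Natural constant-degree candidates such as Goldreich's PRG and LPN-based generators have polynomial stretch and comfortably meet this; and a weaker stretch factor $c>1$ can be amplified past $10$ by a constant-size GGM-style construction that keeps the degree a (possibly larger) constant and, via a standard hybrid argument, preserves the demi-bits property.) I then take $\calH=\set{h_{A,b}\colon\bs N\to\bs m}$ with $h_{A,b}(x)=Ax+b$ for $A\in\F_2^{m\times N}$ and $b\in\F_2^m$; this is pairwise independent, since for $x\ne x'$ the pair $(Ax+b,Ax'+b)$ is uniform on $\bs m\times\bs m$ (as $x-x'\ne 0$ makes $A(x-x')$ uniform and $b$ is uniform and independent), and each $h_{A,b}$ is an affine map over $\F_2$.

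The one real step is the degree-composition observation. For any fixed $A,b$, the $i$-th output bit of $h_{A,b}\circ G\colon\bs n\to\bs m$ is $\bigoplus_{j\,:\,A_{ij}=1}G_j(z)\;\oplus\;b_i$, a sum of $\F_2$-polynomials each of degree at most $d_0$ plus a constant, hence an $\F_2$-polynomial of degree at most $d_0=O(1)$; and since each $G_j$ has a $\poly(n)$-size $\XOR\circ\AND_{d_0}$ circuit, $h_{A,b}\circ G$ is a $\poly(n)$-size $\XOR\circ\AND_{O(1)}$ circuit from $n$ bits to $m>n$ bits, i.e.\ a legitimate instance of $\Avoid$ for this restricted circuit class. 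This works precisely because the hash functions are \emph{linear}: placing a degree-$1$ map on the outside of a degree-$d_0$ map does not raise the degree above $d_0$.

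Finally I would conclude by contradiction: if $\Avoid$ for $\XOR\circ\AND_{O(1)}$ circuits were in $\SearchNP$, it would be solved by some nondeterministic polynomial-time search algorithm $\calA$ on every such circuit, in particular on $h\circ G$ for each $h\in\calH$; but \autoref{thm: intro demi-bits plus hash} applied to this $G$, $\calH$, and $\calA$ (whose hypotheses $N>10m$ and $m>n$ hold by construction) hands us an $h\in\calH$ on which $\calA$ fails to solve range avoidance, a contradiction. I expect the only delicate points to be bookkeeping ones: ensuring the demi-bits generator's stretch exceeds factor $10$ without losing constant $\F_2$-degree (handled by the amplification remark above, or directly by the candidates underlying \autoref{assumption: demi-bits in degree 2}), and checking that the ``$\calA$ fails'' conclusion of \autoref{thm: intro demi-bits plus hash} is exactly the negation of the $\SearchNP$-solvability meant in the statement of the corollary.
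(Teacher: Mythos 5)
Your proof is correct and takes essentially the same route as the paper's: you compose the constant-degree demi-bits generator with an $\F_2$-affine (degree-$1$) pairwise independent hash so that the resulting $\Avoid$ instance stays in $\XOR\circ\AND_{O(1)}$, then invoke the demi-bits-plus-hash theorem (which is the leftover-hash-lemma instantiation of \autoref{thm: demi-bit to Avoid notin FP} that the paper also uses). One small remark: the GGM-style stretch amplification you mention is unnecessary here, since \autoref{assumption: demi-bits in degree 2} already provides stretch $n\mapsto n^{1+\eps}$, so taking $m:=n^{1+\eps/2}$ gives $N/m = n^{\eps/2} = \omega(1)$, which exceeds $10$ for all large $n$ without any modification to $G$.
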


\paragraph{Proof complexity generators.} Building on this result, we show that for any fixed propositional proof system $\calP$ closed under certain reductions, demi-bits generators for $\calP$ imply proof complexity generators for $\calP$. In particular, the existence of demi-bits generators secure against $\NP/_\poly$ implies the weaker version of \Krajicek's conjecture, providing strong evidence that the latter conjecture is true. 

Moreover, this result suggests a new approach for constructing proof complexity generators for concrete proof systems closed under certain reductions, such as $\Res[\oplus]$: it suffices to construct demi-bits generators secure against \emph{the same proof system}.

\begin{theorem}\label{thm: main pcg}
    Let $\calP$ be a proof system closed under ``simple parity reductions'' (\autoref{def: parity reduction}). If there exists a demi-bits generator $G: \{0, 1\}^n \to \{0, 1\}^{10n}$ secure against $\calP$, then there is a (non-uniform) proof complexity generator secure against $\calP$.
\end{theorem}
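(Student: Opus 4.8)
The plan is to run the argument behind \autoref{thm: intro demi-bits plus hash}, with ``$\calP$-proof size'' replacing ``running time of a nondeterministic machine''; the hypothesis that $\calP$ is closed under simple parity reductions is exactly what will let us pull a $\calP$-refutation about the composed generator back to a $\calP$-refutation about $G$. Fix $m:=n+1$ (any $m>n$ with $3m<10n$ works), let $\calH:=\bigl\{\,h_{M,b}\colon w\mapsto Mw\oplus b \;\bigm|\; M\in\F_2^{m\times 10n},\ b\in\F_2^{m}\,\bigr\}$ be the pairwise independent family of $\F_2$-affine maps $\{0,1\}^{10n}\to\{0,1\}^m$, and for a polynomial size bound $s=s(n)$ to be fixed later, set
\[
  T \;:=\; \bigl\{\,z\in\{0,1\}^{10n} \;\bigm|\; \text{the statement ``}z\notin\Range(G)\text{'' has a $\calP$-proof of size }\le s\,\bigr\}.
\]
The generator will be $C:=h^{\star}\circ G\colon\{0,1\}^{n}\to\{0,1\}^{m}$ for a well-chosen (non-uniform) $h^{\star}\in\calH$.

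First I would bound $\abs{T}$ via demi-bits security of $G$ against $\calP$. A size-$\le s$ $\calP$-proof of ``$z\notin\Range(G)$'' can be guessed and checked by a nondeterministic polynomial-time procedure, and --- since $\calP$ is sound --- no such proof exists when $z\in\Range(G)$; thus $z\mapsto[z\in T]$ is exactly the kind of nondeterministic refuter that rejects every output of $G$, so by the definition of demi-bits security against $\calP$ (\autoref{sec: def of demi-bits}) it cannot accept more than half of $\{0,1\}^{10n}$, giving $\abs{T^{c}}\ge 2^{10n-1}$, i.e.\ $T^{c}:=\{0,1\}^{10n}\setminus T$ has min-entropy $\ge 10n-1\gg m$. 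By the leftover hash lemma~\cite{ILL89} the quantity $\Ex_{h\sim\calH}\bigl[\Delta(h(\calU_{T^{c}}),\calU_{m})\bigr]$ is at most $\tfrac12\sqrt{2^{m}/\abs{T^{c}}}<2^{-m-1}$ (using $3m<10n$), where $\calU_{S}$ denotes the uniform distribution on $S$. By Markov's inequality at least half of $h\in\calH$ satisfy $\Delta(h(\calU_{T^{c}}),\calU_{m})<2^{-m}$, and for any such $h^{\star}$ every $y\in\{0,1\}^{m}$ lies in the support of $h^{\star}(\calU_{T^{c}})$; equivalently, $h^{\star}$ maps $T^{c}$ \emph{onto} $\{0,1\}^{m}$.

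Next I would show $C:=h^{\star}\circ G$ (a non-uniform circuit, with $m>n$) is a proof complexity generator against $\calP$. Suppose not: then some $y\notin\Range(C)$ has a $\calP$-proof of size $\le s'$ of ``$y\notin\Range(C)$'', where $s'$ is a polynomial smaller than $s$ by the overhead below. Pick $z\in T^{c}$ with $h^{\star}(z)=y$, which exists by the surjectivity just established. The natural CNF encoding of ``$\exists x\colon C(x)=y$'' is obtained from that of ``$\exists x\colon G(x)=z$'' by adjoining the $\F_2$-affine constraints defining $h^{\star}$ on $G$'s output bits together with the constraints forcing $h^{\star}$'s output to equal $y$; conversely, restricting the output variables of $G$ to the constants $z$ turns the former back into the latter, since the adjoined constraints then assert $h^{\star}(z)=y$, which holds. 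This restriction is a simple parity reduction in the sense of \autoref{def: parity reduction}, so by closure of $\calP$ a size-$\le s'$ proof of ``$y\notin\Range(C)$'' yields a size-$\le s$ proof of ``$z\notin\Range(G)$'', i.e.\ $z\in T$ --- contradicting $z\in T^{c}$. Hence $C$ is a proof complexity generator against $\calP$; it is non-uniform because $h^{\star}$ was produced by an averaging argument.

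I expect the main obstacle to be this last reduction: one must pin down the canonical CNF encodings of ``$y\notin\Range(C)$'' and ``$z\notin\Range(G)$'' and verify that restricting $G$'s output variables to $z$ --- together with clearing the now-determined affine $h^{\star}$-layer and its output constraints --- is genuinely an instance of the reductions under which $\calP$ is assumed closed (\autoref{def: parity reduction}), at the cost of only a $\poly(n)$ increase in proof length (the description size of $h^{\star}$), which fixes the choice $s:=s'+\poly(n)$. The rest is routine: the extractor/counting step producing $h^{\star}$, the standard equivalence between the two search problems, and the usual asymptotic bookkeeping --- because demi-bits security against $\calP$ holds against every polynomial size bound, a diagonalization over polynomials turns the per-polynomial lower bound into a single super-polynomial one, as required of a proof complexity generator.
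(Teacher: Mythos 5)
Your overall strategy is correct and takes a genuinely different route from the paper's proof. The paper's proof of \autoref{thm:pcg} argues directly: assume every $r$ has some ``broken'' string $z(r)$, define an adversary $\calA'(r,z)$ that tests whether $\tau_z(C_r)$ has a short $\calP$-proof, observe $\Pr_{r,z}[\calA'(r,z)=1]\geq 2^{-m}$, and use the extractor to conclude that a constant fraction of $y\in\{0,1\}^N$ satisfy ``some $\Ext(y,r)$ is broken,'' which via \autoref{claim: parity reduction} breaks the demi-bits generator. You instead \emph{first} bound the set $T$ of easy-to-refute outputs of $G$ using demi-bits security, and \emph{then} apply the leftover hash lemma to find an $h^\star$ for which $h^\star(T^c)=\{0,1\}^m$. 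The two are essentially dual orderings of the same counting argument, but your decomposition is arguably crisper (the surjectivity claim is a clean lemma), and it exposes more explicitly where the stretch $N>3m$ is used. Both require roughly the same ingredients.

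There is one real imprecision you should fix. You describe the key step as: ``restricting the output variables of $G$ to the constants $z$ turns $\tau_y(C)$ back into $\tau_z(G)$, and this restriction is a simple parity reduction.'' That is not what happens, and a restriction is not a simple parity reduction in the sense of \autoref{def: parity reduction}: a simple parity reduction is a $\GF(2)$-linear map $\redu$ sending an assignment of the \emph{source} formula's variables to an assignment of the \emph{target} formula's variables. Setting $G$'s output-gate variables in $\tau_y(C)$ to the constants $z$ produces a formula with \emph{fewer} variables than $\tau_z(G)$ (the output-gate variables and the $h^\star$-layer variables have been eliminated), which is \emph{not} $\tau_z(G)$, and a gate constraint like $v^G_i=v^G_{g_l}\circ v^G_{g_r}$ becomes $z_i=v^G_{g_l}\circ v^G_{g_r}$, which is neither an axiom of $\tau_z(G)$ nor an XOR of width-$1$ axioms. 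The correct reduction, proved in \autoref{claim: parity reduction}, goes the other way: $\redu$ maps $\tau_z(G)$'s assignment $(s,\hist_G)$ into $\tau_y(C)$'s assignment $(s,\hist_{C})$, where the $G$-output gates map to themselves as free variables (they are \emph{not} set to constants), the $h^\star$-layer gates are defined as XORs of the $\hist_G$ variables, and the width-$1$ output axioms $v'_i=y_i$ of $\tau_y(C)$ pull back to XORs of the width-$1$ axioms $v^G_j=z_j$ of $\tau_z(G)$ — which is precisely the third bullet of \autoref{def: parity reduction}. You correctly state the direction of the derivation (a $\calP$-proof of $\tau_y(C)$ yields one of $\tau_z(G)$) and flag this as the main obstacle, so your plan is sound once the reduction is replaced by the one in \autoref{claim: parity reduction}; the described ``restriction'' mechanism, however, would not compile.
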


\paragraph{Unprovability of $\dwPHP(\PV)$ in $\PV$ from demi-bits.} A central goal in bounded arithmetic is to delineate the logical power required to formalize reasoning about computational complexity. Two well-studied theories in this context are Cook’s theory $\PV_1$~\cite{Cook75}, which corresponds to reasoning in deterministic polynomial time, and Jeřábek’s theory $\APC_1$~\cite{Jerabek04, Jerabek07}\footnote{Note that the terminology $\APC_1$ was first used in~\cite{BussKT14}.}, which extends $\PV_1$ by adding the \emph{Dual Weak Pigeonhole Principle} for polynomial-time functions ($\dwPHP(\PV)$), and captures aspects of randomized polynomial-time reasoning.

Despite decades of interest, it has remained open whether $\APC_1$ and $\PV_1$ are actually distinct theories, i.e., whether $\dwPHP(\PV)$ is unprovable in $\PV_1$. Recently, Ilango, Li, and Williams~\cite{DBLP:conf/stoc/IlangoLW23} provided the first evidence separating the two: they showed that $\dwPHP(\PV)$ is unprovable in $\PV_1$ under the assumptions that indistinguishability obfuscation ($\iO$) with JLS-security exists and that $\coNP$ is not infinitely often in $\AM$. We remark that the same separation was also shown by \Krajicek~\cite{Krajicek21}, albeit under an assumption that is regarded as ``unlikely'' ($\P$ admits fixed-polynomial size circuits). In this work, we establish the same separation assuming the existence of demi-bits generators against $\AM/_{O(1)}$.

\begin{restatable}{theorem}{ThmPVvsAPC}\label{thm: main separation}
    Assuming there exists a demi-bits generator $G: \{0,1\}^n \to \{0,1\}^{\omega(n)}$ secure against $\AM/_{O(1)}$, the Dual Weak Pigeonhole Principle for polynomial-time functions $(\dwPHP(\PV))$ is not provable in $\PV$. (In particular, $\APC_1$ is a strict extension of $\PV_1$.)
\end{restatable}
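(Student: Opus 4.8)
The plan is to combine the Krajíček–Pudlák–Takeuti (KPT) witnessing theorem with an $\AM/_{O(1)}$ analogue of the hardness of $\Avoid$ established by \autoref{thm: intro demi-bits plus hash}. Suppose for contradiction that $\PV \vdash \dwPHP(\PV)$. Instantiating the dual weak pigeonhole principle with the PV function symbol for the demi-bits generator $G\colon\{0,1\}^n\to\{0,1\}^{m(n)}$, where $m(n)=\omega(n)$ (we may assume $G$ is uniformly polynomial-time computable; any mild non-uniformity of $G$ can be absorbed into the $O(1)$ advice allotted to the adversary), we obtain $\PV\vdash \forall n\,\exists y\in\{0,1\}^{m(n)}\,\forall x\in\{0,1\}^n\ G(x)\neq y$. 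Since this is a $\forall\Sigma^b_2$ sentence, the KPT theorem produces a constant $k$ and polynomial-time ``student'' functions $\Sim_1,\dots,\Sim_k$ such that for all $n$ and all $z_1,\dots,z_k\in\{0,1\}^n$,
\[
G(z_1)\neq\Sim_1(1^n)\ \lor\ G(z_2)\neq\Sim_2(1^n,z_1)\ \lor\ \cdots\ \lor\ G(z_k)\neq\Sim_k(1^n,z_1,\dots,z_{k-1}),
\]
where $\Sim_i$ outputs a string in $\{0,1\}^{m(n)}$. Read as a game: $y_i:=\Sim_i(1^n,z_1,\dots,z_{i-1})$ is the student's $i$-th candidate element outside $\Range(G)$, a teacher reply $z_i$ must be a genuine preimage ($G(z_i)=y_i$), and the disjunction guarantees that within $k$ rounds the student names a $y_i$ that truly lies outside $\Range(G)$.

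I would then compile this constant-round protocol into an $\AM/_{O(1)}$ distinguisher for $G$. Arthur runs the deterministic student functions $\Sim_i$; each teacher move is supplied by Merlin as a preimage $z_i$ together with its (polynomial-time checkable) certificate $G(z_i)=y_i$; and a greedy walk through the game shows that for every $n$ there is a consistent transcript $z_1,\dots,z_{j-1}$ (with $j\le k$) after which $y_j$ is genuinely outside $\Range(G)$, since otherwise it would extend to a full transcript falsifying the KPT disjunction. The constant number of rounds collapses to $\AM$ by the standard round-reduction for Arthur–Merlin protocols, and $O(1)$ advice bits record $k$ together with the winning index $j$; thus, on the right nondeterministic guesses, this machine solves $\Avoid$ on (a hash-composition of) $G$. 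Running this through the extractor-based reduction behind \autoref{thm: intro demi-bits plus hash}---composing $G$ with a pairwise-independent hash $h$, now carried out against an $\AM/_{O(1)}$ solver rather than a nondeterministic one---yields an $\AM/_{O(1)}$ adversary distinguishing $G(\calU_n)$ from uniform with advantage $\ge n^{-O(1)}$, contradicting the assumed demi-bit security of $G$ against $\AM/_{O(1)}$. Hence $\PV\nvdash\dwPHP(\PV)$; as $\APC_1=\PV_1+\dwPHP(\PV)$, this gives $\APC_1\supsetneq\PV_1$.

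The main obstacle is this compilation step. First, the distinguishing advantage must stay polynomially bounded below after both the round collapse and the hashing, which is exactly why the theorem demands super-linear stretch $m(n)=\omega(n)$---as opposed to the linear stretch that suffices for \autoref{thm: main demi-bit to Avoid notin FP}---and the hash parameters (the $N>10m$, $m>n$ regime of \autoref{thm: intro demi-bits plus hash}) must be chosen accordingly. Second, the advice must be held to $O(1)$ bits: only the constants $k$ and $j$, never a description of the hash, which has to be relocated into Merlin's message or Arthur's randomness---this is precisely why the hypothesis is security against $\AM/_{O(1)}$ rather than plain $\AM$. A further subtlety is that a ``teacher refutes membership'' move carries no $\NP$ certificate on its own; the greedy-transcript analysis is what lets us bypass this, at the price of the $O(1)$ advice pinning down the winning round.
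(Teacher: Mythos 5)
Your proposal departs from the paper's argument at a point that I don't think can be repaired without essentially redoing the proof. The first issue is that you instantiate $\dwPHP$ with the specific function $G$. Applying KPT to that instance gives a constant-round Student whose only input is $1^n$ together with the Teacher's previous responses; this Student can only be run on $G$ itself. But your next step, "running this through the extractor-based reduction behind \autoref{thm: intro demi-bits plus hash}---composing $G$ with a pairwise-independent hash $h$," needs a solver that can be handed $h\circ G$ (equivalently $C_r = \Ext(G(\cdot),r)$) as its $\Avoid$ instance. There is no interface to feed $C_r$ into the Student you obtained, and $h$ cannot be pushed into the PV function symbol nor absorbed into $O(1)$ advice (a pairwise-independent hash on $\poly(n)$-bit strings has a $\poly(n)$-bit description). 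The paper's proof of \autoref{thm: main separation} is via \autoref{thm: main no-avoid^O(1)} and crucially applies KPT to $\dwPHP_\ell(\Eval)$---the dual weak pigeonhole principle for the universal circuit-evaluation function---which yields a Student $A(i, C, z_1,\dots,z_{i-1})$ that takes an arbitrary circuit $C$ as input and can therefore be run on the instances $C_r$.

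The second gap is the ``compilation'' step. You claim that a greedy walk through the transcripts of the Student--Teacher game, plus $O(1)$ advice pinning the winning round, gives an $\AM/_{O(1)}$ adversary against $G$. But a demi-bits adversary must accept a constant fraction of all $m$-bit strings while rejecting every string in $\Range(G)$, and the fact that some Teacher reply forces the Student to name one non-output gives no such coverage: there are at most $2^{nk}$ Teacher transcripts, exponentially fewer than the $\ge 2^m - 2^n$ non-outputs of $G$. The paper's argument instead proves \autoref{claim:induction} by induction on the round index $j$: it maintains strings $(s_1,\dots,s_j)$ that form a \emph{valid trace} for $A$ on $C_r$ with probability $\ge 2^{-2jm}$ over $r$, and the inductive step runs the Goldwasser--Sipser set lower bound protocol and uses its \emph{failure of soundness} (forced by demi-bit security of $G$ against $\AM/_{O(\log k)}$) to extract the next string $s_j$ as a $G$-preimage. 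Your proposal has no analogue of this probabilistic induction, and it is precisely this step that justifies both the advice bound (encoding $j$ and the threshold used in the GS protocol) and the large-stretch requirement (the paper sets $N \approx 100km$ so that the extractor error $2^{-10km}$ stays below the probability $2^{-2km}$ being tracked through $k$ rounds, not merely to keep a $n^{-O(1)}$ advantage). Your ``subtlety'' paragraph gestures at these obstacles but does not resolve them.
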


The only property of $\PV_1$ used in our argument is the KPT witnessing theorem~\cite{KRAJICEK1991143}, which states that if $\PV_1$ proves the dual weak pigeonhole principle for polynomial-time functions, then there exists a deterministic polynomial-time Student that wins the Student-Teacher game for solving $\Avoid$ in $O(1)$ rounds (see~\autoref{sec: bounded arithmetic prelim}). Our separation result in \autoref{thm: main separation} follows by showing that no such Student algorithm exists.%
Moreover, for any parameter $k = k(n)$, assuming the existence of demi-bits generators secure against $\AM/_{O(\log k)}$, we further rule out deterministic polynomial-time Students that wins the Student-Teacher game within $k$ rounds.%

\begin{theorem}\label{thm: main no-avoid^O(1)}
    Let $m = m(n) > n$ and $k = k(n)$ be parameters. If there exists a demi-bits generator $G: \{0,1\}^n \to \{0,1\}^{100km}$ secure against $\AM/_{O(\log k)}$, then there is no polynomial-time deterministic algorithm for $\Avoid$ on circuits with $n$ inputs and $m$ outputs using $k$ circuit-inversion oracle queries.
\end{theorem}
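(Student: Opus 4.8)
The plan is to argue by contradiction through a quantitative KPT-witnessing-style reduction. Suppose that the hypothesized demi-bit generator $G\colon \bs{n} \to \bs{100km}$ secure against $\AM/_{O(\log k)}$ exists, and \emph{also} that there is a deterministic polynomial-time algorithm $S$ solving $\Avoid$ on $n$-input, $m$-output circuits using at most $k$ circuit-inversion queries --- equivalently, a Student that wins the $k$-round Student--Teacher game for $\Avoid$. From $S$ I will build an $\AM/_{O(\log k)}$ adversary that refutes the demi-bit security of $G$, a contradiction.

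The avoidance instance handed to $S$ is built exactly as in \autoref{thm: demi-bit to Avoid notin FP}: regard the $100km$-bit output of $G$ as $100k$ blocks of $m$ bits and apply a strong seeded extractor $h$ (instantiated, via the leftover hash lemma, by a pairwise-independent linear hash) collapsing it to $\bs{m}$, obtaining a circuit $C = h\circ G\colon \bs{n}\to\bs{m}$ with $m > n$; the wasteful stretch $100km$ (rather than the $10m$ of \autoref{thm: intro demi-bits plus hash}) is present precisely so that a \emph{fresh} slice of entropy can be allotted to each of the $k$ rounds and to the extractor seed. On input $z \in \bs{100km}$ the adversary $D$ simulates the Student--Teacher game on $C$, playing Teacher itself: it computes $y_1 = S(C)$, and whenever $S$ proposes some $y_i$, the Teacher must either declare $y_i\notin\Range(C)$ or exhibit $u_i$ with $h(G(u_i)) = y_i$. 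The latter is an $\NP$ statement, so Merlin supplies $u_i$ and Arthur checks $h(G(u_i)) = y_i$ --- a bogus ``preimage'' is rejected by this check, and since $S$ wins against \emph{every} Teacher strategy, Merlin is forced to get stuck at some round $j\le k$, at which point $y_j$ is a genuine element of $\Avoid(C)$. Arthur's coins sample the extractor seed, and the $O(\log k)$ advice bits select the single step of a $(k+1)$-step hybrid over rounds where the distinguishing gap concentrates. Once that round is isolated, $D$ tests whether $h(z)$ lies inside the $\le 2^n$-size set $h(\Range(G))$ versus outside it (more precisely, relative to a small transcript-derived set witnessed by $S$): because $h$ is a strong extractor, $h(z)$ is $2^{-\Omega(m)}$-close to uniform when $z\sim\calU_{100km}$, hence lies outside $h(\Range(G))$ with probability $\ge 1 - 2^{n-m}$, whereas it lies \emph{inside} $h(\Range(G))$ whenever $z = G(x)$ --- a noticeable advantage in exactly the direction forbidden for demi-bits.

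I expect two steps to carry the weight. First, making the Teacher faithfully simulatable: Merlin can always be forced to produce genuine preimages, but $D$ must still detect the round at which \emph{no} preimage exists without itself deciding a $\coNP$ question, which is why the assumption is against $\AM$ with $O(\log k)$ advice rather than merely $\NP/_\poly$; a Goldwasser--Sipser-style comparison of the sizes of $h(\Range(G))$ and the transcript sets is the natural device. Second --- and this is the genuine obstacle --- extracting a $1/\poly(n)$ advantage despite the fact that a Student's answer may be an arbitrarily tiny target, so that the naive ``does $h(z)$ equal the Student's output'' test succeeds only with exponentially small probability on uniform $z$. Circumventing this requires leveraging the $k$ adaptive rounds to force a winning Student to commit to enough structure outside $\Range(C)$ that the complement of $h(\Range(G))$ --- of density $\ge 1 - 2^{n-m}$, i.e., noticeable --- becomes certifiable in the two cases up to a $1/\poly$ multiplicative margin, and then pushing this through the hybrid over rounds while controlling the accumulated extractor error and the $O(\log k)$-bit advice loss. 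That balancing act is what ultimately dictates both the relationship $N = 100km$ and the advice length $O(\log k)$.
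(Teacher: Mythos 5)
You correctly anticipate several ingredients of the paper's argument---the Goldwasser--Sipser set-size protocol, the $O(\log k)$ advice used to index a round, the extractor seed $r$ as Arthur's coins, the role of the $100km$ stretch in keeping the extractor error $2^{-10km}$ negligible relative to the per-round probability loss, and the overall shape of a contradiction with demi-bit security. But the adversary you actually describe does not work, and the reason is precisely the obstacle you yourself flag at the end. Your adversary $D$ on input $z$ tries to run the Student--Teacher game ``to completion,'' with Merlin supplying preimages $u_i$, and then test whether $h(z)$ falls outside $h(\Range(G))$. Both halves fail. First, when Merlin claims to be stuck at round $j$ (i.e., that $y_j\notin\Range(C_r)$), Arthur has no way to verify this: ``$y_j$ has no preimage'' is $\coNP$, and a cheating Merlin can simply feign stuck-ness whenever convenient, so the transcript $D$ extracts carries no soundness guarantee. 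Second, the final test ``is $h(z)\notin h(\Range(G))$'' is again a $\coNP$ question with respect to a set of size up to $2^n$; a Goldwasser--Sipser comparison of set sizes does not by itself let Arthur certify \emph{non}-membership of a specific point, and your parenthetical appeal to ``a small transcript-derived set witnessed by $S$'' is not concrete enough to close this.

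The paper's proof (\autoref{thm:no-avoid^O(1)}) sidesteps both problems with a different mechanism that you do not quite reach: an \emph{induction on the number of rounds} in which each step is an $\AM/_{O(\log k)}$ adversary whose \emph{lack of soundness} is exploited constructively. Concretely, one maintains a ``valid trace'' $(s_1,\dots,s_{j-1})\in(\{0,1\}^n)^{j-1}$ such that $\Pr_r[(s_1,\dots,s_{j-1})\text{ is a valid trace for }A\text{ on }C_r]\ge 2^{-2(j-1)m}$. The $j$-th adversary, on input $y$, has Merlin send the trace and then runs Goldwasser--Sipser to estimate $\Pr_r[\text{trace valid for }C_r\ \wedge\ \Ext(y,r)=A(j,C_r,s_1,\dots,s_{j-1})]$; it accepts when this probability clears a threshold around $2^{-(2j-1)m}$. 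Completeness for $\ge 1/2$ of random $y$ follows from the strong-extractor property (this is where the $2^{-10km}$ error has to beat the $2^{-(2j-1)m}$ threshold). Since $G$ is demi-hard against $\AM/_{O(\log k)}$, this adversary \emph{cannot} reject all of $\Range(G)$; so it accepts some $y=G(s_j)$, and the soundness of the Goldwasser--Sipser sub-protocol then shows that $(s_1,\dots,s_j)$ is a valid trace with probability $\ge 2^{-2jm}$ over $r$. At $j=k$ this probability is still positive, so some $r$ admits a Teacher strategy defeating $A$ on $C_r$. The crucial ideas absent from your proposal are (a) the adversary never tries to decide non-membership or to find the round where $S$ wins; it only tests a one-round matching event $\Ext(y,r)=A(j,\cdot)$, which Merlin can certify affirmatively; and (b) the round-indexing advice feeds an \emph{induction}, not a hybrid argument, and it is the \emph{failure of soundness} of each round's protocol that manufactures the next trace element $s_j$.
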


\paragraph{Pseudo-surjective generators.} Assuming demi-bits generators against $\NP/_{\poly}$, we show that our proof complexity generators are even pseudo-surjective against every proof system. (The precise definition of \emph{$k$-round} pseudo-surjectivity is presented in \autoref{def:pseudo-surjectivity}.)

\begin{theorem}\label{thm: main pseudo-pcg}
    Let $m = m(n) > n$ and $k = k(n)$ be parameters. If there exists a demi-bits generator $G:\{0,1\}^n\to\{0,1\}^{100km}$ secure against $\NP/_{\poly}$, then for every non-uniform propositional proof system $\calP$, there is a non-uniform proof complexity generator $G_{\calP, k}: \{0, 1\}^n \to \{0, 1\}^m$ that is $k$-round pseudo-surjective against $\calP$.
\end{theorem}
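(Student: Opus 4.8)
\emph{Proof plan.} The plan is to prove \autoref{thm: main pseudo-pcg} by iterating, over the $k$ rounds of the relevant Student--Teacher game and with $\calP$-proofs in place of the $\NP$-certificates, the extractor-based argument behind \autoref{thm: main demi-bit to Avoid notin FP} and \autoref{thm: intro demi-bits plus hash}. The role that $\AM/_{O(\log k)}$ plays in \autoref{thm: main no-avoid^O(1)} is taken over here by polynomial-time $\calP$-proof checking, which is why the hypothesis upgrades to demi-security against $\NP/_{\poly}$. The generator $G_{\calP,k}:\{0,1\}^n\to\{0,1\}^m$ is obtained by composing $G:\{0,1\}^n\to\{0,1\}^{100km}$ with a strong seeded extractor whose seed is supplied non-uniformly (as the ``there exists $h\in\calH$'' in \autoref{thm: intro demi-bits plus hash}), with the parameters --- stretch $100km$, extractor seed length, extractor error below $2^{-m}$ --- balanced so that $G$'s output decomposes into $k$ roughly equal blocks, one usable per round, with extractor slack to spare; this block-per-round requirement is exactly why the factor $k$ enters the stretch.

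The core of the argument is a reduction. By \autoref{def:pseudo-surjectivity}, if $G_{\calP,k}$ is \emph{not} $k$-round pseudo-surjective against $\calP$, then there is a polynomial-size ``winning Student'' --- a family of circuits together with polynomial-size $\calP$-proofs --- that, against \emph{every} Teacher which answers the Student's successive guesses $y_1,y_2,\dots$ with $\Range(G_{\calP,k})$-preimages, within $k$ rounds outputs some $y_i$ accompanied by a valid $\calP$-proof of ``$y_i\notin\Range(G_{\calP,k})$'', which by soundness of $\calP$ is \emph{semantically correct}. From such a Student we build a nondeterministic polynomial-size adversary $N$ against $G$: on input $z\in\{0,1\}^{100km}$, $N$ nondeterministically simulates one play of the $k$-round attack in which $N$ itself plays Teacher, fabricating the round-$i$ preimage response out of block $i$ of $z$, and --- in the round where the Student declares victory --- checks the accompanying $\calP$-proof in deterministic polynomial time (the only use of nondeterminism, so $N\in\NP/_{\poly}$; a non-uniform $\calP$ and the Student's advice are absorbed). $N$ accepts iff the simulated game ends with a certified non-range point equal to the corresponding extractor image of $z$. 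When $z=G(s)$, every block of $z$ is a genuine $G$-output, so that extractor image lies in $\Range(G_{\calP,k})$ while the certified point does not (soundness of $\calP$): $N$ rejects, hence $\Pr[N(G(\calU_n))=1]=0$. When $z\sim\calU_{100km}$, the min-entropy argument of \autoref{thm: main demi-bit to Avoid notin FP}, applied $k$-fold over the rounds, shows that on all but a $2^{-\Omega(n)}$ fraction of $z$ the $z$-driven Teacher is ``generic enough'' in every round that the winning Student succeeds and produces a certified non-range point matching the relevant extractor image of $z$; hence $\Pr[N(\calU_{100km})=1]\ge 1-2^{-\Omega(n)}$. So $N$ accepts uniform inputs far more often than $G$-outputs, contradicting demi-security against $\NP/_{\poly}$ and proving the theorem. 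It is cleanest to organize this as an induction on $k$: the base case $k=1$ is the argument behind \autoref{thm: main demi-bit to Avoid notin FP} phrased with a $\calP$-proof in place of the $\NP$-certificate, and the inductive step peels off the first round --- either the Student's round-1 move already certifies a non-range point, putting us in the base case, or the Teacher's round-1 reply reduces the remaining $(k{-}1)$-round attack to an attack on the generator built from the remaining $k{-}1$ blocks of $G$.

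The step I expect to be the main obstacle is the adaptive $k$-round Teacher simulation underlying the last paragraph: because the Student's round-$i$ move depends on the Teacher's earlier replies, $N$ cannot hardwire a transcript, and one must show that driving all $k$ Teacher replies from \emph{disjoint} blocks of a \emph{single} uniform sample still yields, with probability $1-\negl$, a Teacher against which the winning Student provably succeeds. This is precisely what forces the blocks to be separate (hence the $\Theta(k)$ blow-up in the stretch) and the extractor to be \emph{strong} (so that near-uniformity of block $i$'s image survives conditioning on the earlier rounds and on the seed); quantitatively it means threading the ``a non-negligible bad set would have near-full min-entropy, so the extractor would still land in the attack's nonempty target set --- contradiction'' step of \autoref{thm: main demi-bit to Avoid notin FP} through the induction while keeping the accumulated extractor error below $2^{-m}$. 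Two routine-but-necessary checks complete the proof: that the propositional encoding of ``$y\notin\Range(G_{\calP,k})$'' (with the extractor and its fixed seed written out explicitly) faithfully matches the semantic statement, so that soundness of $\calP$ really does yield genuine non-membership; and that $G_{\calP,k}$, and the adversary $N$ with its advice, are of polynomial size --- the latter using precisely that ``not $k$-round pseudo-surjective'' supplies \emph{polynomial-size} $\calP$-proofs.
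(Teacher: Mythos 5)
Your plan has a genuine gap at its core. The ``block per round'' idea --- splitting the candidate $G$-output $z\in\{0,1\}^{100km}$ into $k$ blocks and using block $i$ to ``fabricate'' the Teacher's round-$i$ reply --- does not survive inspection on two counts. First, a Teacher reply in round $i$ must be a \emph{preimage} $s_i\in\{0,1\}^n$ of the Student's query under the constructed generator $C_r=\Ext(G(\cdot),r)$; a block of $z$ lives in $\{0,1\}^{\Theta(m)}$ and is not such a preimage, and there is no decoding step that turns it into one. Second, and more critically, the one-sided error of your adversary $N$ rests on the claim that ``when $z=G(s)$, every block of $z$ is a genuine $G$-output'' --- but $z$ is one output of length $100km$, and its $k$ sub-blocks are not outputs of $G$ in any sense. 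Without that claim, nothing forces $N$ to reject all of $\Range(G)$, so you do not get a demi-bits distinguisher. Relatedly, the factor $k$ in the stretch is not there to supply $k$ disjoint blocks; it is needed so that the extractor error $\eps=2^{-10km}$ is small enough for a $k$-step union/telescoping bound, which is a different role than the one you assigned it.

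The paper's proof (\autoref{thm:pseudo-pcg-fix}) takes exactly the route you dismissed: the adversary \emph{does} hardwire a transcript prefix, and that is precisely why the hypothesis is security against $\NP/_\poly$ --- the advice is a good prefix $(s_1^*,\dots,s_j^*)$ together with an index $j$, not (as you suggested) the overhead of checking $\calP$-proofs. Technically, one introduces a potential $\Phi_j$ ($0\le j\le k$), the maximum over prefixes $(s_1,\dots,s_j)$ of the probability over seeds $r$ that some Student with a short $\calP$-proof of winning has exactly that prefix as its trace on $C_r$. One shows $\Phi_0=1$ (by the ``for every $r$'' assumption), $\Phi_k=0$ (by soundness of $\calP$), and then a pigeonhole step locates a round $j$ where $\Phi_j\cdot 2^{-m}-\eps > 2\Phi_{j+1}$; fixing the optimal prefix for $\Phi_j$, the adversary runs Goldwasser--Sipser to estimate whether $\Ext(y,r)$ can serve as the round-$(j+1)$ query in a winning trace extending that prefix, and the gap at round $j$ gives the demi-bits break (after Adleman to turn the $\AM/_\poly$ adversary into $\NP/_\poly$). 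Your induction-on-$k$ sketch, with its block-driven Teacher, does not recover this potential-function structure, and I do not see a way to repair it short of reproducing the paper's argument.
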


\Krajicek~\cite{Krajicek04} proved that, under appropriate parameter settings, the existence of pseudo-surjective proof complexity generators is equivalent to the pseudo-surjectivity of the truth table generator. As a corollary, the existence of pseudo-surjective generators against a proof system $\calP$ implies that \emph{every} circuit lower bound is hard to prove in $\calP$. 

However, the parameters in our \autoref{thm: main pseudo-pcg} \emph{fall short} of applying \Krajicek's result and therefore do \emph{not} imply the pseudo-surjectivity of the truth table generator. Specifically, to apply \Krajicek's results, we need a proof complexity generator that is computable by circuits of size $s$ and is $k$-round pseudo-surjective for some $k \gg s$; see the proof of~\cite[Theorem 4.2]{Krajicek04} for detailed discussions. In contrast, \autoref{thm: main pseudo-pcg} guarantees a generator computable by circuits of size $\poly(n, k)$ that is $k$-round pseudo-surjective, which is in the regime where $k<s$ and thus outside the reach of \Krajicek's equivalence.

This limitation is inherent to the generality of our result: we construct generators secure against \emph{all} proof systems, whereas under the assumption $\E\not\subseteq \SIZE[2^{o(n)}]$, there exists a proof system that \emph{can} prove circuit lower bounds (e.g.,~by simply hardwiring an axiom that certain $\E$-complete language has exponential circuit complexity). Thus, under this standard hardness assumption, it is provably impossible to extend our results to the regime where $k \gg s$ and thereby obtain pseudo-surjectivity of the truth table generator. It remains an intriguing open question whether our approach can be refined to construct proof complexity generators of size $s$ that are $k$-round pseudo-surjective against specific systems such as Extended Frege, for some $k \gg s$. Such a result would imply that Extended Frege cannot prove \emph{any} circuit lower bounds.

Finally, we comment on the strength of the adversaries required in our assumptions on demi-bits generators. In the main theorem (\autoref{thm: main demi-bit to Avoid notin FP}), a $\SearchNP$ algorithm for $\Avoid$ is transformed into a nondeterministic adversary that breaks the demi-bits generator. Since $\SearchNP$ is a uniform class, it suffices to assume that the generator is secure against uniform nondeterministic adversaries. In contrast, \autoref{thm: main no-avoid^O(1)} requires the generator to be secure against $\AM/_{O(\log k(n))}$ adversaries. This is because the adversary invokes the Goldwasser--Sipser protocol~\cite{GS86} which is in $\AM$, and needs to hardwire the index of the circuit-inversion query that succeeds with good probability. In \autoref{thm: main pseudo-pcg}, we require security against $\NP/_\poly$ adversaries, as our adversary needs to hardwire a ``good'' sequence of teacher responses in the student-teacher game, thus is highly non-uniform.

\subsection{Perspective: Average-Case to Best-Case Reductions in Proof Complexity}\label{sec: perspective on best-case complexity}

Theoretical computer science has traditionally focused on the \emph{worst-case} complexity of problems: An algorithm $\calA$ solves a problem if $\calA(x)$ succeeds on every input $x$. Motivated by practical heuristics (where worst-case analysis tends to be overly pessimistic) and cryptography (where worst-case hardness is not sufficient for security), \emph{average-case} complexity has emerged as an important research direction~\cite{Impagliazzo95, BogdanovT06}. In this setting, fixing a distribution $\calD$ over inputs, an algorithm $\calA$ solves a problem if $\calA(x)$ succeeds with good probability over $x\sim \calD$. Recently, average-case complexity has received attention in proof complexity as well: For example, \cite{Pang21,DSPR23,CDSNPR23} proved proof complexity lower bounds for \textsc{Clique} and \textsc{Coloring} for \emph{random} graphs.

An important topic in average-case complexity is \emph{worst-case to average-case} reductions: reductions showing that if a problem $L$ is hard in the worst-case then a related problem $L'$ is hard on average. Worst-case to average-case reductions are known for the Permanent~\cite{CaiPS99}, Discrete Logarithm~\cite{BlumM84}, Quadratic Residuosity~\cite{GoldwasserM84}, certain lattice problems~\cite{Ajtai96}, and more recently for problems in meta-complexity~\cite{Hirahara18}. On the other hand, ``black-box'' worst-case to average-case reductions are unlikely to exist for $\NP$-complete problems~\cite{FeigenbaumF93, BogdanovT06-reduction}.

In contrast, the notion of \emph{best-case} complexity has received far less attention. Perhaps one reason is that this notion is often trivial in standard computational complexity: for every language $L$, either the all-zero function or the all-one function can decide $L$ on the ``best'' input.\footnote{One notable exception appears in recent derandomization results~\cite{CT21-FOCS} based on \emph{almost-all-input} hardness assumptions. In particular, it was shown that $\mathrm{pr}\mathbf{P} = \mathrm{pr}\mathbf{BPP}$ follows from the existence of depth-efficient multi-output functions with high best-case complexity against randomized algorithms.} However, in proof complexity, best-case hardness is a meaningful and natural notion, as illustrated by proof complexity generators, which are stretching functions $G$ such that the statement ``$y\not\in\Range(G)$'' is hard to prove even for the \emph{best} choice of $y$.

In this context, our results can be interpreted as an \emph{average-case to best-case} reduction in proof complexity. Indeed, \autoref{thm: main pcg} transforms demi-bits generators, where statements of the form ``$y\not\in\Range(G)$'' is hard to prove for an \emph{average-case} $y$, to proof complexity generators, where such statements are hard for a \emph{best-case} $y$.

We find the existence of such average-case to best-case reductions quite surprising. Our arguments crucially exploit the power of nondeterministic computation, and the phenomenon of average-case to best-case reductions seems unique to the setting of proof complexity and hardness against nondeterministic algorithms. We believe that further exploring the scope and limitations of average-case to best-case reductions is a promising direction for future research.

\def\rank{\mathrm{rank}}
We remark that there are also worst-case to best-case reductions in proof complexity. \Krajicek~\cite{krajicek2007proof} constructed a proof complexity generator whose hardness can be based on the hardness of the pigeonhole principle, thereby reducing the best-case hardness of an entire family of tautologies to that of a single tautology. Inspired by this example, Garlik, Gryaznov, Ren, and Tzameret~\cite{GGRT25} recently showed a worst-case to best-case reduction for the \emph{rank principles}. Let ``$\rank(A) > r$'' denote the collection of polynomial equations expressing that the rank of an $n\times n$ matrix $A$ is greater than $r$. If a proof system (closed under certain algebraic reductions) cannot prove ``$\rank(I_n) > r$'' where $I_n$ is the $n\times n$ identity matrix, then it also cannot prove ``$\rank(A) > r$'' for \emph{every} $n\times n$ matrix $A$.

\subsection{Concurrent Works}
In an independent and concurrent work, Ilango~\cite{Ilango25} presented a different proof of~\autoref{thm: main demi-bit to Avoid notin FP} and \autoref{thm: main pcg}. Ilango's proof implies the \emph{average-case} hardness of $\Avoid$ under a natural distribution (namely, it is hard for errorless nondeterministic heuristic algorithms to output a truth table with high $O$-oracle circuit complexity given the truth table of a random oracle $O$). Based on this result, Ilango constructed a (non-uniform) non-interactive proof system that ``looks'' zero knowledge to every proof system. However, Ilango did not show the pseudo-surjectivity of his generators. We discuss Ilango's proof in more detail in \autoref{sec: RAHUL}.

\section{Preliminaries}\label{sec: preliminaries}

\subsection{Demi-Bits Generators}\label{sec: def of demi-bits}
\begin{definition}[Demi-Bits Generators]
    Let $n,m$ be length parameters such that $n<m$. A function $G:\{0,1\}^n\to \{0,1\}^m$ is an \emph{$(s,\eps)$-secure demi-bits generator} if there is no $\NP/_\poly$ adversary $\Adv$ of size $s$ such that 
    \begin{align*}
        \Pr_{y\sim\{0,1\}^m}[\Adv(y)=1]\ge\eps \quad\text{and}\quad\Pr_{x\sim\{0,1\}^n}[\Adv(G(x))=1]=0.
    \end{align*}
\end{definition}

This paper requires demi-bits generators with large stretch and computable with small circuit complexity. In particular, we need the following assumptions:
\begin{restatable}[{Demi-Bits Generators with Polynomial Stretch}]{assumption}{PolyStretchDemiBits}\label{assumption: demi-bits with poly stretch}
    For every constant $c\ge 1$, there exists a family of demi-bits generators $\{g_n: \{0, 1\}^n \to \{0, 1\}^{n^c}\}$ secure against $\NP/_\poly$.
\end{restatable}

\begin{restatable}[{Demi-Bits Generators with $n^{1+\eps}$ Stretch in Constant Degree}]{assumption}{DegreeTwoDemiBits}\label{assumption: demi-bits in degree 2}
    There exist constants $\eps > 0$, $d\ge 2$, and a (non-uniformly computable) family of demi-bits generators $\{g_n: \{0, 1\}^n \to \{0, 1\}^{n^{1+\eps}}\}$ secure against $\NP/_\poly$, such that each output bit of $g_n$ is computable by a degree-$d$ polynomial over $\F_2$ (i.e., an $\XOR\circ\AND_d$ circuit).
\end{restatable}

Our main hardness results for $\Avoid$ will be based on \autoref{assumption: demi-bits with poly stretch}; we also need \autoref{assumption: demi-bits in degree 2} to obtain hardness results for constant-degree $\Avoid$. In \autoref{sec: candidate demi-bits}, we justify these assumptions and provide some candidate constructions.

\subsection{Arthur--Merlin Protocols}

An \emph{Arthur--Merlin} protocol~\cite{Babai85} for a language $L$ is a constant-round public-coin interactive protocol between a computationally unbounded $\Prover$ (Merlin) and a randomized polynomial-time $\Verifier$ (Arthur) that satisfies the following properties for every input $x$:
\begin{itemize}
    \item {\bf Completeness:} If $x\in L$, then there is a $\Prover$ that makes the $\Verifier$ accept w.p.~$\ge 2/3$.
    \item {\bf Soundness:} If $x\not\in L$, then no $\Prover$ can make the $\Verifier$ accept w.p.~$>1/3$.
\end{itemize}

Let $\AM$ denote the set of languages with an Arthur--Merlin protocol. The round-collapse theorem of Babai~\cite{Babai85} implies that every language in $\AM$ actually has an Arthur--Merlin protocol with two rounds: $\Verifier$ sends the first message, $\Prover$ sends a proof, and $\Verifier$ decides whether $x\in L$. Hence, $\AM$ can be seen as a randomized version of $\NP$; indeed, one can prove that $\AM = \NP$ under circuit lower bound assumptions~\cite{KlivansM02, MiltersenV05, ShaltielU05, ShaltielU06}.

\paragraph{Goldwasser--Sipser set lower bound protocol.} We need the following well-known $\AM$ protocol for proving lower bounds on the size of efficiently recognizable sets.
\begin{lemma}[\cite{GS86}, also see~{\cite[section 8.4]{AB09}}]\label{lemma:GS-protocol}
There is an Arthur--Merlin protocol such that the following holds. Suppose that both $\Prover$ and $\Verifier$ receive a nondeterministic circuit $C:\{0,1\}^n \to \{0,1\}$ and a number $s\le 2^n$. Let $S=\cbra{x\in \{0,1\}^n : C(x)=1}$.
\begin{itemize}
    \item \textbf{Specification:} The protocol is a two-round public-coin protocol, in which the $\Verifier$ first sends a random string $r$ and receives a message $m$, then deterministically decides whether to accept based on $r$ and $m$.
    \item \textbf{Completeness: }If $|S|\ge s$, then w.p.~$\ge 2/3$ over $r$, there exists a proof $m$ that makes the $\Verifier$ accept.
    \item \textbf{Soundness: }If $|S|\le s/2$, then w.p.~$\ge 2/3$ over $r$, the $\Verifier$ rejects regardless of $m$.
\end{itemize}
\end{lemma}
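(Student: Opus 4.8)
The plan is to reconstruct the classical Goldwasser--Sipser argument: first amplify the multiplicative gap between the two cases by a product construction, then run a single round of hashing and analyze it with a second-moment (Chebyshev) bound. To set up, note that the nondeterministic circuit $C$ carries an underlying deterministic circuit $\hat C$ with $x\in S\iff\exists w\ \hat C(x,w)=1$, and fix a constant $\ell$ (taking $\ell=5$ suffices). Define the product circuit $C^{(\ell)}$ on $\{0,1\}^{\ell n}$ that, on input $(x_1,\dots,x_\ell)$, accepts iff every $x_i\in S$; this is again a nondeterministic circuit (a witness is the tuple $(w_1,\dots,w_\ell)$), it is computable from $C$ in polynomial time, and its accepting set is $S^{(\ell)}=S^{\ell}$. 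The point is that $|S|\ge s$ forces $|S^{(\ell)}|\ge s^{\ell}$, whereas $|S|\le s/2$ forces $|S^{(\ell)}|\le s^{\ell}/2^{\ell}$, so the factor-$2$ gap becomes a factor-$2^{\ell}$ gap.

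Next I would fix the protocol. Choose an integer $k$ with $3s^{\ell}/2^{\ell}\le 2^{k}\le s^{\ell}/4$; such a $k$ exists because $3/2^{\ell}<1/4$ and the two endpoints of this window differ by more than a factor of $2$. (When $s$ is below an absolute constant the window is empty, but there the two cases are just ``$S\neq\emptyset$'' versus ``$S=\emptyset$'', and the protocol degenerates to $\Prover$ exhibiting a single pair $(x,w)$ with $\hat C(x,w)=1$, which I would handle separately.) Let $\calH=\cbra{h:\{0,1\}^{\ell n}\to\{0,1\}^{k}}$ be an efficiently sampleable and evaluable pairwise-independent hash family — for concreteness, the affine maps $x\mapsto Ax+b$ over $\F_2$. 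The protocol: $\Verifier$ samples $h\sim\calH$ and $y\sim\{0,1\}^{k}$ independently and uniformly and sends $r=(h,y)$; $\Prover$ replies with $m=(x_1,\dots,x_\ell,w_1,\dots,w_\ell)$; $\Verifier$ accepts iff $h(x_1,\dots,x_\ell)=y$ and $\hat C(x_i,w_i)=1$ for all $i$. This is a two-round public-coin protocol whose acceptance is a deterministic function of $(r,m)$, exactly as the specification demands.

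For the analysis, fix $y$ and let $Z_y$ count the $\bar x\in S^{(\ell)}$ with $h(\bar x)=y$; it is a sum of $|S^{(\ell)}|$ pairwise-independent $0/1$ variables each of mean $2^{-k}$, so $\Ex_h[Z_y]=|S^{(\ell)}|\cdot 2^{-k}$ and $\mathrm{Var}_h[Z_y]\le\Ex_h[Z_y]$ (the covariance terms vanish by pairwise independence). In the completeness case $|S|\ge s$ we get $\Ex_h[Z_y]\ge s^{\ell}2^{-k}\ge 4$, so Chebyshev gives $\Pr_h[Z_y=0]\le 1/\Ex_h[Z_y]\le 1/4$; averaging over $y$, and noting that whenever $Z_y\ge 1$ an honest $\Prover$ can exhibit such a $\bar x$ together with the $\ell$ witnesses, an accepting $m$ exists with probability $\ge 3/4\ge 2/3$ over $r$. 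In the soundness case $|S|\le s/2$ we get $\Ex_h[Z_y]\le (s/2)^{\ell}2^{-k}=s^{\ell}/(2^{\ell}2^{k})\le 1/3$ by the choice of $k$, so Markov gives $\Pr_h[Z_y\ge 1]\le 1/3$; since any $m$ that $\Verifier$ accepts in particular exhibits some $\bar x\in S^{(\ell)}$ with $h(\bar x)=y$, averaging over $y$ shows that with probability $\ge 2/3$ over $r$ the $\Verifier$ rejects every $m$.

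The one genuine point to get right — and the reason the statement can afford only a factor-$2$ gap yet still hit the $2/3$--$1/3$ thresholds — is that the gap amplification is not optional: a single hash cannot be pushed past completeness $1/2$ by the second-moment bound, so one needs $2^{\ell}$ at least an absolute constant for the window $[3s^{\ell}/2^{\ell},\,s^{\ell}/4]$ to be nonempty and to contain a power of $2$. Everything else — checking that affine $\F_2$-maps form a pairwise-independent family, bounding all message lengths by a polynomial, and disposing of the $s$-below-a-constant edge cases — is routine bookkeeping.
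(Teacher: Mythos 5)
Your proof is correct and is the standard Goldwasser--Sipser argument (as presented, e.g., in Arora--Barak section 8.4, which the paper cites instead of giving its own proof): amplify the factor-$2$ gap by taking an $\ell$-fold product, hash with a pairwise-independent family to $k$ bits with $2^k$ sandwiched strictly between the two amplified set sizes, and finish with Chebyshev for completeness and Markov for soundness. The bookkeeping checks out — for $\ell=5$ the window $[3s^\ell/2^\ell,\,s^\ell/4]$ has ratio $8/3>2$ so contains a power of $2$, and $2^k\le s^\ell/4\le 2^{\ell n-2}$ so $k$ is a legal output length — and you correctly note that the prover must also supply the witnesses $w_i$ since $C$ is nondeterministic.
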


\paragraph{Arthur--Merlin protocols as adversaries.}

\begin{definition}[Breaking demi-bits generators by $\AM$ adversaries]
    Let $m > n$. An $\AM$ adversary breaks a demi-bits generator $G:\{0, 1\}^n \to \{0, 1\}^m$ if both $\Prover$ and $\Verifier$ receives a common input $y\in\{0, 1\}^m$, and:
    \begin{itemize}
        \item for $\ge 1/3$ fraction of $y\in \{0, 1\}^m$, there exists a $\Prover$ that makes the $\Verifier$ accept with probability $\ge 2/3$;
        \item for all $y\in \Range(G)$, for every $\Prover$, the $\Verifier$ accepts with probability $\le 1/3$.
    \end{itemize}
\end{definition}

We also consider $\AM$ adversaries with advice:

\begin{definition}[$\AM/_{k(n)}$ adversaries]
    An $\AM/_{k(n)}$ adversary is an Arthur--Merlin protocol where the $\Verifier$ is a probabilistic polynomial-time machine that additionally receives a $k(n)$-bit advice string $a_n$ (which may depend on the input length $n$ but not on the specific input $y$). The interaction on input $y$ proceeds as follows:
    \begin{compactenum}
        \item The $\Verifier$ uses $a_n$ and randomness $r$ to generate a message to the $\Prover$;
        \item The $\Prover$ replies with a message;
        \item The $\Verifier$ accepts or rejects based on $y$, $a_n$, $r$, and the $\Prover$'s response.
    \end{compactenum}
    The acceptance probabilities are still defined over $\Verifier$'s internal randomness, with the advice string fixed to $a_n$.
\end{definition}

\begin{proposition}
    Let $G$ be a demi-bits generator.
    \begin{itemize}
        \item If there exists an $\AM$ adversary breaking $G$, then there exists an $\NP/_{\poly}$ adversary breaking $G$ \textnormal{(\cite{Adleman78})}.
        \item For every constant $k\ge 2$, if there exists a $k$-round $\AM$ adversary breaking $G$, then there exists a (standard) two-round $\AM$ adversary breaking $G$ \textnormal{(\cite{Babai85})}.
    \end{itemize}
\end{proposition}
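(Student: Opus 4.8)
The proposition bundles two ``breaking''-analogues of classical facts about $\AM$, and in both cases the plan is to lift the textbook argument — Adleman's derandomization $\AM \subseteq \NP/_\poly$ for the first bullet, Babai's round collapse (showing that $\AM$ with any constant number of rounds equals two-round $\AM$) for the second — to the ``breaking'' setting, where the only extra demand is that soundness on $\Range(G)$ be made \emph{perfect}.

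For the first bullet I would follow Adleman. First, starting from the two-round protocol (supplied by the round-collapse theorem), amplify by $O(n)$ parallel repetitions and a majority vote, so that we may assume: for at least a $1/3$ fraction of $y \in \{0,1\}^m$ some $\Prover$ makes $\Verifier$ accept with probability $\ge 1 - 2^{-2n}$, while for every $y \in \Range(G)$ no $\Prover$ makes $\Verifier$ accept with probability more than $2^{-2n}$. Next, for a coin string $r$ of $\Verifier$, let $p_y(r) = 1$ iff some $\Prover$ message is accepted on input $y$ with coins $r$. A union bound over the $\le 2^n$ strings in $\Range(G)$ shows that the event $A$ that $p_{G(x)}(r) = 0$ for all $x \in \{0,1\}^n$ has probability $\ge 1 - 2^{-n}$ over $r$, while averaging the amplified completeness over $y$ yields $\Ex_r[\Pr_y[p_y(r) = 1]] \ge \tfrac13(1 - 2^{-2n})$. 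Since $\Pr_r[\lnot A] \le 2^{-n}$, some coin string $r^\star$ satisfies $A$ and still has $\Pr_y[p_y(r^\star) = 1] \ge 1/4$. Finally, hardwire $r^\star$ (only $\poly(m)$ bits) as advice: the nondeterministic circuit that on input $y$ guesses a $\Prover$ message and checks $\Verifier$'s decision with coins $r^\star$ is a polynomial-size $\NP/_\poly$ adversary that accepts no element of $\Range(G)$ and accepts a $\ge 1/4$ fraction of all $y$, i.e.\ it breaks $G$. (The exact constant in the completeness density is immaterial; re-amplifying restores $1/3$.)

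For the second bullet I would reduce to Babai's theorem by phrasing the $k$-round $\AM$ adversary as solving the promise problem whose \Yes{} set is $\{y : \text{some } \Prover \text{ makes } \Verifier \text{ accept with probability} \ge 2/3\}$ and whose \No{} set is $\{y : \text{no } \Prover \text{ makes } \Verifier \text{ accept with probability} > 1/3\}$. By hypothesis the \Yes{} set has density $\ge 1/3$, $\Range(G)$ lies inside the \No{} set, and this promise problem is decided by a $k$-round Arthur--Merlin protocol. Babai's round-collapse theorem for constant $k$ applies verbatim to promise problems and yields a two-round Arthur--Merlin protocol accepting every \Yes{} instance with probability $\ge 2/3$ and every \No{} instance with probability $\le 1/3$; this two-round protocol is precisely a standard two-round $\AM$ adversary breaking $G$.

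I expect the one delicate point to be the parameter bookkeeping in the first bullet: the amplification has to push the soundness error below $2^{-n}$ (so that it survives the union bound over all of $\Range(G)$, giving \emph{perfect} range-soundness once the coins are fixed) while keeping the completeness error $o(1)$ (so that a positive-constant fraction of inputs is still accepted after the coins are fixed). Both are delivered by $O(n)$ rounds of standard $\AM$ amplification, so there is no genuine obstacle — the remainder is a direct transcription of the classical proofs.
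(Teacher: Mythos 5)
Your proof fills in exactly the classical arguments the paper invokes by citation: Adleman-style fixing of coins after amplification for the first bullet, and Babai's round-collapse applied to the induced promise problem for the second. The only minor imprecision is the parenthetical remark that ``re-amplifying restores $1/3$'' in bullet one — since the $\ge 1/3$ density of good inputs is a property of the input distribution rather than of the protocol's acceptance probability, amplifying the protocol cannot push the fixed-coin density above $1/3(1-o(1))$, so you end up with density $\ge 1/3 - o(1)$ rather than $\ge 1/3$ exactly; but as the demi-bits definition in the paper is parameterized by $\eps$, any constant density suffices and this does not affect correctness.
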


\subsection{\texorpdfstring{$\FNP$}{FNP} v.s.~\texorpdfstring{$\SearchNP$}{SearchNP}}
In this paper, we will distinguish between the two notions $\FNP$ and $\SearchNP$.

\begin{definition}[$\SearchNP$~\cite{DBLP:conf/stoc/ChenL24}]
    Let $P$ be a search problem and $R$ be the binary relation defining $P$. We say $P$ can be solved by a nondeterministic polynomial-time algorithm if there is a nondeterministic Turing machine $M$ such that for every input $x$,
    \begin{itemize}
        \item If $x$ has a solution, then $M(x)$ has an accepting computation path, and every accepting path will output a valid solution $y$, i.e., $R(x,y)$ is true.
        \item If $x$ has no solution, then $M(x)$ has no accepting computation path.
    \end{itemize}
    The class of search problems solvable by nondeterministic polynomial-time algorithm is defined as $\SearchNP$.
\end{definition}

\begin{definition}[$\FNP$~\cite{DBLP:conf/stoc/ChenL24}]
    The class of search problems defined by a polynomial-time relation, i.e.,~$R\in \mathbf{P}$ is defined as $\FNP$.
\end{definition}

While it is clear that $\FNP\subseteq \SearchNP$, the following example suggests that this inclusion is strict.

\begin{proposition}[\cite{DBLP:conf/stoc/ChenL24}]
    If $\mathbf{P}\neq \mathbf{NP}$, then there is a total search problem in $\SearchNP\setminus \FNP$.
\end{proposition}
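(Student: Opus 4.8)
The plan is to exhibit an explicit relation $R$ that is total and polynomially bounded, lies in $\mathbf{NP}$, but does not lie in $\mathbf{P}$. Membership of $R$ in $\mathbf{NP}$ together with totality will place the associated search problem in $\SearchNP$: a nondeterministic solver only has to output \emph{some} valid solution on each instance, which it can do by guessing a solution together with an $\mathbf{NP}$-certificate of its validity and accepting only if the certificate checks out. On the other hand, $R \notin \mathbf{P}$ rules out $\FNP$, since a search problem is in $\FNP$ precisely when its defining relation is polynomial-time decidable.

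Concretely, I would fix a language $L \in \mathbf{NP} \setminus \mathbf{P}$ (which exists under the hypothesis $\mathbf{P} \neq \mathbf{NP}$), and let $P$ be the search problem whose instances are strings $x$, whose candidate solutions are single bits $y \in \{0,1\}$, and whose relation is
\[ R(x,y) \;:=\; (y = 1) \;\vee\; (y = 0 \;\wedge\; x \in L). \]
Thus the solution set of $x$ is $\{0,1\}$ if $x \in L$ and $\{1\}$ otherwise; in particular every instance is solvable, so $P$ is total. To see $P \in \SearchNP$, consider the nondeterministic machine $M$ that, on input $x$, either (i) immediately outputs $y = 1$ and accepts, or (ii) guesses a candidate $\mathbf{NP}$-witness for ``$x \in L$'', verifies it in polynomial time, and outputs $y = 0$ and accepts iff the verification succeeds. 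Branch (i) guarantees an accepting path on every input, and every accepting path outputs a valid solution: the bit $1$ is valid for all $x$, and the bit $0$ is output only after a witness for $x \in L$ has been checked. The ``no solution'' case in the definition of $\SearchNP$ never arises, as $P$ is total. Finally, $P \notin \FNP$ because the restriction of $R$ to pairs $(x, 0)$ is exactly $\{(x,0) : x \in L\}$, so any polynomial-time decision procedure for $R$ would decide $L$ in polynomial time, contradicting $L \notin \mathbf{P}$.

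I do not anticipate a genuine obstacle, but the one point that needs care is the tension underlying the separation: the relation has to be \emph{hard to verify} in order to escape $\FNP$, yet the nondeterministic solver still has to certify the correctness of its output on \emph{every} instance in order to stay in $\SearchNP$. The construction sidesteps this by hiding the $\mathbf{NP}$-hardness of $L$ in a single ``slot'' of the relation (the bit $y = 0$) while always keeping a trivially valid fallback solution (the bit $y = 1$), so that $M$ never has to certify non-membership in $L$ --- which would be a $\mathbf{coNP}$ task and hence out of reach of a nondeterministic machine under $\mathbf{P} \neq \mathbf{NP}$. One should also note, when concluding $P \notin \FNP$, that a search problem canonically determines its relation (two relations with the same solution sets on all instances are equal), so there is no competing polynomial-time relation defining the same problem.
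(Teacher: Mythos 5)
Your proof is correct. The paper does not include an argument for this proposition — it is cited from Chen and Li — so there is no in-text proof to compare against, but your construction is the natural one: you make the candidate solution $y=0$ as hard to verify as membership in $L$ (so the relation is not in $\mathbf{P}$) while keeping the trivially valid fallback $y=1$ (so a nondeterministic machine can always succeed), and you correctly note that a search problem is identified with its relation, so no ``competing'' polynomial-time relation could rescue $\FNP$ membership.

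Two minor remarks. Branch (ii) of your machine $M$ is superfluous: the single-branch machine that always outputs $y=1$ already witnesses $P\in\SearchNP$ — indeed $P$ is solvable in $\FP$ — which nicely underscores that $\FNP$ membership measures the complexity of \emph{recognizing} solutions, not of \emph{finding} them, so the two notions are genuinely orthogonal. And your closing intuition that certifying $x\notin L$ would be ``out of reach of a nondeterministic machine under $\mathbf{P}\neq\mathbf{NP}$'' is not quite right as stated, since $\mathbf{P}\neq\mathbf{NP}$ does not preclude $\mathbf{NP}=\mathbf{coNP}$; fortunately that sentence carries no logical weight in the argument, which is sound as written.
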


For more knowledge about nondeterministic algorithms, readers are referred to~\cite[Section 2.4]{DBLP:conf/stoc/ChenL24}.

\subsection{Proof Complexity}\label{sec: proof complexity}

Recall that $\TAUT$, the set of DNF tautologies, is the canonical $\coNP$-complete problem. A \emph{propositional proof system} is simply a nondeterministic algorithm for $\TAUT$. More formally:

\begin{definition}[{\cite{CookR79}}]
    An algorithm $\calP(\varphi, \pi)$ is called a \emph{propositional proof system} if it satisfies the following conditions:\begin{itemize}[align=left]
        \item {(\bf Completeness)} For every $\varphi \in \TAUT$, there exists a string $\pi\in\{0, 1\}^*$ such that $\calP(\varphi, \pi)$ accepts.
        \item {(\bf Soundness)} For every $\varphi, \pi \in \{0, 1\}^*$, if $\calP(\varphi, \pi)$ accepts, then $\varphi \in \TAUT$.
        \item {(\bf Efficiency)} $\calP(\varphi, \pi)$ runs in deterministic $\poly(|\varphi| + |\pi|)$ time.
    \end{itemize}
    We say that $\calP$ is a \emph{non-uniform} propositional proof system if $\calP$ is a polynomial-size circuit instead of a uniform algorithm (that is, $\calP$ is equipped with non-uniform advice).
\end{definition}

\begin{definition}[Proof Complexity Generators~\cite{ABRW04,Krajicek04}]
    Let $s(n)<n$ be a function for seed length. A proof complexity generator is a map $C_n:\{0,1\}^s\to \{0,1\}^n$ computed by a family of polynomial-size circuits $\{C_n\}_n$. A generator is secure against a propositional proof system $P$ if for every large enough $n$ and every $y\in\{0,1\}^n$, $P$ does not have a polynomial-size proof of the (properly encoded) statement
    \begin{align*}
        \forall x\in \{0,1\}^s,C_n(x)\neq y.
    \end{align*}
\end{definition}

It is easy to see that the existence of proof complexity generators is closely related to the hardness of range avoidance. In fact we have:

\begin{theorem}[{Informal version of \cite[Theorem 6.6]{RenSW22}}]
    The range avoidance problem with suitable stretch is in $\FNP$ if and only if there exists a propositional proof system that breaks every proof complexity generator.
\end{theorem}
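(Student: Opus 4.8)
The plan is to prove both implications by unpacking the slogan that \emph{a polynomial-time-checkable certificate for ``$y\notin\Range(G)$'' is essentially the same object as a short proof of ``$y\notin\Range(G)$'' in a suitable propositional proof system}. Throughout, ``$\Avoid$ (with suitable stretch) is in $\FNP$'' should be read in the sense of~\cite{DBLP:conf/stoc/ChenL24}: there is a polynomial-time relation $R(G,w)$ together with a polynomial length bound such that (i) every circuit $G\colon\{0,1\}^n\to\{0,1\}^m$ in the relevant stretch regime admits a short $w$ with $R(G,w)$, and (ii) $R(G,w)$ implies that one can read off from $w$ a string $y\in\{0,1\}^m\setminus\Range(G)$. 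This reinterpretation is unavoidable, since the ``natural'' defining relation ``$y\notin\Range(G)$'' is $\coNP$, not $\mathbf{P}$; ``suitable stretch'' then just means the stretch of the $\Avoid$ instances is chosen to match the seed regime $s(n)<n$ of proof complexity generators, so that the two objects coincide.

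\textbf{From $\Avoid\in\FNP$ to a proof system breaking every generator.} Fix such an $R$. Define a proof system $\calP$ that, on input $(\varphi,\pi)$, accepts iff either (a) $\pi=(0,\pi_0)$ and $\pi_0$ is a proof of $\varphi$ in a fixed sound and complete base system $\calP_0$ (e.g.\ the trivial ``list all assignments and check'' system), or (b) $\pi=(1,w)$, $\varphi$ is the canonical propositional encoding of a statement ``$y\notin\Range(G)$'' for a circuit $G$ of the relevant stretch and a string $y$ parsed syntactically from $\varphi$, and $R(G,w)$ holds with the answer extracted from $w$ equal to $y$. Clause (a) gives completeness; clause (b) preserves soundness, because $R(G,w)$ guarantees that the extracted answer — hence $y$ — really lies outside $\Range(G)$, so $\varphi$ is a tautology; and the whole check is polynomial-time since $R\in\mathbf{P}$ and the parsing/extraction are syntactic. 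Finally $\calP$ breaks every generator $\{C_n\}$: for each $n$ choose $w_n$ with $R(C_n,w_n)$ and let $y_n$ be its extracted answer; then the encoding of ``$y_n\notin\Range(C_n)$'' has a $\calP$-proof via clause (b) of size $\poly(|C_n|)=\poly(n)$, so $\{C_n\}$ is not secure against $\calP$.

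\textbf{From a proof system breaking every generator to $\Avoid\in\FNP$.} Let $\calP$ break every proof complexity generator. Given an $\Avoid$ instance $G\colon\{0,1\}^s\to\{0,1\}^n$ of the relevant stretch, regard it (up to trivial input/output padding used to populate the remaining lengths of a family) as a generator; the ``breaks'' guarantee then yields, at a length determined by $n$, a string $y$ and a $\poly(|G|)$-size $\calP$-proof $\pi$ of the encoding of ``$y\notin\Range(G)$'' (finitely many small lengths are handled by brute force). Set $R(G,(y,\pi))$ to hold iff $\pi$ is a valid $\calP$-proof, of length at most the relevant polynomial in $|G|$, of the canonical encoding of ``$y\notin\Range(G)$''. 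Then $R\in\mathbf{P}$ by the efficiency of $\calP$; $R$ is always satisfiable by the previous sentence; and $R(G,(y,\pi))$ implies, by soundness of $\calP$, that ``$y\notin\Range(G)$'' is a tautology, i.e.\ $y$ is a valid $\Avoid$ answer. Hence $\Avoid\in\FNP$.

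\textbf{Main obstacle.} The conceptual content is essentially the dictionary ``verifiable avoidance certificate $\leftrightarrow$ short proof'', so the real work is in the formalization: (1) pinning down the ``certified-solvability'' reading of $\Avoid\in\FNP$ and checking that it is exactly what clause (b) and the relation $R$ manipulate; (2) the padding step in the reverse direction — turning a single circuit into a generator family so that the ``breaks'' guarantee (which, read literally against the family-based security definition, is only an infinitely-often statement) yields a refutation of the \emph{given} $G$ with proof size $\poly(|G|)$ rather than $\poly$ of some uncontrolled large length; this is where one must either use the per-instance formulation of ``breaks'' or invoke a standard amplification/padding argument. Point (2) is the step I expect to require the most care; the rest is routine bookkeeping around the ``suitable stretch'' matching and the efficiency/soundness/completeness of the augmented proof system $\calP$.
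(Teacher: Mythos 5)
The paper does not actually prove this theorem: it is stated as an informal import from \cite[Theorem 6.6]{RenSW22}, so there is no internal proof to compare against. Your proposal correctly captures the conceptual core (the ``certificate $\leftrightarrow$ short proof'' dictionary), and your forward direction ($\Avoid\in\FNP$ $\Rightarrow$ a proof system breaking every generator) is sound: the augmented system $\calP$ you build is sound, complete, and efficient, and produces a $\poly(n)$-size proof for some $y_n$ at \emph{every} length of every poly-size family $\{C_n\}$, which certainly implies that no generator is secure against it.

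The reverse direction, however, contains a real gap, and it is exactly the one you flag. Under the paper's own definition, ``$\{C_n\}$ is secure against $\calP$'' is an almost-everywhere statement, so ``$\calP$ breaks $\{C_n\}$'' only gives you a short proof for \emph{infinitely many} $n$ of the family, not for all $n$. When you embed a single instance $G$ (output length $n_0$) into a family by padding, the lengths at which ``breaks'' kicks in may be arbitrarily far above $n_0$; a $\poly(m)$-size proof for the length-$m$ padded circuit translates into a $\poly(m)$-size proof for $G$, which is useless when $m\gg |G|$. So the proof-checking relation $R(G,(y,\pi))$ you define is not guaranteed to be total, which is exactly what $\Avoid\in\FNP$ requires. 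Taking the contrapositive does not immediately rescue this either: from ``$\Avoid\notin\FNP$'' one obtains, for each polynomial $p$, a hard instance $G_p$ with no $p(|G_p|)$-size proof, but if these hard lengths are sparse (with super-polynomial gaps), the diagonalized family built by padding the nearest hard $G_p$ to each intermediate length is not secure almost everywhere. Closing this requires either adopting an almost-everywhere formulation of ``breaks'' (and then re-checking direction~1 under that formulation) or a genuine density/amplification argument showing the hard lengths can be taken polynomially dense. You correctly identify this as the main obstacle, but the proposal as written leaves it open rather than resolving it, and the padding sketch alone does not suffice.
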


\paragraph{Pseudo-surjectivity.} In addition to the basic notion of hardness for proof complexity generators, several stronger notions have been proposed in the literature, including freeness~\cite{krajivcek2001tautologies}, iterability and pseudo-surjectivity~\cite{Krajicek04}, and $\bigvee$-hardness~\cite{krajivcek2024existence}. Pseudo-surjectivity is the strongest hardness notion among them. In this paper, we show that our proof complexity generators are pseudo-surjective in certain parameter regimes.

To motivate the definition of pseudo-surjectivity~\cite[Definition 3.1]{Krajicek04}, it is helpful to consider \emph{Student-Teacher games} for solving $\Avoid$. Let $G: \{0, 1\}^n \to \{0, 1\}^m$ be a mapping where $m > n$. A polynomial-time \emph{Student} attempts to find a string $y \in \{0, 1\}^m \setminus \Range(G)$ with the help of a \emph{Teacher} who has unbounded computational power. The game proceeds in rounds. In each round $i$, the Student proposes a candidate string $y_i \in \{0, 1\}^m$, and if $y_i \in \Range(G)$, the Teacher returns a preimage $q_i \in \{0, 1\}^n$ such that $G(q_i) = y_i$. If the Student ever proposes a string outside the range of $G$, they win the game.

A Student who attempts to solve $\Avoid$ in $k$ rounds can be represented as $k$ circuits $B_1, B_2, \dots, B_k$, where each $B_i$ uses the Teacher's responses from previous rounds to generate the next query. Specifically, $B_1$ outputs a fixed string $y_1 \in \{0, 1\}^m$, and each subsequent circuit $B_i$ ($i > 1$) takes the previous responses $q_1, q_2, \dots, q_{i-1} \in \{0, 1\}^n$ as inputs and outputs $y_i \in \{0, 1\}^m$. The game proceeds as follows:

\begin{itemize}
    \item The Student proposes $y_1 := B_1\in \{0, 1\}^m$. If $y_1 \not\in\Range(G)$, then the Student wins the game; otherwise, the Teacher returns some $q_1 \in \{0, 1\}^n$ such that $G(q_1) = y_1$.
    \item The Student then proposes $y_2 := B_2(q_1) \in \{0, 1\}^m$. If $y_2 \not \in \Range(G)$ then the Student wins the game; otherwise, the Teacher returns $q_2 \in \{0, 1\}^n$ such that $G(q_2) = y_2$.
    \item $\dots$
    \item This continues until round $k$, where the Student proposes $y_k := B_k(q_1, \dots, q_{k-1}) \in \{0, 1\}^m$. If $y_k \not\in \Range(G)$ then the Student wins the game; otherwise the Student loses the game.
\end{itemize}

To formally express whether the Student succeeds in the game, we define a formula stating that at least one of the Student's queries is outside the range of $G$. Let $B: \{0, 1\}^{n'} \to \{0, 1\}^m$ be a circuit, $z \in \{0, 1\}^{n'}$ and $x \in \{0, 1\}^n$ be disjoint variables, we define $\tau(G)_{B(z)}(x)$ to be the (properly encoded) statement that $B(z) \ne G(x)$. Then, using $\vec{q}_1, \dots, \vec{q}_{k-1} \in \{0, 1\}^n$ to represent the Teacher's responses, the Student wins if and only if
\begin{equation}
   \label{eq: the student wins the game} \bigvee_{i=1}^k\tau(G)_{B_i(q_1, \dots, q_{i-1})}(q_i).
\end{equation}

Roughly speaking, a generator $G$ is \emph{pseudo-surjective} for a proof system $\calP$ if $\calP$ cannot prove any Student wins the game, no matter how the Student is constructed. In other words, a generator $G$ is pseudo-surjective for $\calP$ if, for every sequence of Student circuits $(B_1, \dots, B_k)$, the formula~\eqref{eq: the student wins the game} is hard to prove in $\calP$.

Note that pseudo-surjectivity is indeed a stronger notion than ordinary hardness for proof complexity generators. Indeed, for every $y \in \{0, 1\}^m \setminus \Range(G)$, the trivial one-round Student with $B_1 = y$ clearly wins the game---yet pseudo-surjectivity implies that this fact is hard to prove in $\calP$.

We proceed to the formal definition. We also introduce the notion of \emph{$k$-round} pseudo-surjectivity, where the unprovability of \eqref{eq: the student wins the game} only holds for $k$-round Students for some fixed $k = k(n)$.

\begin{definition}[$k$-round pseudo-surjectivity~\cite{Krajicek04}]\label{def:pseudo-surjectivity}
    Let $\calP$ be any proof system, $G: \{0, 1\}^n \to \{0, 1\}^m$ be a circuit where $m > n$, and $s$ be a size parameter.\begin{itemize}
        \item We say that $G$ is \emph{$s$-pseudo-surjective} for $\calP$ if for every $k$ and every sequence of Student circuits $(B_1, B_2, \dots, B_k)$, \eqref{eq: the student wins the game} requires $\calP$-proof of size at least $s$.
        \item Fixing a parameter $k = k(n)$, we say that $G$ is \emph{$k$-round $s$-pseudo-surjective} for $\calP$ if for every sequence of Student circuits $(B_1, B_2, \dots, B_k)$, \eqref{eq: the student wins the game} requires $\calP$-proof of size $\ge s$.
    \end{itemize}
    
    When $s = n^{\omega(1)}$, we omit the parameter $s$ and simply say that $G$ is ($k$-round) pseudo-surjective for $\calP$.
\end{definition}

\subsection{Bounded Arithmetic}\label{sec: bounded arithmetic prelim}

Roughly speaking, $\PV_1$ is a theory of bounded arithmetic capturing ``polynomial-time'' reasoning. The language of $\PV_1$, $\calL(\PV)$, contains a function symbol for every polynomial-time algorithm $f: \N^k \to \N$, defined using Cobham's characterization of polynomial-time functions~\cite{cobham1964intrinsic}. Although Cook's $\PV$~\cite{Cook75} was originally defined as an equational theory (i.e., the only relation in $\PV$ is equality and there are no quantifiers), one can define a first-order theory $\PV_1$ by adding suitable induction schemes~\cite{Cook75, KRAJICEK1991143}. In the literature, the notation $\PV$ is often used to refer to the set of polynomial-time computable functions as well. The precise definition of $\PV_1$ is somewhat involved, and we refer the reader to the textbooks~\cite{Krajicek-book,Cook-Nguyen,Krajicek_proof_complexity} and references~\cite{Cook75,Jerabek06, ChenLiOliveira24, Li25}.

To capture reasoning in \emph{randomized} polynomial time, \Jerabek~\cite{Jerabek04,Jerabek-phd,Jerabek07} defined a theory $\APC_1$ by extending $\PV_1$ with the \emph{dual weak Pigeonhole Principle} for $\PV$ functions (i.e., polynomial-time functions). Let $\Eval(G,x):=G(x)$ be the circuit evaluation function. For a function $\ell(n) > n$, define $\dwPHP_\ell(\Eval)$ to be the following sentence

\begin{align*}
    &\dwPHP_\ell(\Eval):=\nonumber\nopagebreak\\
    &~~~~\forall n\in \mathsf{Log}~\forall\text{circuit }G:\{0,1\}^n\to \{0,1\}^{\ell(n)}~\exists y\in \{0,1\}^{\ell(n)}~\forall x\in \{0,1\}^n~[\Eval(G,x)\neq y]. 
\end{align*}

Here, ``$n\in\mathsf{Log}$'' is the standard notation in bounded arithmetic, which means that $n$ is the bit-length of some object; this notation allows us to reason about objects of size $\poly(n)$ instead of merely size $\polylog(n)$. The above sentence can be interpreted as the totality of $\Avoid$: every input $G: \{0, 1\}^n \to \{0, 1\}^{\ell(n)}$ has at least one solution $y$.

For this paper, it suffices to think of $\ell(n)$ as a large polynomial in $n$; in fact, under suitable hardness assumptions, we will be able to show that $\PV_1$ cannot prove $\dwPHP_\ell$ for every polynomial $\ell(n)$. This suffices to separate $\APC_1$ from $\PV_1$.

\paragraph{KPT witnessing and Student-Teacher games.} The only property of bounded arithmetic theories that we need in this paper is the \emph{KPT witnessing theorem}:

\begin{theorem}[{KPT Witnessing Theorem for $\PV_1$~\cite{KRAJICEK1991143}}]\label{thm:KPT}
    For every quantifier-free formula $\varphi(\vec{x},y,z)$ in the language $\calL(\PV)$, if $\PV_1 \vdash \forall \vec{x}~\exists y~\forall z~\varphi(\vec{x},y,z)$, then there is a $k\in \N$ and $\calL(\PV)$-terms $t_1,t_2,\dots,t_k$ such that
    \begin{equation}\label{eqn:KPT}
        \PV_1 \vdash \forall \vec{x}~\forall z_1~\forall z_2 \dots \forall z_k~\bigvee_{i=1}^k~\varphi(\vec{x},t_i(\vec{x},z_1,\dots,z_{i-1}),z_i).
    \end{equation}
\end{theorem}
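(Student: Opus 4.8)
Since this is the classical KPT witnessing theorem, the plan is to reproduce its standard model‑theoretic proof, specialized to $\PV_1$; the paper itself only needs the statement, so I would present the argument as a sketch. The one structural input is that $\PV_1$ is a \emph{universal} theory over its language $\calL(\PV)$ (its rich supply of polynomial‑time function symbols lets one replace induction by universal defining axioms; see Cook~\cite{Cook75} and \cite{Krajicek-book}), so that substructures of models of $\PV_1$ are again models of $\PV_1$, and quantifier‑free $\calL(\PV)$‑formulas are absolute under passing to substructures. I would invoke this as a black box.

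The argument is by contradiction. Assume $\PV_1\vdash\forall\vec{x}\,\exists y\,\forall z\,\varphi(\vec{x},y,z)$ but that the desired conclusion \emph{fails}: for \emph{every} $k$ and every tuple of $\calL(\PV)$‑terms $(t_1,\dots,t_k)$ with $t_i$ mentioning only $\vec{x},z_1,\dots,z_{i-1}$, the sentence $\forall\vec{x}\,\forall z_1\cdots\forall z_k\,\bigvee_{i\le k}\varphi(\vec{x},t_i(\vec{x},z_1,\dots,z_{i-1}),z_i)$ is not provable in $\PV_1$. First I would fix an enumeration $t_1,t_2,\dots$ of $\calL(\PV)$‑terms, each with a designated assignment of its inputs into slots among $\{z_1,\dots,z_{i-1}\}$, chosen by a routine dovetailing so that every term $s$ together with every admissible increasing index list $i_1<\cdots<i_p$ occurs as some $t_i$ with $i>i_p$. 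Then, introducing fresh constants $\vec{c}$ (for $\vec{x}$) and $d_1,d_2,\dots$ (for the $z_i$), I would consider
\[
\Gamma \;=\; \PV_1 \cup \bigl\{\,\neg\varphi\bigl(\vec{c},\,t_i(\vec{c},d_1,\dots,d_{i-1}),\,d_i\bigr) : i\ge 1\,\bigr\}.
\]
By the failure assumption applied to the initial segments $(t_1,\dots,t_k)$, each finite fragment of $\Gamma$ (mentioning only $d_1,\dots,d_k$ for some $k$) is consistent, hence $\Gamma$ is consistent by compactness.

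To derive the contradiction, I would take $M\models\Gamma$ with $\vec{c},d_i$ interpreted by $\vec{a},b_i$, and pass to the substructure $M_0$ of $M$ generated by $\vec{a}$ and all the $b_i$; universality of $\PV_1$ gives $M_0\models\PV_1$. Every element of $M_0$ equals $s^{M}(\vec{a},b_{i_1},\dots,b_{i_p})$ for some term $s$ and indices $i_1<\cdots<i_p$, and by the choice of enumeration there is $i^{*}>i_p$ with $t_{i^{*}}$ equal to $s$ read from those slots, so $t_{i^{*}}^{M}(\vec{a},b_1,\dots,b_{i^{*}-1})=s^{M}(\vec{a},b_{i_1},\dots,b_{i_p})$; since $\neg\varphi(\vec{c},t_{i^{*}}(\vec{c},\dots),d_{i^{*}})\in\Gamma$ and quantifier‑free formulas are absolute, $M_0\models\neg\varphi(\vec{a},s^{M}(\vec{a},\dots),b_{i^{*}})$ with $b_{i^{*}}\in M_0$. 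Letting $s$ and the indices vary, this yields $M_0\models\exists\vec{x}\,\forall y\,\exists z\,\neg\varphi(\vec{x},y,z)$, contradicting $M_0\models\PV_1$ together with $\PV_1\vdash\forall\vec{x}\,\exists y\,\forall z\,\varphi$. I expect the only genuinely delicate point to be ensuring that the enumerated terms $t_i$ exhaust the Skolem hull $M_0$ — i.e.\ that every element of $M_0$ is realized by some $t_{i^{*}}$ with a \emph{large enough} index — together with correctly citing the universal axiomatizability of $\PV_1$; the rest is compactness bookkeeping. A proof‑theoretic alternative would Skolemize the inner $\forall z$ and apply Herbrand's theorem to the resulting universal theory, reading the terms $t_i$ off the nesting structure of the Herbrand terms.
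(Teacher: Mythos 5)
The paper does not prove \autoref{thm:KPT}; it is stated as a citation to Kraj\'{\i}\v{c}ek--Pudl\'{a}k--Takeuti~\cite{KRAJICEK1991143} and used as a black box in \autoref{sec: bounded arithmetic prelim}, so there is no in-paper proof to compare against. Your sketch is the standard model-theoretic proof of the KPT theorem, and it is essentially correct. The compactness step is sound: if a finite fragment of $\Gamma$ (involving only $d_1,\dots,d_k$) were inconsistent, $\PV_1$ would prove $\bigvee_{i\le k}\varphi(\vec{c},t_i(\vec{c},d_1,\dots,d_{i-1}),d_i)$, and since $\vec{c},d_1,\dots,d_k$ are fresh you could universally generalize, contradicting the failure hypothesis for the initial segment $(t_1,\dots,t_k)$. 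The Skolem-hull step is also sound, provided the enumeration dovetails over terms and admissible slot-assignments so that every element $s^M(\vec{a},b_{i_1},\dots,b_{i_p})$ of $M_0$ is realized by some $t_{i^*}$ with $i^*>i_p$. You correctly flag the one genuinely nontrivial input, the universal axiomatizability of $\PV_1$: this is a real theorem (instances of induction are eliminable in favor of defining equations for additional polynomial-time function symbols, which $\calL(\PV)$ supplies by Cobham's characterization; see, e.g., \cite{Krajicek-book}), and it is what licenses ``substructures of models are models'' and the absoluteness of quantifier-free formulas. The Herbrand-theoretic alternative you mention at the end is the other standard route and yields the same bound.
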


In particular, \autoref{thm:KPT} implies that if $\PV_1\vdash \dwPHP_\ell(\Eval)$, then there exists a constant $k$ and a polynomial-time Student that wins the Student-Teacher game for the range avoidance problem. (Note that here the Student is computed by a \emph{uniform} algorithm that gets $(1^n, G)$ as inputs, as opposed to a family of non-uniform circuits in \autoref{sec: proof complexity}). To see this, let $\varphi((1^n, G), y, x) = 1$ iff $\Eval(G, x) \ne y$ and apply \autoref{thm:KPT}. We obtain a constant $k\in \N$ and $\calL(\PV)$-terms $t_1, t_2, \dots, t_k$ such that:
\begin{equation}\label{eq: KPT-Avoid}
    \PV_1\vdash \forall n\in\mathsf{Log}~\forall G~\forall z_1~\forall z_2 \dots \forall z_k~\bigvee_{i=1}^k \Eval(G, z_i) \ne t_i(G, z_1, \dots, z_{i-1}).
\end{equation}
This implies that the following Student wins the Student-Teacher game in $k$ rounds:\begin{enumerate}
    \item The Student and the Teacher are given a circuit $G: \{0, 1\}^n \to \{0, 1\}^{\ell(n)}$ as input.
    \item In the first round, the Student proposes $y_1 :=t_1(G)$ as a candidate non-output. If $y_1$ is correct (i.e., $\forall z_1~\varphi(G, y_1, z_1)$ is true), then the Student wins the game. Otherwise, the Teacher provides a counterexample $z_1$ such that $\Eval(G, z_1) = y_1$, i.e., a preimage $z_1\in G^{-1}(y_1)$.
    \item Then, the Student proposes a new candidate $y_2 := t_2(G, z_1)$ based on the counterexample given in the first round. If $y_2$ is a correct non-output, then the Student wins the game. Otherwise, the Teacher again provides a counterexample $z_2$ such that $\Eval(G, z_2) = y_2$, i.e., a preimage $z_2 \in G^{-1}(y_2)$.
    \item The game proceeds until the Student provides a correct witness $y$.
\end{enumerate}

\subsection{Extractors}
\begin{definition}[$k$-Source]
    A random variable $X$ is a \emph{$k$-source} if for every $x\in \mathrm{Supp}(X)$, $\Pr[X = x] \le 2^{-k}$.
\end{definition}

\begin{definition}[Strong Seeded Extractors]\label{def: comp-ext}	
    A polynomial-time computable function $\Ext: \{0, 1\}^n \times \{0, 1\}^d \to \{0, 1\}^m$ is a \emph{$(k, \eps)$-strong seeded extractor} if for every $k$-source $\calX$ over $\{0, 1\}^n$, the statistical distance of $(\calU_d, \Ext(X, \calU_d))$ and $(\calU_d, \calU_m)$ is at most $\eps$.
\end{definition}

Below is the only property of strong seeded extractors that we will use:

\begin{fact}
    Suppose $\Ext: \{0, 1\}^n \times \{0, 1\}^d \to \{0, 1\}^m$ is a $(k, \eps)$-strong seeded extractor. Then for every (possibly unbounded) adversary $\calA: \{0, 1\}^{d+m} \to \{0, 1\}$, the number of strings $x\in \{0, 1\}^n$ such that
	\begin{equation}\label{eq: extractor fact}
        \Pr_{r\sim \{0, 1\}^d}[\calA(r, \Ext(x, r)) = 1] < \Pr_{r\sim \{0, 1\}^d, z\sim \{0, 1\}^m}[\calA(r, z) = 1] - \eps
    \end{equation}
	is at most $2^k$.
\end{fact}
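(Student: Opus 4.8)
The plan is to argue by contradiction, using the standard principle that a strong seeded extractor fools \emph{every} test on average over any sufficiently good source. Let
\[
    B := \cbra{\, x \in \bs{n} \;:\; \Pr_{r\sim\bs{d}}[\calA(r, \Ext(x,r)) = 1] < \Pr_{r\sim\bs{d},\, z\sim\bs{m}}[\calA(r,z) = 1] - \eps \,}
\]
denote the set of ``bad'' inputs appearing in~\eqref{eq: extractor fact}, and suppose toward a contradiction that $|B| > 2^k$. The first step is to observe that the random variable $\calX$ uniform over $B$ is a $k$-source: each $x \in B$ has $\Pr[\calX = x] = 1/|B| < 2^{-k}$.

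The second step is to invoke the extractor guarantee of \autoref{def: comp-ext} on the source $\calX$. It says that the statistical distance between $(\calU_d, \Ext(\calX, \calU_d))$ and $(\calU_d, \calU_m)$ is at most $\eps$. Since statistical distance upper bounds the distinguishing advantage of every (possibly randomized, possibly unbounded) test, applying it to $\calA$ itself gives
\[
    \left| \Pr_{r\sim\bs{d},\, x\sim\calX}[\calA(r, \Ext(x,r)) = 1] - \Pr_{r\sim\bs{d},\, z\sim\bs{m}}[\calA(r,z) = 1] \right| \le \eps .
\]

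The third step is to contradict this by averaging the defining inequality of $B$ over $x \sim \calX$. Writing $c := \Pr_{r\sim\bs{d},\, z\sim\bs{m}}[\calA(r,z) = 1]$, every $x \in B$ satisfies $\Pr_{r}[\calA(r, \Ext(x,r)) = 1] < c - \eps$, so by linearity of expectation
\[
    \Pr_{r\sim\bs{d},\, x\sim\calX}[\calA(r, \Ext(x,r)) = 1] = \Ex_{x\sim\calX}\Bigl[\Pr_{r}[\calA(r, \Ext(x,r)) = 1]\Bigr] < c - \eps ,
\]
i.e.\ the advantage $c - \Pr_{r\sim\bs{d},\, x\sim\calX}[\calA(r, \Ext(x,r)) = 1]$ is strictly greater than $\eps$, which contradicts the previous display. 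Therefore $|B| \le 2^k$.

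I do not expect a real obstacle here: the whole argument is a routine unwinding of the definitions. The only two points that merit a moment's attention are (i) that a uniform distribution on a set of size strictly larger than $2^k$ is genuinely a $k$-source (each atom has weight $< 2^{-k}$), and (ii) that the inequality defining $B$ in~\eqref{eq: extractor fact} is \emph{strict}, which is precisely what is needed for the averaged advantage to \emph{strictly} exceed $\eps$ and thus truly contradict the $\le \eps$ bound from the extractor --- had $B$ been defined with a non-strict inequality, averaging would only yield advantage $\ge \eps$, consistent with the extractor guarantee.
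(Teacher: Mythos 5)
Your proof is correct and follows the same route as the paper's sketch: take the uniform distribution on the bad set, note it is a $k$-source if the set were too large, and derive a contradiction with the extractor guarantee by averaging the defining inequality. The only difference is one of care rather than substance — you explicitly track the strict-vs.-non-strict inequality (which is indeed what the contradiction hinges on), whereas the paper's proof sketch elides this.
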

\begin{proof}[Proof Sketch]
	Fix an adversary $\calA$, let $\calX$ be the set of strings $x \in \{0, 1\}^n$ such that \eqref{eq: extractor fact} holds. (We abuse notation and also use $\calX$ to denote the uniform distribution over itself.) Note that $\calA$ distinguishes $\Ext(\calX, r)$ from the uniform distribution with advantage $\eps$. If $|\calX| \ge 2^k$, then the min-entropy of $\calX$ is at least $k$, contradicting the extractor properties of $\Ext$. Hence $|\calX| < 2^k$.
\end{proof}

This work requires extractors with exponentially small $\eps$, which can be constructed from any family of pairwise independent hash functions.

\begin{theorem}[Leftover Hash Lemma~\cite{ILL89}]\label{thm:lhl}
	Let $h: \{0, 1\}^n \times \{0, 1\}^d \to \{0, 1\}^m$ be a family of pairwise independent hash functions, where the first component (length $n$) is its input and the second component (length $d$) is its key. Then for every $k, \eps$ such that $m = k - 2\log(1/\eps)$, $h$ is a $(k, \eps)$-strong seeded extractor.

    In particular, if $n\geq m$ and $d\geq 2n$, there exists a family of pairwise independent hash functions $h$ that is $\F_2$-linear.\footnote{That is, for each fixed $r\in \{0, 1\}^d$, the function $h(-, r)$ is an $\F_2$-linear function over its inputs.} If we set $n\geq 3m+3$, $d\geq 2n$, $k:=n-1$, $\eps := 2^{-m-1}$, then $h$ is an $(n-1, \eps)$-strong seeded extractor.
\end{theorem}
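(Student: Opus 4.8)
The plan is to prove the Leftover Hash Lemma by the standard second‑moment (collision‑probability) argument, and then to exhibit an explicit $\F_2$‑affine pairwise‑independent family and plug in the stated parameters. First I would record the elementary fact that for any random variable $Z$ supported on a set $\Omega$ of size $M$, its statistical distance from $\calU_\Omega$ is at most $\tfrac12\sqrt{M\cdot\mathrm{cp}(Z)-1}$, where $\mathrm{cp}(Z)=\sum_z\Pr[Z=z]^2$ is the collision probability; this follows from Cauchy--Schwarz together with the identity $\lVert Z-\calU_\Omega\rVert_2^2=\mathrm{cp}(Z)-1/M$.

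Next, fix a $k$-source $X$ over $\{0,1\}^n$, let $S\sim\calU_d$ be an independent uniform seed, and set $Z:=(S,h(X,S))$, which lives in a set of size $M=2^{d+m}$ and equals the joint distribution $(\calU_d,\Ext(X,\calU_d))$. The crux is to bound $\mathrm{cp}(Z)$. Writing $Z=(\calU_d,\cdot)$ and expanding coordinatewise, $\mathrm{cp}(Z)=2^{-d}\cdot\Ex_{s\sim\calU_d}\Pr_{X,X'}[h(X,s)=h(X',s)]$, where $X'$ is an independent copy of $X$. Splitting the inner event on whether $X=X'$, using (i) $\Pr[X=X']=\mathrm{cp}(X)\le 2^{-k}$ and (ii) pairwise independence, which gives $\Pr_s[h(x,s)=h(x',s)]=2^{-m}$ for every fixed pair $x\ne x'$ and hence the same bound after averaging over $x\ne x'$, yields $\mathrm{cp}(Z)\le 2^{-d}\bigl(2^{-k}+2^{-m}\bigr)$. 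Plugging into the fact above, $M\cdot\mathrm{cp}(Z)\le 2^{m-k}+1$, so the statistical distance between $(\calU_d,\Ext(X,\calU_d))$ and $(\calU_d,\calU_m)$ is at most $\tfrac12\cdot 2^{(m-k)/2}$. Choosing $m=k-2\log(1/\eps)$ makes this $\tfrac12\eps\le\eps$, so $h$ is a $(k,\eps)$-strong seeded extractor; since this bound is monotone in $m$, any $m\le k-2\log(1/\eps)$ also works.

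For the ``in particular'' clause, when $d\ge 2n$ I would take the textbook affine construction: identify $\{0,1\}^n$ with the field $\F_{2^n}$, parse the seed as $(a,b)\in\F_{2^n}^2$ (using the first $2n$ key bits, ignoring the rest if $d>2n$), and let $h(x,(a,b))$ be the first $m$ bits of $a\cdot x+b$. Genuine pairwise independence follows from the standard argument, using $n\ge m$ so that truncating a uniform field element to $m$ bits is again uniform; and for every fixed $(a,b)$, multiplication by $a$ in $\F_{2^n}$ is an $\F_2$-linear map and adding $b$ is a translation, so each output coordinate of $h(-,(a,b))$ is an $\F_2$-affine (in particular degree-$\le 1$) function of $x$ — and if a strictly linear family is desired, $h(x,a):=\mathrm{truncate}_m(a\cdot x)$ with $d=n$ is already $2$-universal, which is all the collision argument above actually uses. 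Finally, with $n\ge 3m+3$, $k:=n-1$, and $\eps:=2^{-m-1}$, a direct check gives $m\le n-2m-3=k-2\log(1/\eps)$, so this family with output length $m$ is an $(n-1,\eps)$-strong seeded extractor.

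I do not expect a genuine obstacle — this is a classical result — but the step requiring the most care is the collision‑probability computation: one must invoke pairwise independence only \emph{after} conditioning on a seed collision and on $X\ne X'$, so that the averaging over the seed is exactly the quantity the hash family controls; getting the order of these steps wrong is the usual pitfall. The only other point I would flag explicitly is the linear‑versus‑affine distinction in the explicit construction, handled as above.
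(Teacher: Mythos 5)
The paper does not prove \autoref{thm:lhl} at all; it is stated as a black box with a citation to [ILL89]. Your second-moment (collision-probability) argument is the standard textbook proof of the Leftover Hash Lemma, and it is correct: the identity $\mathrm{cp}(Z)=2^{-d}\Ex_{s}\Pr_{X,X'}[h(X,s)=h(X',s)]$, the split into $X=X'$ vs.\ $X\ne X'$, the use of $2$-universality only on the latter branch, the resulting bound $\tfrac12\cdot 2^{(m-k)/2}$, and the final parameter check $m\le n-2m-3=k-2\log(1/\eps)$ with $k=n-1$, $\eps=2^{-m-1}$ all go through. You are also right to note (and the paper glosses over) that any strictly $\F_2$-\emph{linear} family cannot be pairwise independent in the strict sense, since $h(0,r)=0$ identically; what one actually has is either (i) an affine pairwise-independent family $x\mapsto\mathrm{trunc}_m(ax+b)$, or (ii) a linear $2$-universal family $x\mapsto\mathrm{trunc}_m(ax)$, and since your collision bound only uses $2$-universality, option (ii) delivers exactly what the paper's downstream claims (\autoref{claim: parity reduction}, which needs $\Ext(\cdot,r)$ to be genuinely $\GF(2)$-linear rather than affine) require. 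This is a sharper and more careful statement than the one in the paper.
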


\section{Hardness of Range Avoidance}\label{sec: Avoid notin FP}

\begin{theorem}\label{thm: demi-bit to Avoid notin FP}
	Assume that for some $m > n$, there exists a demi-bits generator $G: \{0, 1\}^n \to \{0, 1\}^N$ and $\Ext: \{0, 1\}^N \times \{0, 1\}^d \to \{0, 1\}^m$ is an $(N-1, 2^{-m-1})$-strong seeded extractor. ($N,d\leq\poly(m)$.) Then $\Avoid$ for polynomial-size circuits of stretch $n\to m$ is not in $\SearchNP$.  %
\end{theorem}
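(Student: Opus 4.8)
The plan is to show the contrapositive: if $\Avoid$ for circuits of stretch $n\to m$ lies in $\SearchNP$, then there is an $\NP/_\poly$ adversary breaking $G$, contradicting the assumption that $G$ is a demi-bits generator. The hard instance for $\Avoid$ will be (a suitable fixing of a seed of) the composition $C := \Ext(G(\cdot),\seed)$, or more precisely the circuit $C_{\seed}:\{0,1\}^n\to\{0,1\}^m$ defined by $C_{\seed}(x) = \Ext(G(x),\seed)$ for some hardwired seed $\seed\in\{0,1\}^d$. Since $m>n$, this is a legitimate $\Avoid$ instance, and it has polynomial size because $G$, $\Ext$ have size $\poly(m)$.

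First I would argue that for \emph{every} fixed seed $\seed$, any valid solution $y\notin\Range(C_{\seed})$ to this $\Avoid$ instance has the property that $y\neq\Ext(G(x),\seed)$ for all $x$. Now suppose $\calA$ is a nondeterministic polynomial-time algorithm solving $\Avoid$; I want to convert it into an adversary against $G$. The adversary $\Adv$, on input $z\in\{0,1\}^N$ (a candidate output of $G$), should accept iff there is an accepting path of $\calA$ on input $C_{\seed}$ (with $\seed$ and a good description of the circuit hardwired as advice) that outputs some $y$ with $y = \Ext(z,\seed)$. The key observations: (i) if $z\in\Range(G)$, say $z=G(x)$, then every valid solution $y$ output by $\calA$ satisfies $y\neq\Ext(G(x),\seed)=\Ext(z,\seed)$, so $\Adv$ rejects — giving $\Pr_{x}[\Adv(G(x))=1]=0$; (ii) for $\Adv$ to be a legitimate $\NP/_\poly$ circuit we need $\calA$'s computation to be simulable in $\NP/_\poly$ (it is, since $\SearchNP\subseteq\NP/_\poly$ and the $\Avoid$ instance is a fixed circuit once $\seed$ is hardwired), and we need the test $y=\Ext(z,\seed)$ to be poly-time (it is).

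The remaining step — and the crux of the argument — is to show that for a \emph{good choice} of $\seed$, the acceptance probability $\Pr_{z\sim\{0,1\}^N}[\Adv(z)=1]$ is noticeably large (say $\geq 2^{-m}/2$ or some constant after amplification), which is where the strong-seeded-extractor $\Fact$ stated in the preliminaries comes in. Here is the plan: fix the output $y^\ast = y^\ast(\seed)$ that $\calA$ produces on $C_{\seed}$ (choosing canonically among accepting paths, e.g. the lexicographically first one). Define the unbounded distinguisher $\calB(r,w)$ which, on $r=\seed$ and $w\in\{0,1\}^m$, accepts iff $w = y^\ast(\seed)$. For each $z\in\{0,1\}^N$, $\Adv$ accepts $z$ exactly when $\Ext(z,\seed)=y^\ast(\seed)$, i.e. $\calB(\seed,\Ext(z,\seed))=1$. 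On the truly random side, $\Pr_{r,w}[\calB(r,w)=1] = 2^{-m}$. By the extractor $\Fact$ applied to $\calB$ (with $k=N-1$, $\eps=2^{-m-1}$), the number of $z$ with $\Pr_{\seed}[\calB(\seed,\Ext(z,\seed))=1] < 2^{-m}-2^{-m-1} = 2^{-m-1}$ is at most $2^{N-1}$; hence for at least $2^{N-1}$ strings $z$ we have $\Pr_{\seed}[\Adv_\seed(z)=1]\geq 2^{-m-1}$, where $\Adv_\seed$ denotes the adversary with seed $\seed$ hardwired. Averaging, $\Ex_{\seed}\Pr_{z}[\Adv_\seed(z)=1] \geq \tfrac12\cdot 2^{-m-1}$, so there exists a fixed $\seed^\ast$ with $\Pr_{z}[\Adv_{\seed^\ast}(z)=1]\geq 2^{-m-2}$. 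Take $\eps := 2^{-m-2}$; then $\Adv_{\seed^\ast}$ is a size-$\poly(m)$ $\NP/_\poly$ adversary breaking $G$ with advantage $\eps$ and zero acceptance on $\Range(G)$, i.e. $G$ is not an $(\poly(m),2^{-m-2})$-secure demi-bits generator, contradiction. I expect the main obstacle to be bookkeeping the quantifier order — that the \emph{same} $\seed^\ast$ must simultaneously make $\Adv$ reject all of $\Range(G)$ (which holds for every $\seed$) and accept many $z$ — and making precise that $\calA$'s canonical output $y^\ast(\seed)$ on the hardwired instance can be computed within the $\NP/_\poly$ adversary (handled by guessing the accepting path and verifying, with the instance description as advice).
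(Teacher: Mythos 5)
Your plan is essentially the same as the paper's, and most of the pieces are correct: the choice of hard instance $C_r = \Ext(G(\cdot),r)$, the construction of a nondeterministic adversary by running $\calA$ and comparing its output to $\Ext(z,r)$, the observation that the adversary rejects all of $\Range(G)$, and the use of the extractor fact to lower-bound acceptance. The gap is in the final quantifier-order step, precisely the place you flagged as the likely obstacle: you \emph{hardwire} a single seed $\seed^\ast$ and then average, which only yields an adversary with acceptance probability $\ge 2^{-m-2}$ on random $z$. That is not enough to contradict the definition of a demi-bits generator, which requires a poly-size adversary with a \emph{constant} (say $\ge 1/2$) acceptance rate on random strings and zero acceptance on $\Range(G)$; an adversary with $2^{-m-2}$ acceptance falls far short of this. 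Worse, there may not exist a single $\seed^\ast$ that works for a constant fraction of $z$ simultaneously --- the extractor fact only says that for at least half the $z$'s, \emph{some} seed works, and that seed can depend on $z$.

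The fix, which is exactly what the paper does, is to make the seed part of the nondeterministic guess rather than hardwiring it: define $\calB(z)$ to accept iff \emph{there exists} $r\in\{0,1\}^d$ and an accepting branch of $\calA(C_r)$ that outputs $\Ext(z,r)$. Then the rejection of $\Range(G)$ still holds (your argument works for \emph{every} $r$), and the extractor fact gives that for at least a $1/2$ fraction of $z$, $\Pr_r[\calA'(r,\Ext(z,r))=1] \ge 2^{-m-1} > 0$, so \emph{some} $r$ together with some branch of $\calA$ makes $\calB$ accept. This gives acceptance probability $\ge 1/2$, as required. (The minor inconsistency about a canonical output $y^\ast(\seed)$ versus ``any accepting path'' is harmless --- either works --- so long as the distinguisher $\calA'(r,z)$ used in the extractor fact is chosen to match the acceptance condition of $\calB$, but this is secondary to the seed-quantifier issue.)
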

\begin{proof}
	Let $r \in \{0, 1\}^d$, define the circuit $C_r: \{0, 1\}^n \to \{0, 1\}^m$ with $r$ hardwired:
	\[C_r(s) = \Ext(G(s), r).\]
	Assume towards contradiction that there is a nondeterministic polynomial-time algorithm $\calA$ solving $\Avoid$. We construct the following nondeterministic adversary $\calB(y)$ that breaks the demi-bits generator $G$. Given an input $y \in \{0, 1\}^N$, the adversary $\calB$ accepts $y$ if and only if there exists $r \in \{0, 1\}^d$ such that some nondeterministic branch of $\calA(C_r)$ outputs $\Ext(y, r)$.

	It is easy to see that $\calB$ rejects every string $y\in\Range(G)$. To see this, suppose that $y = G(s)$ for some $s \in \{0, 1\}^n$. Then
	\[C_r(s) = \Ext(G(s), r) = \Ext(y, r),\]
	hence $\calA(C_r)$ will never output $\Ext(y, r)$.

	It remains to show that $\calB$ accepts at least $1/2$ fraction of strings $y \in \{0, 1\}^N$. For $r\in \{0, 1\}^d, z\in\{0, 1\}^m$, let $\calA'(r, z)$ be the adversary that outputs $1$ if there is a nondeterministic branch of $\calA(C_r)$ that outputs $z$, and outputs $0$ otherwise. Since $\Ext$ is a $(N-1, 2^{-m-1})$-strong seeded extractor, %
    the following is true for at least $1/2$ fraction of $y \in \{0, 1\}^N$:
	\[\Pr_{r\sim \{0, 1\}^d}[\calA'(r, \Ext(y, r)) = 1] \ge \Pr_{\substack{r\sim \{0, 1\}^d\\z\sim \{0, 1\}^m}}[\calA'(r, z) = 1] - 2^{-m-1} \ge 2^{-m-1}.\]
	For such $y$, there exists $r^\star\in \{0, 1\}^d$ and a nondeterministic branch of $\calA(C_{r^\star})$ that outputs $\Ext(y, r^\star)$.
	
	It follows that $\calB$ rejects every string in $\Range(G)$ but accepts at least $1/2$ fraction of strings $y\in \{0, 1\}^N$. This contradicts the security of $G$.
\end{proof}

\ThmDemiBitsPlusHash*
\begin{proof}
    We construct an extractor
    \(\Ext(y,h):=h(y)~(y\in\{0,1\}^N,h\in\calH)\). 
    According to \autoref{thm:lhl}, $\Ext\colon\{0,1\}^N\times\calH\to\{0,1\}^m$ is an $(N-1,2^{-m-1})$-strong seeded extractor. Therefore, by \autoref{thm: demi-bit to Avoid notin FP}, there exists $h\in\calH$ such that $\calA$ fails to solve $\Avoid$ on $\Ext(G(\cdot),h)$, i.e., $h\circ G$.
\end{proof}

\begin{corollary}
	Under \autoref{assumption: demi-bits in degree 2}, there are constants $\eps > 0$ and $d\ge 2$ such that $\Avoid$ for $\XOR\circ\AND_d$ circuits (i.e., degree-$d$ polynomials over $\F_2$) of stretch $n\mapsto n^{1+\eps}$ is not in $\SearchNP$.
\end{corollary}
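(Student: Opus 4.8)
The plan is to instantiate \autoref{thm: intro demi-bits plus hash} (equivalently \autoref{thm: demi-bit to Avoid notin FP}) with the constant-degree demi-bits generators supplied by \autoref{assumption: demi-bits in degree 2} and with the $\F_2$-\emph{linear} pairwise independent hash family guaranteed by \autoref{thm:lhl}, and then to observe that the resulting hard $\Avoid$ instance $h\circ g_n$ is itself an $\XOR\circ\AND_d$ circuit. The only real work is parameter bookkeeping and tracking the $\F_2$-degree through the composition.

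Concretely, let $\eps_0 > 0$ and $d \ge 2$ be the constants from \autoref{assumption: demi-bits in degree 2}, and let $\{g_n : \{0,1\}^n \to \{0,1\}^{n^{1+\eps_0}}\}$ be the corresponding family of demi-bits generators secure against $\NP/_\poly$, each output bit of $g_n$ being a degree-$d$ polynomial over $\F_2$. I would set $N := n^{1+\eps_0}$ and pick the output length $m := \lfloor n^{1+\eps_0/2}\rfloor$, so that for all sufficiently large $n$ one has $n < m$ and $N > 10m$ (since $N/m$ is roughly $n^{\eps_0/2}\to\infty$). By \autoref{thm:lhl}, since $N \ge m$, there is a pairwise independent hash family $\calH = \{h:\{0,1\}^N\to\{0,1\}^m\}$ that is $\F_2$-linear, with key length $O(N) = \poly(n)$; hence every $h\circ g_n$ is a circuit of size $\poly(n)$ with stretch $n \mapsto m$.

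The key observation is the degree bookkeeping. Since each $h\in\calH$ is $\F_2$-linear, every output bit of $h\circ g_n$ is an $\F_2$-linear combination of the output bits of $g_n$; a sum over $\F_2$ of degree-$d$ polynomials is again a degree-$d$ polynomial (here we only use $d\ge 1$), so each output bit of $h\circ g_n$ is an $\XOR\circ\AND_d$ circuit. Now \autoref{thm: intro demi-bits plus hash} applies: for every nondeterministic polynomial-time algorithm $\calA$ there is an $h = h_n\in\calH$ on which $\calA$ fails to solve range avoidance for $h\circ g_n$. As this holds for every such $\calA$ (and, by the asymptotic security of the family $\{g_n\}$, for infinitely many input lengths $n$), no nondeterministic polynomial-time algorithm solves $\Avoid$ on $\XOR\circ\AND_d$ circuits of stretch $n\mapsto m \ge n^{1+\eps_0/2}$. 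Taking $\eps := \eps_0/2$ (and the same $d$) yields the corollary.

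I do not expect a genuine obstacle here, since all the substance lives in \autoref{thm: demi-bit to Avoid notin FP} and the existence of $\F_2$-linear strong extractors. The two points that require a little care are: (i) choosing $m$ so that the single inequality chain $n < m$ and $10m < N = n^{1+\eps_0}$ both hold while $m$ is still a genuine polynomial stretch of $n$, which pins $m$ strictly between $n^{1+\eps_0/2}$ and $n^{1+\eps_0}/10$ for large $n$; and (ii) confirming that post-composing with the linear (at worst affine) hash map does not increase the $\F_2$-degree beyond $d$, which is immediate but should be stated explicitly.
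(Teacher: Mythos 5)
Your proposal is correct and follows essentially the same route as the paper: both instantiate the constant-degree demi-bits generator from \autoref{assumption: demi-bits in degree 2}, compose with an $\F_2$-linear extractor/hash from \autoref{thm:lhl}, apply \autoref{thm: demi-bit to Avoid notin FP} (you invoke it via \autoref{thm: intro demi-bits plus hash}, which is just a repackaging), and observe that postcomposing a degree-$d$ map with a linear map keeps the $\F_2$-degree at $d$; the parameter choice $\eps := \eps_0/2$ matches the paper's $\eps := \delta/2$ as well.
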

\begin{proof}
	\autoref{assumption: demi-bits in degree 2} implies a demi-bits generator $G: \{0, 1\}^n \to \{0, 1\}^{n^{1+\delta}}$ and each output bit of $G$ can be computed by a degree-$d$ polynomial over $\F_2$, where $\delta > 0$ and $d\geq2$ are constants. Let $\eps := \delta/2$, $\Ext: \{0, 1\}^{n^{1+\delta}} \times \{0, 1\}^{2n^{1+\delta}} \to \{0, 1\}^{n^{1+\eps}}$ be a $(n^{1+\delta}-1,2^{-n^{1+\eps}-1})$-strong linear seeded extractor guaranteed by \autoref{thm:lhl}. Then, for every nondeterministic adversary $\calA$, there exists $r \in \{0, 1\}^{2n^{1+\delta}}$ such that $\calA$ fails to solve $\Avoid$ on the instance
	\[C_r(s) := \Ext(G(s), r).\]
	Since $\Ext$ is multi-linear and $G$ is a degree-$d$ polynomial over $\F_2$, $C_r$ is an $\XOR\circ\AND_d$ circuit.
\end{proof}

\subsection{From Demi-Bits Generators to Proof Complexity Generators}\label{subsec:pcg}
\def\hist{\mathsf{hist}}
Let $\calP$ be a proof system and $G: \{0, 1\}^n \to \{0, 1\}^\ell$ be a function computable in polynomial size where $\ell > n$. (We allow $G$ to take non-uniform advice.) Let $b \in \{0, 1\}^\ell$, denote as $\tau_b(G)$ the propositional formula encoding that $b$ is not in the range of $G$. We say $G$ is a:
\begin{itemize}
    \item \emph{demi-bits generator} against $\calP$, if for at least a $1/3$ fraction of $b \in \{0, 1\}^\ell$, $\calP$ does not have polynomial-size proof of $\tau_b(G)$; and $G$ is a
    \item \emph{proof complexity generator} against $\calP$, if for \emph{every} $b\in \{0, 1\}^\ell$, $\calP$ does not have polynomial-size proof of $\tau_b(G)$.
\end{itemize}

The precise definition of $\tau_b(G)$ as a $3$-CNF is as follows. The variables of $\tau_b(G)$ consist of $x\in \{0, 1\}^n$ and $\hist \in \{0, 1\}^s$, where $s$ is the number of internal gates in $G$ (including the output gates but not including the input gates). The intended meaning is that $G(x) = b$ and $\hist$ represents the values of internal gates of $G$ during the computation of $G(x)$. Each gate in $G$ corresponds to a bit $v_g$; if $g$ is an input (internal) gate then $v_g$ refers to some $x_i$ ($\hist_i$). For each internal gate $g \in G$ labeled by an operation $\circ_g$ (such as $\land$, $\lor$, or $\oplus$) and two children gates $g_l, g_r$, we have a constraint
\[v_g = v_{g_l} \mathbin{\circ_g} v_{g_r}\]
in $\tau_b(G)$. Similarly, for each $i\in [\ell]$ representing an output gate $g_i$, we have a constraint
\[v_{g_i} = b_i\]
in $\tau_b(G)$. Note that since every constraint only depends on at most $3$ variables, it can be written as a $3$-CNF of size at most $2^3 = 8$, and we can add every clause in this $3$-CNF into $\tau_b(G)$. We assume that the $\oplus$ gate of fan-in $2$ is included in our basis (looking ahead, it will be used to implement the extractor). The $3$-CNF $\tau_b(G)$ is simply the union of ($3$-CNFs generated from) these constraints over every internal and output gate $g \in G$.

Now we define the notion of \emph{simple parity reductions} between two CNFs. This is a technical notion that we need in \autoref{claim: parity reduction}.

\def\redu{\mathsf{redu}}

\begin{definition}\label{def: parity reduction}
    Let $F(x)$ and $G(y)$ be CNF formulas over variables $x = (x_1, \dots, x_n)$ and $y = (y_1, \dots, y_m)$. We say that there is a \emph{simple parity} reduction from $F$ to $G$, denoted as $F\le^\oplus G$, if:
    \begin{itemize}
        \item {\bf Variables.} The reduction is computed by a $\GF(2)$-linear mapping $\redu: \{0, 1\}^n \to \{0, 1\}^m$ (that is, every output bit of $\redu$ is the $\XOR$ of a subset of its input bits).
        \item {\bf Axioms.} For any clause $g\in G$, one of the following happens:\begin{compactitem}
            \item $g\circ\redu\equiv {\sf True}$;
            \item $g\circ\redu$ is equal to some axiom in $F$; or
            \item $g$ is a width-$1$ clause (i.e., one that consists of a \emph{single} literal) and $g\circ\redu$ is the $\XOR$ of a subset of axioms in $F$ (in which case these axioms in $F$ are also width-$1$ clauses).
        \end{compactitem}

    \end{itemize}
\end{definition}

We say a proof system $\calP$ is \emph{closed under simple parity reductions} if there is a polynomial $p$ such that the following holds. For every CNF $F$ and $G$, if there is a simple parity reduction from $F$ to $G$ and there is a length-$\ell$ $\calP$-proof of $G$, then there is a length-$p(\ell)$ $\calP$-proof of $F$.

We note that this notion is weaker than that of (degree-$1$) algebraic reductions in ~\cite{BussGIP01, DBLP:conf/stoc/RezendeGNPR021}. It follows from \cite[Lemma 8.3]{DBLP:conf/stoc/RezendeGNPR021} that many algebraic proof systems (such as Nullstellensatz and Polynomial Calculus) over $\GF(2)$ are closed under simple parity reductions when the complexity measure is degree. While we do not know if $\Res[\oplus]$ (resolution over linear equations modulo $2$~\cite{ItsyksonS20}) is closed under low-degree algebraic reductions, it is straightforward to prove that $\Res[\oplus]$ is closed under simple parity reductions (see~\autoref{sec: res parity}).

Recall that $G:\{0, 1\}^n \to \{0, 1\}^N$ is a purported demi-bits generator, $\Ext: \{0, 1\}^N\times \{0, 1\}^d \to \{0, 1\}^m$ is an extractor, and for a fixed $r\in\{0, 1\}^d$ we define the circuit $C_r: \{0, 1\}^n \to \{0, 1\}^m$ as
\[C_r(s) := \Ext(G(s), r).\]
We say that $\Ext$ is \emph{linear} if for every fixed randomness $r$, the function $\Ext(\cdot, r): \GF(2)^N \to \GF(2)^m$ is $\GF(2)$-linear. For every $r$ we fix a circuit $\Ext_r$ for computing $\Ext(\cdot, r)$ using $\oplus$ gates of fan-in $2$ only.

\begin{claim}\label{claim: parity reduction}
	Suppose that $\Ext$ is a linear extractor. For every $y\in\{0, 1\}^N$ and $r \in \{0, 1\}^d$, there is a simple parity reduction from $\tau_y(G)$ to $\tau_z(C_r)$, where $z := \Ext(y, r)$.
\end{claim}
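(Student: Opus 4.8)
The plan is to exhibit an explicit $\GF(2)$-linear substitution $\redu$ witnessing $\tau_y(G)\le^\oplus\tau_z(C_r)$: $\redu$ leaves the variables describing $G$'s computation untouched, and replaces each variable attached to an extractor gate by the fixed $\GF(2)$-linear combination of $G$'s output gates that this gate is supposed to compute.

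\emph{Set-up.} Since $\Ext_r$ consists of fan-in-$2$ $\oplus$-gates, the gates of $C_r$ are: the internal gates of $G$ (including $G$'s output gates $o_1,\dots,o_N$, which are now \emph{internal} to $C_r$ as they feed into $\Ext_r$), the internal gates of $\Ext_r$, and the output gates $o'_1,\dots,o'_m$ of $\Ext_r$ (the outputs of $C_r$). Regarding $\Ext_r$ as computing a $\GF(2)$-linear map on its $N$ input slots, for every gate $g$ of $\Ext_r$ let $S_g\subseteq[N]$ be the subset of input slots such that $g$ computes $\bigoplus_{j\in S_g}(\cdot)_j$, and set $S_{o_j}:=\{j\}$ for the input slots themselves; then $S_g=S_{g_l}\mathbin{\triangle}S_{g_r}$ (symmetric difference) for every gate $g$ of $\Ext_r$ with children $g_l,g_r$, and $z_i=\Ext(y,r)_i=\bigoplus_{j\in S_{o'_i}}y_j$. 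I would then define $\redu$ on the variables of $\tau_y(G)$ — the input bits $x$ together with one bit $v_g$ per internal gate $g$ of $G$ — by $x_i\mapsto x_i$, $v_g\mapsto v_g$ for $g$ internal to $G$, and $v_g\mapsto\bigoplus_{j\in S_g}v_{o_j}$ for $g$ a gate of $\Ext_r$. Every output coordinate of $\redu$ is an $\XOR$ of a subset of source variables, so $\redu$ is an admissible simple-parity map.

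\emph{Verifying the axioms.} Each clause $c$ of $\tau_z(C_r)$ comes from one of three kinds of constraint, and I would show it lands in one of the three cases of \autoref{def: parity reduction}. (i) A clause of a gate constraint $v_g=v_{g_l}\circ_g v_{g_r}$ for $g$ internal to $G$: $\redu$ acts as the identity on every variable occurring in it, so $c\circ\redu=c$ is literally an axiom of $\tau_y(G)$. (ii) A clause of a gate constraint $v_g=v_{g_l}\oplus v_{g_r}$ for $g$ a gate of $\Ext_r$: applying $\redu$ turns the whole $\oplus$-constraint into $\bigoplus_{j\in S_g}v_{o_j}=\bigoplus_{j\in S_{g_l}}v_{o_j}\oplus\bigoplus_{j\in S_{g_r}}v_{o_j}$, a statement true under every assignment because $S_g=S_{g_l}\mathbin{\triangle}S_{g_r}$; since substitution preserves semantic implication and the constraint implies each of its $3$-clauses, $c\circ\redu\equiv\mathsf{True}$. (iii) An output constraint $v_{o'_i}=z_i$ of $C_r$ is a single literal, and under $\redu$ it becomes $\bigoplus_{j\in S_{o'_i}}v_{o_j}=z_i$; as $z_i=\bigoplus_{j\in S_{o'_i}}y_j$, this equals the $\XOR$ of the width-$1$ output axioms $\{v_{o_j}=y_j:j\in S_{o'_i}\}$ of $\tau_y(G)$. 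This exhausts all clauses and yields $\tau_y(G)\le^\oplus\tau_z(C_r)$.

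\emph{Main obstacle.} The only slightly delicate step is (ii): one must treat $c\circ\redu$ as a Boolean function and note that tautology-ness of the full substituted $\oplus$-constraint descends to each of its constituent clauses. One also has to keep the bookkeeping straight that $G$'s output gates $o_j$ still carry their width-$1$ axioms $v_{o_j}=y_j$ in $\tau_y(G)$ — precisely what case (iii) consumes — even though those same gates are no longer output gates of $C_r$ and so carry no such axiom in $\tau_z(C_r)$. A harmless side issue, degenerate $\oplus$-gates of $\Ext_r$ computing a constant, either makes the substituted clause $\equiv\mathsf{True}$ anyway or can be eliminated by a trivial preprocessing of $\Ext_r$; it does not affect the argument.
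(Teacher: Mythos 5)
Your proposal is correct and takes essentially the same route as the paper: the same identity-on-$(x,\hist_G)$, linear-combination-on-$\Ext_r$-gates substitution $\redu$, and the same three-case check over clauses of $\tau_z(C_r)$ (internal-to-$G$ gate constraints map to themselves, $\Ext_r$-gate constraints become tautologies, and the output constraints $v_{o'_i}=z_i$ become XORs of the axioms $v_{o_j}=y_j$). The only difference is that you spell out more carefully why each individual $3$-clause of a substituted $\oplus$-constraint is a tautology, a step the paper leaves implicit.
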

First, as a sanity check, we show that $\tau_y(G)$ follows from $\tau_z(C_r)$ logically: Suppose that $\tau_y(G)$ is false and that $y = G(s)$ for some $s\in\{0, 1\}^n$, then
\[C_r(s) = \Ext(G(s), r) = \Ext(y, r) = z,\]
meaning that $\tau_z(C_r)$ is also false. Now we show that if $\Ext$ is a linear extractor, then the above deduction is actually a simple parity reduction under our formalization of $\tau_b(G)$:
\begin{proof}[Proof of \autoref{claim: parity reduction}]
	Recall that the variables of $\tau_y(G)$ consist of $s\in\{0, 1\}^n$ and $\hist_G\in\{0, 1\}^{|G|}$, where $|G|$ denotes the number of internal gates in $G$. Also, recall the variables of $\tau_z(C_r)$ consist of $s\in\{0, 1\}^n$ and $\hist_{C_r}\in\{0, 1\}^{|C_r|}$. Since $C_r(s) = \Ext(G(s), r)$, $\hist_{C_r}$ consists of $\hist_G$ as well as the internal gates of $\Ext(\cdot, r)$. Since $\Ext$ is linear, each internal gate in $\Ext(\cdot, r)$ is an XOR of variables in $\hist_G$. Therefore, one can compute a $\GF(2)$-linear map $\redu: \{0, 1\}^{n + |G|} \to \{0, 1\}^{n + |C_r|}$ that maps $(s, \hist_G)$ to $(s, \hist_{C_r})$.

	Now we show that for every clause $c \in \tau_z(C_r)$, one of the three cases in~\autoref{def: parity reduction} happens. Note that $c$ comes from an internal gate or an output gate of $C_r$.\begin{itemize}
		\item If $c$ comes from an internal gate of $G$, then $c\circ \redu$ (which is equal to $c$ itself) is an axiom in $\tau_y(G)$.
		\item If $c$ comes from an internal gate in $\Ext(\cdot, r)$, then $c\circ \redu \equiv {\sf True}$ by the definition of $\redu$.
		\item The only remaining case is that $c$ comes from an output gate. Suppose this is the $i$-th output gate of $C_r$ (where $i\in[m]$), and let $v'_i$ denote the variable (of $\tau_z(C_r)$) representing the $i$-th output of $C_r$. Note that $c$ is a width-$1$ axiom stating that $v'_i = z_i$.
        
        Let $S_i \subseteq [N]$ be such that $\Ext(y, r)_i = \bigoplus_{j\in S_i}y_j$. Then $\redu$ maps $v'_i$ to $\bigoplus_{j\in S_i}v^G_j$, where $v^G_j$ is the variable in $\tau_y(G)$ that represents the $j$-th output gate of $G$. We also have that $z_i = \bigoplus_{j\in S_i}y_j$. Hence $c\circ\redu$ is the $\XOR$ of the axioms $v^G_j = y_j$ over all $j\in S_i$. Since each $v^G_j = y_j$ is an axiom in $\tau_y(G)$, this concludes the proof. \qedhere
	\end{itemize}
\end{proof}

\begin{theorem}\label{thm:pcg}
    Let $\calP$ be a proof system closed under parity reductions. Let $G: \{0, 1\}^n \to \{0, 1\}^N$ be a demi-bits generator secure against $\calP$, and $\Ext: \{0, 1\}^N \times \{0, 1\}^d \to \{0, 1\}^m$ be an $(N-2, 2^{-m-1})$-strong linear seeded extractor. Then there is a non-uniform proof complexity generator secure against $\calP$.
\end{theorem}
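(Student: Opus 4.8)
The plan is to take as the (non-uniform) proof complexity generator the circuit $C_r(s) := \Ext(G(s), r)$, where the seed $r \in \{0,1\}^d$ is chosen appropriately and hardwired as non-uniform advice (alongside whatever advice $G$ itself uses). This mirrors the proof of \autoref{thm: demi-bit to Avoid notin FP}, with two substitutions: ``some branch of a $\SearchNP$ algorithm solves $\Avoid$ on $C_r$'' is replaced by ``$\calP$ has a polynomial-size proof of $\tau_b(C_r)$'', and the trivial observation that $C_r(s)=\Ext(y,r)$ whenever $y=G(s)$ is upgraded to \autoref{claim: parity reduction}: since $\Ext$ is \emph{linear}, $\tau_y(G)$ simple-parity-reduces to $\tau_{\Ext(y,r)}(C_r)$. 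I will argue by contradiction: assume that for \emph{every} seed $r \in \{0,1\}^d$ the circuit $C_r$ fails to be a proof complexity generator secure against $\calP$, so there is some $b_r \in \{0,1\}^m$ with a polynomial-size $\calP$-proof of $\tau_{b_r}(C_r)$. The goal is to conclude that $\calP$ has polynomial-size proofs of $\tau_y(G)$ for \emph{more than} a $2/3$ fraction of $y \in \{0,1\}^N$, contradicting the assumption that $G$ is a demi-bits generator secure against $\calP$.

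The key steps are the following. (1) Define the computationally unbounded (but well-defined) predicate $\calA'(r,z)$ that outputs $1$ iff $\calP$ has a polynomial-size proof of $\tau_z(C_r)$; this is a legitimate adversary since the counting property of strong seeded extractors permits unbounded adversaries. By the contradiction hypothesis, $z=b_r$ witnesses $\Pr_{z}[\calA'(r,z)=1]\ge 2^{-m}$ for every $r$, so $\Pr_{r,z}[\calA'(r,z)=1]\ge 2^{-m}$ (here $r$ is uniform over $\{0,1\}^d$ and $z$ over $\{0,1\}^m$). (2) Apply the counting property of strong seeded extractors to $\calA'$ with $k=N-2$ and $\eps=2^{-m-1}$: at most $2^{N-2}$ strings $y\in\{0,1\}^N$ are ``bad'', so for at least a $3/4$ fraction of $y$,
\[\Pr_{r}[\calA'(r,\Ext(y,r))=1] \ge \Pr_{r,z}[\calA'(r,z)=1]-2^{-m-1} \ge 2^{-m-1} > 0,\]
and hence there is a seed $r^\star=r^\star(y)$ with $\calA'(r^\star,\Ext(y,r^\star))=1$, that is, $\calP$ has a polynomial-size proof of $\tau_{\Ext(y,r^\star)}(C_{r^\star})$. (3) By \autoref{claim: parity reduction}, $\tau_y(G)$ simple-parity-reduces to $\tau_{\Ext(y,r^\star)}(C_{r^\star})$, and since $\calP$ is closed under simple parity reductions, $\calP$ has a polynomial-size proof of $\tau_y(G)$. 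So more than $2/3$ of all $y\in\{0,1\}^N$ admit short $\calP$-proofs of $\tau_y(G)$ --- the desired contradiction. Unwinding this: for every sufficiently large input length $n$ there is a good seed $r_n\in\{0,1\}^d$, and the family $\{C_{r_n}\}_n$ is then a non-uniform proof complexity generator secure against $\calP$.

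I expect the main obstacle to be the \emph{size bookkeeping}, which is subtler here than in \autoref{thm: demi-bit to Avoid notin FP}. Closure under simple parity reductions only promises, from a size-$\ell$ $\calP$-proof of $\tau_{\Ext(y,r^\star)}(C_{r^\star})$, a $\calP$-proof of $\tau_y(G)$ of size $p(\ell)$ for a fixed polynomial $p$; one must verify that starting from a polynomial $\ell$ the resulting bound stays polynomial in $|G|=\poly(n)$ (note $|C_r|=\poly(|G|,d)$ is also $\poly(n)$), and that the order of quantification --- over $n$, over the polynomial size bounds, and over $r$ --- is handled so that ``no $r$ works'' really yields a uniform polynomial size bound to feed into the demi-bits hypothesis. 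A second point is the choice of extractor deficiency $N-2$ (rather than $N-1$, as used in \autoref{thm: demi-bit to Avoid notin FP}): with $N-1$ the counting property would only guarantee a $1/2$ fraction of good $y$, which does not contradict demi-bits security (where a $1/3$ fraction of $y$ may legitimately lack short proofs), whereas $N-2$ buys the $3/4>2/3$ margin. Finally, one should note in passing that $m>n$ --- so that $C_r$ genuinely stretches and thus qualifies as a generator --- is ensured by the leftover hash lemma parameter choices of \autoref{thm:lhl}.
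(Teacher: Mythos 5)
Your proposal is correct and follows essentially the same route as the paper's own proof: the same adversary $\calA'(r,z)$ = ``$\calP$ has a short proof of $\tau_z(C_r)$'', the same application of the extractor counting property with deficiency $N-2$ to get a $3/4$ fraction of good $y$, and the same use of \autoref{claim: parity reduction} plus closure under simple parity reductions to convert a short proof of $\tau_{\Ext(y,r)}(C_r)$ into a short proof of $\tau_y(G)$, contradicting demi-bits security. The ``size bookkeeping'' and quantifier-order concerns you flag are real but are handled in the paper exactly as you suggest, by fixing a single polynomial bound $\ell\le\poly(|G|)$ at the outset of the contradiction argument.
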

\begin{proof}
    Suppose for contradiction that for every $r\in \{0, 1\}^d$, there exists a string $z(r) \in \{0, 1\}^m$ such that $\calP$ admits a length-$\ell$ proof of $\tau_{z(r)}(C_r)$, where $\ell \le \poly(|G|)$. For $r \in \{0, 1\}^d$ and $z\in \{0, 1\}^m$, let $\calA'(r, z)$ be the adversary that outputs $1$ if $\calP$ admits a length-$\ell$ proof of $\tau_z(C_r)$ and outputs $0$ otherwise. Since for every $r$, $\calA'(r, z(r)) = 1$, we have
    \[\Pr_{\substack{r\sim \{0, 1\}^d\\z\sim \{0, 1\}^m}}[\calA'(r, z) = 1] \ge 2^{-m}.\]
    Since $\Ext$ is a $(n-2, 2^{-m-1})$-strong extractor, for at least a $3/4$ fraction of $y \in \{0, 1\}^N$, we have
    \[\Pr_{r\sim \{0, 1\}^d}[\calA'(r, \Ext(y, r)) = 1] \ge \Pr_{\substack{r\sim \{0, 1\}^d\\z\sim \{0, 1\}^m}}[\calA'(r, z) = 1] - 2^{-m-1} > 0.\]
    Hence, for such $y \in \{0, 1\}^N$, there exists some $r := r(y)$ such that $\calP$ admits a length-$\ell$ proof of $\tau_z(C_r)$ where $z := \Ext(y, r)$. Since there is a parity reduction from $\tau_y(G)$ to $\tau_z(C_r)$ and $\calP$ is closed under parity reductions, it follows that $\calP$ admits a length-$\poly(\ell)$ proof of $\tau_y(G)$ as well, contradicting the security of $G$ as a demi-bits generator against $\calP$.
\end{proof}

Although super-polynomial lower bounds for $\Res[\oplus]$ remain open, it seems conceivable that we will eventually prove such lower bounds sooner or later. Our results suggest a potential approach for designing proof complexity generators against $\Res[\oplus]$: it suffices to design a \emph{demi-bits generator} against $\Res[\oplus]$ (which might be an easier task) and then apply \autoref{thm:pcg}.

\section{Lower Bounds for Student-Teacher Games}\label{sec:proof-complexity-app}

In this section, we show that $\Avoid$ is hard for Student-Teacher games. In \autoref{sec: separation bounded arithmetic} we prove lower bounds against uniform, polynomial-time Students, which implies a conditional separation between bounded arithmetic theories $\PV_1$ and $\APC_1$. In \autoref{sec: pseudosurjectivity}, we show that demi-bits generators can be transformed into proof complexity generators that are pseudo-surjective.

\subsection{Separating \texorpdfstring{$\PV_1$}{PV1} from \texorpdfstring{$\APC_1$}{APC1}}\label{sec: separation bounded arithmetic}

As discussed in \autoref{sec: bounded arithmetic prelim}, to separate $\PV_1$ from $\APC_1$, it suffices to show that there is no polynomial-time Student that wins the Student-Teacher game for $\Avoid$ in $O(1)$ rounds. In fact, we will show something stronger: Let $k = k(n)$ be a parameter, assuming the existence of demi-bits generators secure against $\AM/_{O(\log k)}$, there is no polynomial-time Student that wins the Student-Teacher game for $\Avoid$ in $k(n)$ rounds.

\begin{theorem}\label{thm:no-avoid^O(1)}
    Let $m, n, k$ be parameters such that $m > n$. Assume there exists a demi-bits generator $G: \{0,1\}^n \to \{0,1\}^N$ secure against $\AM/_{O(\log k)}$. Let $\Ext: \{0,1\}^N \times \{0,1\}^d \to \{0,1\}^m$ be an $(N-1, 2^{-10km})$-strong extractor. Then for every deterministic polynomial-time Student $A$, there is a string $r\in \{0,1\}^d$ and a Teacher such that $A$ fails to solve $\Avoid$ on $C_r$ in $k$ rounds, where $C_r: \{0,1\}^n \to \{0,1\}^m$ is the circuit
    \[C_r(s) := \Ext(G(s), r).\]
\end{theorem}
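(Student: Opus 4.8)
The plan is to transfer the argument of \autoref{thm: demi-bit to Avoid notin FP} to the Student--Teacher setting, with two changes: the single nondeterministic call to an $\Avoid$-solver is replaced by a $k$-round game transcript supplied by Merlin, and the purely nondeterministic distinguisher becomes an Arthur--Merlin protocol that uses $O(\log k)$ bits of advice to name one special round. Suppose for contradiction that some deterministic polynomial-time Student $A$ wins the $k$-round game on $C_r$ against every Teacher, for every $r \in \{0,1\}^d$. For $r$, a round $i \in [k]$ and $z \in \{0,1\}^m$, let $\calA'_i(r,z)$ be the predicate that holds iff there are Teacher responses $q_1,\dots,q_{i-1} \in \{0,1\}^n$ forming a legal partial play of the game on $C_r$ (each $q_j$ a $C_r$-preimage of the Student's $j$-th query) whose $i$-th query equals $z$, and moreover $z \notin \Range(C_r)$. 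Two facts are immediate: (i) running $A$ against the ``lex-least preimage'' Teacher, the game ends within $k$ rounds with a winning move, so for every $r$ there is a round $i$ and a string $z$ with $\calA'_i(r,z) = 1$; and (ii) by definition $\calA'_i(r,z) = 1$ forces $z \notin \Range(C_r)$, hence $\calA'_i(r, C_r(s)) = 0$ for every $s \in \{0,1\}^n$.

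By (i), for every $r$ there is at least one pair $(i,z)$ with $\calA'_i(r,z) = 1$, so $\Pr_{r,z}[\exists i:\calA'_i(r,z) = 1] \ge 2^{-m}$, and averaging over the $k$ rounds yields an index $i^\star = i^\star(n) \in [k]$ with $\Pr_{r,z}[\calA'_{i^\star}(r,z) = 1] \ge 2^{-m}/k$; since $k,d,N \le \poly(m)$ this is far larger than $2^{-10km}$. We hand $i^\star$ to the distinguisher as $O(\log k)$ bits of advice. Applying the extractor fact from \autoref{sec: preliminaries} (any $(N-1,\eps)$-strong extractor fools all but $\le 2^{N-1}$ sources against an arbitrary unbounded predicate) to $\calA'_{i^\star}$ with $\eps = 2^{-10km}$, we conclude that for at least half of all $y \in \{0,1\}^N$ we have $\Pr_r[\calA'_{i^\star}(r,\Ext(y,r)) = 1] \ge 2^{-m}/k - 2^{-10km} > 0$, so such $y$ admit some $r$ with $\calA'_{i^\star}(r, \Ext(y,r)) = 1$.

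The $\AM/_{O(\log k)}$ distinguisher $\calB$, on input $y$ and advice $i^\star$, proceeds as follows: Merlin sends a candidate $r \in \{0,1\}^d$ together with Teacher responses $q_1,\dots,q_{i^\star-1}$; Arthur runs $A$ on $C_r$ feeding it these responses (polynomial time), checks that they form a legal partial play and that the resulting $i^\star$-th query $z$ equals $\Ext(y,r)$, and finally runs a Goldwasser--Sipser set lower-bound sub-protocol (\autoref{lemma:GS-protocol}) certifying $z \notin \Range(C_r)$ --- equivalently, that the game on $C_r$ is forced to terminate at round $i^\star$ on this transcript. For $y = G(s)$ we have $\Ext(y,r) = \Ext(G(s),r) = C_r(s) \in \Range(C_r)$ for every $r$, so by (ii) the acceptance condition $\calA'_{i^\star}(r,\Ext(y,r)) = 1$ is never met, and the Goldwasser--Sipser step rejects with high probability regardless of Merlin; thus $\calB$ accepts $G(s)$ with probability $\le 1/3$. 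For the $\ge 1/2$ fraction of $y$ identified above, an honest Merlin supplies a good $r$ and the legal partial play witnessing $\calA'_{i^\star}(r,\Ext(y,r)) = 1$; all polynomial-time checks pass and the set lower-bound protocol accepts since genuinely $\Ext(y,r) \notin \Range(C_r)$. Hence $\calB$ breaks $G$ as a demi-bits generator secure against $\AM/_{O(\log k)}$, a contradiction, so for some $r$ no deterministic polynomial-time Student wins the $k$-round game on $C_r$.

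The delicate point --- and the reason we need $\AM$ plus the $O(\log k)$ advice, rather than the bare $\NP$ distinguisher of \autoref{thm: demi-bit to Avoid notin FP} --- is the certification that the Student's round-$i^\star$ move lies outside $\Range(C_r)$. The Goldwasser--Sipser protocol only proves set-size bounds up to a factor of two, so non-membership (``$0$ preimages rather than $1$'') cannot be certified directly; the real work is to rephrase it as a genuinely exponential gap. This is exactly where the advice is spent: $i^\star$ must be chosen as the round at which the \emph{most delaying} Teacher is defeated, so that every legal partial play of length $i^\star - 1$ is non-extendable and the round-$i^\star$ move is forced to be a winning (out-of-range) move; otherwise Merlin could cheat using an $r$ for which the game would continue past round $i^\star$. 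Turning this ``forced termination at round $i^\star$'' into a statement whose Merlin-assisted verification is sound against the range yet complete on the good random inputs is the crux; everything else is a routine adaptation of \autoref{thm: demi-bit to Avoid notin FP}.
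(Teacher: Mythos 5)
Your proposal correctly diagnoses the obstacle but does not overcome it, and the declared ``routine adaptation'' is not routine --- there is a genuine gap. The fatal step is the sub-protocol ``certifying $z \notin \Range(C_r)$'' at round $i^\star$. Goldwasser--Sipser (\autoref{lemma:GS-protocol}) proves set-size \emph{lower} bounds with a built-in factor-of-two gap, so it can accept when the target set has $\ge s$ elements and reject when it has $\le s/2$, but it cannot distinguish ``$\Ext(y,r)$ has zero $C_r$-preimages'' from ``$\Ext(y,r)$ has one $C_r$-preimage,'' which is exactly the question. (Running it on the complement, $\{w : C_r(w) \ne \Ext(y,r)\}$, gives a gap of $2^n$ vs.\ $2^n - 1$, which is useless.) You acknowledge this and propose to restore an exponential gap by taking $i^\star$ to be the round ``at which the most delaying Teacher is defeated, so that every legal partial play of length $i^\star - 1$ is non-extendable.'' But that round depends on $r$, while your advice $i^\star$ is a single $y$-independent and $r$-independent index; a cheating Merlin simply picks an $r$ whose game continues past round $i^\star$, exhibits a legal partial play hitting $\Ext(y,r)$ at round $i^\star$, and nothing in the protocol soundly rejects, because the remaining check is a non-membership assertion. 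Your own closing sentence identifies this as ``the crux''; it is, and it is not a routine adaptation of \autoref{thm: demi-bit to Avoid notin FP}.

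The paper's proof avoids the non-membership certification entirely by not building a one-shot distinguisher at all. It proves, by induction on the round index $j$, that there exists a fixed partial Teacher trace $(s_1,\dots,s_j)$ with $\Pr_r\bigl[(s_1,\dots,s_j)\text{ is a valid trace for }A\text{ on }C_r\bigr]\ge 2^{-2jm}$. The inductive step constructs an $\AM/_{O(\log k)}$ protocol (with $j$ as advice) in which Merlin sends $s_1,\dots,s_{j-1}$ and Arthur uses Goldwasser--Sipser to estimate the probability over $r$ that the trace is valid \emph{and} $A$'s $j$-th query equals $\Ext(y,r)$ --- this is a set-size estimation over $\{0,1\}^d$ with an explicit factor-of-two threshold, exactly where Goldwasser--Sipser is sound. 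The extractor gives completeness on half the random $y$'s, so the demi-bits assumption forces the protocol to \emph{lack soundness}; the Goldwasser--Sipser soundness guarantee applied to the accepting $y\in\Range(G)$ then hands you the next trace element $s_j = G^{-1}(y)$ and advances the induction. In other words, the inability to break the demi-bits generator is used \emph{constructively} at every round, rather than being the final contradiction of a single distinguisher. This iterated ``completeness plus forced failure of soundness'' mechanism is the idea missing from your proposal, and it is precisely what makes the argument survive without ever needing to certify $z\notin\Range(C_r)$.
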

\begin{proof}
    Let $A$ denote the Student algorithm where $A(i, C, z_1, \dots, z_{i-1})$ outputs the $i$-th candidate solution. For strings $s_1, \dots, s_j\in \{0,1\}^n$ (where $j\le k$), we say that $(s_1, \dots, s_j)$ is a \emph{valid trace} for $A$ on the input $C_r$ if all of the following are true:
    \begin{compactitem}
        \item $C_r(s_1) = A(1, C_r)$ (that is, $s_1$ is a valid counterexample for $A(1, -)$);
        \item $C_r(s_2) = A(2, C_r, s_1)$ (that is, $s_2$ is a valid counterexample for $A(2, -)$);
        \item $\dots$
        \item and $C_r(s_j) = A(j, C_r, s_1, s_2, \dots, s_{j-1})$ (that is, $s_j$ is a valid counterexample for $A(j, -)$).
    \end{compactitem}
    
    We prove the following stronger claim that implies \autoref{thm:no-avoid^O(1)}:
    
    \begin{claim}\label{claim:induction}
        For every $j\le k$, there exist $s_1, s_2, \dots, s_j \in \{0,1\}^n$ such that
        \[\Pr_{r\sim \{0,1\}^d}[(s_1, \dots, s_j)\text{ is a valid trace for $A$ on the input $C_r$}] \ge 2^{-2jm}.\]
    \end{claim}
    Clearly, \autoref{claim:induction} implies \autoref{thm:no-avoid^O(1)} by setting $j := k$ and noticing that $2^{-2jm} > 0$.

    We prove~\autoref{claim:induction} by induction on $j$. The base case $j=0$ is trivially true. Now we assume the claim is true for $j-1$, which gives strings $s_1,\dots,s_{j-1}$ such that
    \[\Pr_{r\sim \{0,1\}^d}[(s_1, \dots, s_{j-1})\text{ is a valid trace for $A$ on the input $C_r$}] \ge 2^{-2(j-1)m}.\]
    Consider the following $\AM/_{O(\log k)}$ protocol that attempts to break the demi-bits generator $G$. This protocol has the index $j$ hardwired as advice but is otherwise uniform.

\begin{algorithm2e}[H]
      \caption{The $\AM/_{O(\log k)}$ protocol $\mathcal{P}$ breaking demi-bits generator $G$} 
      \label{alg:am}
      \SetKwInput{KwSetting}{Setting}
	\SetKwInput{KwPara}{Parameters}
	\SetKwInput{KwAssumption}{Assumption}
	\SetKwInput{KwAdvice}{Advice}
	\SetKw{KwAccept}{accept}
	\SetKw{KwReject}{reject}
	
	\KwIn{A string $y\in \{0,1\}^N$.}
    	$\Prover$ sends $s_1,\dots,s_{j-1}$\label{line: prover sends s}\; 
        $\Prover$ and $\Verifier$ run the Goldwasser--Sipser protocol to estimate\label{line: prover and verifier runs Goldwasser Sipser}
        \begin{algomathdisplay}
            p := \Pr_{r\sim \{0,1\}^d}\left[\begin{aligned}
                &(s_1, \dots, s_{j-1})\text{ is a valid trace for $A$ on the input $C_r$}\\
                \text{and }&\Ext(y, r) = A(j, C_r, s_1, \dots, s_{j-1}).
            \end{aligned}\right]
        \end{algomathdisplay}
    	If $p \ge 2^{-(2j-1)m-1}$ then $\Verifier$ accepts; if $p \le 2^{-(2j-1)m-2}$ then $\Verifier$ rejects\;
\end{algorithm2e}

\paragraph{Completeness of $\mathcal{P}$.} We show that for at least a $1/2$ fraction of $y$, there is a $\Prover$ such that the $\Verifier$ accepts w.p.~$\ge 2/3$ in $\calP$. In the first round, the honest $\Prover$ sends $(s_1, \dots, s_{j-1})$ as guaranteed by the induction hypothesis. Recall that this means
\[\Pr_{r\sim \{0,1\}^d}[(s_1, \dots, s_{j-1})\text{ is a valid trace for $A$ on the input $C_r$}] \ge 2^{-2(j-1)m}.\]

Let $\mathsf{Test}(r, z) = 1$ if $(s_1, \dots, s_{j-1})$ is a valid trace for $A$ on the input $C_r$ and $z = A(j, C_r, s_1, \dots, s_{j-1})$, and $\mathsf{Test}(r, z) = 0$ otherwise. Clearly, we have
\[\Pr_{r\sim \{0,1\}^d, z\sim \{0,1\}^m}[\mathsf{Test}(r, z) = 1] \ge 2^{-2(j-1)m} / 2^m = 2^{-(2j-1)m}.\]
Since $\Ext$ is an $(N-1,2^{-10km})$-strong extractor, for $\ge 1/2$ fraction of $y$'s, it holds that 
\[\Pr_{r\sim \{0,1\}^d}[\mathsf{Test}(r, \Ext(y, r)) = 1] \ge \Pr_{r\sim \{0,1\}^d,z\sim \{0,1\}^m}[\mathsf{Test}(r,z) = 1] - 2^{-10km}\ge 2^{-(2j-1)m-1}.\]

It follows that there is a $\Prover$ for the Goldwasser--Sipser protocol in \autoref{line: prover and verifier runs Goldwasser Sipser} of~\autoref{alg:am} such that the verifier accepts with probability at least $2/3$.

\paragraph{Employing the lack of soundness.} Since $G$ is a demi-bits generator secure against $\AM/_{O(\log k)}$ adversaries, $\mathcal{P}$ does not have the soundness for all sufficiently large $n$. In other words, there is a $\Prover^*$ that makes the $\Verifier$ accepts some $y\in\Range(G)$ with probability $>1/3$. Fix such a string $y$, let $s_j$ be any $n$-bit string such that $G(s_j) = y$, and let $s_1, \dots, s_{j-1} \in \{0,1\}^n$ be the message sent in \autoref{line: prover sends s} (of~\autoref{alg:am}) by $\Prover^*$ on the input $y$.

Since $\Verifier$ accepts with probability $>1/3$, by the soundness of the Goldwasser--Sipser Protocol (\autoref{lemma:GS-protocol}), we have that
\[\Pr_{r\sim \{0,1\}^d}\left[\begin{aligned}
                &(s_1, \dots, s_{j-1})\text{ is a valid trace for $A$ on the input $C_r$}\\
                \text{and }&\Ext(y, r) = A(j, C_r, s_1, \dots, s_{j-1}).
            \end{aligned}\right] \ge 2^{-j(2m+1)-2}.\]
Recall that $\Ext(y, r) = C_r(s_j)$, hence the above condition inside $\Pr_{r\sim \{0,1\}^d}[\cdot]$ means exactly that $(s_1, \dots, s_j)$ is a valid trace for $A(C_r)$. This implies \autoref{claim:induction} for $j$.
\end{proof}

We remark that the parameters we obtained in \autoref{thm:no-avoid^O(1)} are (almost) tight in the following sense. \autoref{thm:no-avoid^O(1)} showed that (under plausible assumptions) for every parameter $k\le \poly(n)$, there is no deterministic polynomial-time Student that wins the Student-Teacher game for $\Avoid$ in $k$ rounds, when given an $\Avoid$ instance of size $s = \poly(k, n) > k$. On the other hand, under plausible derandomization assumptions, for every size parameter $s$, there exists a deterministic polynomial-time Student that wins the game on size-$s$ circuits within $k = \poly(s, n) > s$ rounds~\cite[Appendix A]{DBLP:conf/stoc/IlangoLW23}. 

Finally, setting $k=O(1)$ in~\autoref{thm:no-avoid^O(1)}, we obtain the following separation:

\ThmPVvsAPC*

\subsection{From Demi-Bits to Pseudo-Surjectivity}\label{sec: pseudosurjectivity}

\begin{theorem}\label{thm:pseudo-pcg-fix}
    Let $G:\{0,1\}^n\to\{0,1\}^N$ be a demi-bits generator secure against $\NP/_{\poly}$, $k\in\N$, and $\Ext:\{0,1\}^N\times\{0,1\}^d\to\{0,1\}^m$ be an $(N-1,\eps)$-strong linear extractor for $\eps:=2^{-10km}$. ($k,d,N\leq\poly(m)$.) Then for every non-uniform propositional proof system $\calP$, there is a string $r\in\{0,1\}^d$ such that the circuit $C_r: \{0,1\}^n\to \{0,1\}^m$ defined as
    \[C_r(s) := \Ext(G(s), r)\]
    is a non-uniform $k$-round pseudo-surjective proof complexity generator secure against $\calP$.
\end{theorem}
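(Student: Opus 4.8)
The plan is to merge the extractor-based ``average-case to best-case'' argument behind \autoref{thm:pcg} with the round-by-round Student--Teacher construction behind \autoref{thm:no-avoid^O(1)}, and to run the whole argument inside $\NP$. The point that makes this work — and that removes both the closure hypothesis on $\calP$ used in \autoref{thm:pcg} and any need for $\AM$-type adversaries — is that $\calP$ is a \emph{fixed} (non-uniform) nondeterministic algorithm, so ``$\calP$ has a size-$\le s$ proof of $\varphi$'' is an $\NP/_{\poly}$ predicate; hence an $\NP/_{\poly}$ adversary against $G$ may search for $\calP$-proofs on its own. Throughout, $C_r(s) := \Ext(G(s),r)$ is computable by a circuit of size $\poly(n,k)$.

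I would argue by contradiction: suppose that for every $r \in \{0,1\}^d$ the circuit $C_r$ is \emph{not} $k$-round pseudo-surjective for $\calP$, i.e.\ there are Student circuits $\vec B^r = (B^r_1,\dots,B^r_k)$ and a $\calP$-proof of size $s = n^{O(1)}$ of the formula $\bigvee_{i=1}^k \tau(C_r)_{B^r_i(q_1,\dots,q_{i-1})}(q_i)$. Call a tuple $(q_1,\dots,q_j)\in(\{0,1\}^n)^j$ an \emph{$r$-extendable trace} if some such $\calP$-provable $k$-round winning Student $\vec B$ for $C_r$ has $(q_1,\dots,q_j)$ as a consistent partial play, i.e.\ $C_r(q_i) = B_i(q_1,\dots,q_{i-1})$ for all $i\le j$. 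For fixed $q^*_1,\dots,q^*_j$, the set of $z \in \{0,1\}^m$ for which there is a $\calP$-provable winning $\vec B$ with $(q^*_1,\dots,q^*_j)$ consistent \emph{and} $B_{j+1}(q^*_1,\dots,q^*_j)=z$ is recognizable in $\NP$ (guess $\vec B$, the proof, and verify); call this predicate $\calA^{(j)}(r,z)$. Note that $\calA^{(j)}(r,z)=1$ forces $z\notin\Range(C_r)$: the proved formula is a tautology, and substituting the consistent play $q^*_1,\dots,q^*_j$ together with any $q_{j+1},\dots,q_{k}$ kills the first $j$ disjuncts, so the last relevant disjunct must hold.

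The core is an induction on $j$ paralleling \autoref{claim:induction}: for every $0 \le j \le k-1$ there exist $q^*_1,\dots,q^*_j$ with $\Pr_r[(q^*_1,\dots,q^*_j)\text{ is }r\text{-extendable}] \ge 2^{-2jm}$. The base case $j=0$ is exactly the contradiction hypothesis. For the step, each $r$ with $(q^*_1,\dots,q^*_j)$ $r$-extendable contributes at least one $z$ with $\calA^{(j)}(r,z)=1$, so $\Pr_{r,z}[\calA^{(j)}(r,z)=1]\ge 2^{-2jm}\cdot 2^{-m} > \eps = 2^{-10km}$ (using $j\le k-1$). The extractor fact then yields at least $2^{N-1}$ strings $w\in\{0,1\}^N$ with $\Pr_r[\calA^{(j)}(r,\Ext(w,r))=1]\ge 2^{-(2j+1)m}-\eps\ge 2^{-2(j+1)m}$, and one invokes the security of $G$ against $\NP/_{\poly}$ to locate such a ``good'' $w$ \emph{inside} $\Range(G)$, say $w=G(q^*_{j+1})$; since $\Ext(G(q^*_{j+1}),r)=C_r(q^*_{j+1})$, any witness for $\calA^{(j)}$ extends to an $r$-extendable trace $(q^*_1,\dots,q^*_{j+1})$, and this occurs for a $\ge 2^{-2(j+1)m}$ fraction of $r$. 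At $j=k-1$ we obtain a play $q^\star=(q^*_1,\dots,q^*_{k-1})$ that is $r$-extendable for a $\ge 2^{-2(k-1)m}$ fraction of seeds. Finally, hardwiring $q^\star$ as advice, the $\NP/_{\poly}$ adversary $\Adv(y)$ guesses $r$, a $k$-round Student $\vec B$, and a proof $\pi$, and accepts iff $\pi$ is a size-$\le s$ $\calP$-proof of $\bigvee_{i=1}^k \tau(C_r)_{B_i(q_1,\dots,q_{i-1})}(q_i)$, the play $q^\star$ is consistent for $\vec B$ against $C_r$, and $B_k(q^*_1,\dots,q^*_{k-1})=\Ext(y,r)$. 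If $y=G(s)$ then $\Ext(y,r)=C_r(s)\in\Range(C_r)$, so the previous ``forces $z\notin\Range(C_r)$'' observation (with $z=\Ext(y,r)$) shows $\Adv$ rejects all of $\Range(G)$; on the other hand the length-$(k-1)$ bound and one more use of the extractor fact give $\Pr_y[\Adv(y)=1]\ge 1/2$, contradicting the security of $G$. Hence some $r$ makes $C_r$ $k$-round pseudo-surjective for $\calP$, and taking the trivial one-round Student $B_1=y$ shows it is in particular a (non-uniform) proof complexity generator.

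The step I expect to be the main obstacle is maintaining the \emph{quantitative} lower bound $2^{-2jm}$ on the measure of good seeds across the induction: the naive $\NP/_{\poly}$ adversary only certifies that \emph{some} seed $r$ works, which merely gives a $2^{-d}$ fraction and collapses the invariant. In \autoref{thm:no-avoid^O(1)} this quantitative control was provided for free by the Goldwasser--Sipser estimator — that is, by working with $\AM$ adversaries — so the delicate work here is to recover it using only $\NP$-reasoning, by designing the adversary and choosing the advice so that the security of $G$ returns a preimage whose image is good in the strong, measure-sense, not merely good for a single seed. The remaining pieces — the parameter arithmetic (e.g.\ checking $(2j+1)m<10km$ so the extractor error $\eps$ is negligible at every step), the substitution argument that makes $\Adv$ reject $\Range(G)$, and the implication from $k$-round pseudo-surjectivity to ordinary proof-complexity-generator hardness — are routine.
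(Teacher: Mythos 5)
There is a genuine gap, and you correctly put your finger on it; I do not see how to close it along the route you propose. In the inductive step you extract $\ge 2^{N-1}$ ``good'' strings $w$ with $\Pr_r[\calA^{(j)}(r,\Ext(w,r))=1]\ge 2^{-2(j+1)m}$, and then you say you ``invoke the security of $G$ to locate a good $w$ inside $\Range(G)$.'' But the security hypothesis only says that every $\NP/_{\poly}$-recognizable set that avoids $\Range(G)$ is small; it gives no purchase on a large set that is \emph{not} $\NP/_{\poly}$-recognizable. The set of good $w$ is defined by a measure lower bound over $r$, which is a counting condition, not an $\NP$ condition. The only $\NP/_{\poly}$ event naturally attached to $w$ is $\exists r:\calA^{(j)}(r,\Ext(w,r))=1$, but that set does \emph{not} avoid $\Range(G)$ (for $j<k-1$ a range point $w=G(q)$ can perfectly well satisfy it: it only says $(q_1^*,\dots,q_j^*,q)$ is $r$-extendable for one $r$, which is exactly the event you are trying to make dense, not avoid). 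So security gives you a preimage good for a \emph{single} seed, the invariant drops to roughly $2^{-d}$ instead of $2^{-2(j+1)m}$, and the induction collapses — precisely the failure mode you yourself flagged. I also note a smaller overclaim: ``$\calA^{(j)}(r,z)=1$ forces $z\notin\Range(C_r)$'' is false for $j<k-1$ (the Teacher may still respond to $B_{j+1}$ and later rounds may rescue the Student); it holds only at $j=k-1$, which is fortunately the only place you use it.

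The paper's proof sidesteps exactly this issue. Rather than walking an induction that must maintain a measure bound constructively, it defines, \emph{offline and non-constructively}, the potentials
\[
\Phi_j := \max_{(s_1,\dots,s_j)}\ \Pr_{r}\bigl[\ \exists\ \calP\text{-provable winning } B \text{ for } C_r \text{ with } B_i(s_1,\dots,s_{i-1})=C_r(s_i)\ \forall i\le j\ \bigr],
\]
observes $\Phi_0=1$ and $\Phi_k=0$ (the latter by soundness of $\calP$, not by any ``out of range'' claim), finds a level $j$ with a multiplicative gap $\Phi_j\cdot 2^{-m}-\eps>2\Phi_{j+1}$, hardwires the maximizing tuple, and then builds a \emph{single} adversary that runs the Goldwasser--Sipser set-lower-bound protocol once to distinguish $p(y)\ge\Phi_j 2^{-m}-\eps$ from $p(y)\le\Phi_{j+1}$. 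The Range-of-$G$ rejection follows from the definition of $\Phi_{j+1}$ (as a maximum), and the high acceptance probability follows from one use of the extractor fact, exactly the step you carry out at $j=k-1$. The resulting adversary is an $\AM/_{\poly}$ machine, which is $\subseteq\NP/_{\poly}$ by Adleman-style derandomization (amplify, union-bound over $\Range(G)$, fix randomness) — so there is no need to avoid $\AM$ reasoning; it only needs to be invoked once. If you want to keep your inductive formulation, you would need a Goldwasser--Sipser style estimate (or an equivalent counting mechanism) at each step, which is precisely what you were trying to do without, and which the paper packages more economically into the single gap level $j$.
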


\begin{proof}
    Suppose, for contradiction, that such an $r\in \{0,1\}^d$ does not exist. Then, for any $r\in\{0,1\}^d$, there exist Student circuits
    \[
        B^{(r)}=\left\{B_i^{(r)}\colon \{0,1\}^{n(i-1)}\to\{0,1\}^m\right\}_{i\in[k+1]}
    \]
    such that $\calP$ admits a proof of
    \[
        \bigvee_{i=1}^k\tau(C_r)_{B_{i}(q_1,q_2,\dots,q_{i-1})}(q_i).
    \]
    (i.e., $\calP$ can prove that $B^{(r)}$ wins the Student-Teacher game on $C_r$.)

    Now we attempt to break the demi-bits generator $G$. For $j=0,1,\dots,k$, define
    \[
        \Phi_j:=\max_{(s_1,s_2,\dots,s_j)\in\{0,1\}^{nj}}\Pr_{r\sim\{0,1\}^m}\left[\begin{aligned}
            &\text{$\exists$ Student $B$ such that:}\\
            &\text{(1) $\calP$ can prove that $B$ wins the Student-Teacher game on $C_r$;}\\
            &\text{(2) $B_i(s_1,\dots,s_{i-1})=C_r(s_i)$ for all $i\in[j]$.}
        \end{aligned}\right].
    \]
    (Item (2) says that the history of the Student-Teacher game in the first $i$ rounds is exactly $s_1,\dots,s_j$.) We make the following claims on the values of $\Phi_0$ and $\Phi_k$:
    \begin{itemize}
        \item $\Phi_0=1$: When $j=0$, item (2) obviously holds, and item (1) holds by our assumption that $C_r$ is not a pseudo-surjective proof complexity generator;
        \item $\Phi_k=0$: When $j=k$, for any $r,B$ items (1) and (2) cannot hold simultaneously. This is because (2) implies that $B$ loses the Student-Teacher game, which contradicts (1).
    \end{itemize}\par
    Simple calculations show that there exists $j\in\{0,1,\dots,k-1\}$ such that 
    \[
        \Phi_j\cdot 2^{-m}-\eps>2\cdot\Phi_{j+1}.
    \]
    We use such $j$ to break the demi-bits generator $G$. Let $(s_1^*,\dots,s_j^*)$ be the tuple such that the maximum is achieved in the definition of $\Phi_j$, i.e.,
    \[
        \Phi_j=\Pr_{r\sim\{0,1\}^m}\left[\begin{aligned}
            &\text{$\exists$ Student $B$ such that:}\\
            &\text{(1) $\calP$ can prove that $B$ wins the Student-Teacher game on $C_r$;}\\
            &\text{(2) $B_i(s_1^*,\dots,s_{i-1}^*)=C_r(s_{i}^*)$ for all $i\in[j]$.}
        \end{aligned}\right].
    \]
    
    Consider the following algorithm: on an input $y\in\{0,1\}^n$, let
    \[
        p(y):=\Pr_{r\sim\{0,1\}^m}\left[\begin{aligned}
            &\text{$\exists$ Student $B$ such that:}\\
            &\text{(1) $\calP$ can prove that $B$ wins the Student-Teacher game on $C_r$;}\\
            &\text{(2) $B_i(s_1^*,\dots,s_{i-1}^*)=C_r(s_i^*) = \Ext(G(s_{i}^*),r)$ for all $i\in[j]$,}\\
            &\text{~~~~~and $B_{j+1}(s_1^*,\dots,s_j^*)=\Ext(y,r)$.}
        \end{aligned}\right].
    \]
    Our algorithm accepts $y$ if $p(y)\geq\Phi_j\cdot2^{-m}-\eps$, and rejects if $p(y)\leq\Phi_{j+1}$. This can be implemented by the Goldwasser--Sipser set lower bound protocol since $\Phi_j\cdot2^{-m}-\eps>2\cdot\Phi_{j+1}$, and the condition inside $\Pr_{r\sim\{0,1\}^m}[\cdot]$ is certifiable in polynomial time with the help of a prover. Finally, we prove that this algorithm breaks demi-bits generator $G$:
    \begin{itemize}
        \item {\bf For any $y\in \Range(G)$, we have $p(y)\leq\Phi_{j+1}$:}\par
        Suppose $y=G(s)$. Then
        \begin{align*}
            p(y)&=\Pr_{r\sim\{0,1\}^m}\left[\begin{aligned}
                &\text{$\exists$ student $B$ such that:}\\
                &\text{(1) $\calP$ can prove that $B$ wins the Student-Teacher game on $\Ext(G(\cdot),r)$;}\\
                &\text{(2) $B_i(s_1^*,\dots,s_{i-1}^*)=\Ext(G(s_{i}^*),r)$ for all $i\in[j]$,}\\
                &\text{~~~~~and $B_{j+1}(s_1^*,\dots,s_j^*)=\Ext(G(s),r)$.}
            \end{aligned}\right]\\
            &\leq\Phi_{j+1}.
        \end{align*}
        Where the $\leq$ in the second line follows from the definition of $\Phi_{j+1}$.
        \item {\bf For half of $y\in \{0,1\}^n$, we have $p(y)\geq\Phi_j\cdot2^{-m}-\eps$:}\par
        For simplicity, we use ``$\mathsf{Test}(r,\Ext(y,r))$'' to denote the condition inside $\Pr_{r\sim\{0,1\}^m}[\cdot]$ in the definition of $p(y)$. We have:
        \[
            \Pr_{\substack{r\sim\{0,1\}^m\\z\sim\{0,1\}^m}}\left[\mathsf{Test}(r,z)\right]=\Phi_j\cdot 2^{-m}.
        \]
        By the definition of strong extractors, for half of $y\in\{0,1\}^n$, we have
        \[
            \Pr_{\substack{r\sim\{0,1\}^m}}\left[\mathsf{Test}(r,\Ext(y,r))\right]\geq\Pr_{\substack{r\sim\{0,1\}^m\\z\sim\{0,1\}^m}}\left[\mathsf{Test}(r,z)\right]-\eps=\Phi_j\cdot 2^{-m}-\eps.\qedhere
        \]
    \end{itemize}
\end{proof}

\begin{corollary}\label{cor: pseudo-surjective}
    Suppose for any parameter $N\leq \poly(n)$, there exists a demi-bits generator $G: \{0,1\}^n \to \{0,1\}^N$ secure against $\NP/_\poly$. Then for every non-uniform propositional proof system $\calP$ and any parameters $k\leq\poly(n)$ and $n<m<\poly(n)$, there is a circuit $C: \{0,1\}^n \to \{0,1\}^m$ of size $\poly(n)$ such that $C$ is a $k$-round pseudo-surjective proof complexity generator secure against $\calP$. %
\end{corollary}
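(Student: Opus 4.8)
The statement is a direct corollary of \autoref{thm:pseudo-pcg-fix}, obtained by uniformizing over all admissible parameter settings. The plan is as follows. Fix the target stretch $m = m(n)$ and the round count $k = k(n)$, both bounded by $\poly(n)$. I would set the demi-bits output length to $N := m + 20km + 1$ and the extractor seed length to $d := 2N$; since $km \le \poly(n)$, both $N$ and $d$ are $\poly(n)$ (and in particular $\poly(m)$, as $n < m$). Next, I would invoke the Leftover Hash Lemma (\autoref{thm:lhl}) with input length $N$, output length $m$, key length $d \ge 2N$, min-entropy parameter $N - 1$, and error $\eps := 2^{-10km}$: the required relation $m = (N-1) - 2\log(1/\eps)$ holds by the choice of $N$, and $N \ge m$, so \autoref{thm:lhl} supplies an $\F_2$-linear $(N-1, 2^{-10km})$-strong seeded extractor $\Ext: \{0,1\}^N \times \{0,1\}^d \to \{0,1\}^m$.

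With these parameters in hand, I would apply the hypothesis of the corollary (with the polynomial bound $N \le \poly(n)$) to obtain a demi-bits generator $G: \{0,1\}^n \to \{0,1\}^N$ secure against $\NP/_\poly$; by definition this $G$ is computable by circuits of size $\poly(n)$. Feeding $G$, $k$, and $\Ext$ into \autoref{thm:pseudo-pcg-fix} — whose hypotheses are exactly met — yields a string $r \in \{0,1\}^d$ for which the circuit $C := C_r$ defined by $C(s) := \Ext(G(s), r)$ is a non-uniform $k$-round pseudo-surjective proof complexity generator secure against $\calP$. Finally, I would record that $C$ has the claimed shape: it maps $n$ bits to $m$ bits with $n < m$, and its size is $\poly(n)$ because it composes the $\poly(n)$-size circuit for $G$ with the linear map $\Ext(\cdot, r)$, which, realized with fan-in-$2$ $\oplus$ gates, has size $\poly(N) = \poly(n)$.

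I do not expect a genuine obstacle here — the entire mathematical content resides in \autoref{thm:pseudo-pcg-fix}, and the corollary merely selects parameters. The one point deserving care is verifying that the extractor's parameter demands can all be met while keeping every quantity polynomial in $n$: concretely, one needs $N$ large enough ($N \ge m + 20km + 1$) to drive the Leftover Hash error down to $2^{-10km}$ and $d \ge 2N$ for the $\F_2$-linear hash family to exist, and both constraints are compatible with $N, d \le \poly(n)$ precisely because $k$ and $m$ are assumed polynomial in $n$ and the hypothesis provides demi-bits generators of every polynomial stretch $N \le \poly(n)$.
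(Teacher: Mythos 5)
Your proof is correct and follows essentially the same route as the paper: both invoke \autoref{thm:pseudo-pcg-fix} together with the Leftover Hash Lemma (\autoref{thm:lhl}) to supply the linear $(N-1, 2^{-10km})$-strong extractor, and then read off $C_r$. The only difference is cosmetic — you set $N = m + 20km + 1$ to satisfy the LHL constraint with equality, while the paper takes $N = 100km$ (which also works, with slack) — but both lie within the $\poly(n)$ budget allowed by the hypothesis.
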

\begin{proof}
    In \autoref{thm:pseudo-pcg-fix}, let $N:=100km$ and $\Ext\colon\{0,1\}^N\times\{0,1\}^{O(km)}\to\{0,1\}^{m}$ be an $(N-1,2^{-10km})$-strong seeded extractor guaranteed by \autoref{thm:lhl}. Then there exists $r\in\{0,1\}^{O(km)}$ such that $C_r:=\Ext(G(\cdot),r)$ is a $k$-round pseudo-surjective proof complexity generator secure against $\calP$. %
\end{proof}

\section{Hardness of Range Avoidance from Predictable Arguments}\label{sec:ILW-interpretation}

Ilango, Li, and Williams~\cite{DBLP:conf/stoc/IlangoLW23} proved $\Avoid\not\in\FP$ assuming the existence of JLS-secure $\iO$ and $\NP\ne\coNP$. Given our main result that the existence of demi-bits generators implies the hardness of $\Avoid$, it is natural to ask whether the $\iO$ assumption used in~\cite{DBLP:conf/stoc/IlangoLW23} can be weakened to a ``Minicrypt'' assumption. In particular, we conjecture:
\begin{conjecture}\label{conj: hardness of Avoid from OWF}
    $\Avoid\not\in\FP$ follows from the existence of \emph{one-way functions} and $\NP\ne\coNP$.
\end{conjecture}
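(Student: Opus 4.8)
\autoref{conj: hardness of Avoid from OWF} is an open problem, so what follows is a proposed line of attack rather than a proof. The overall plan is to mimic the Ilango--Li--Williams argument --- in the predictable-arguments reinterpretation developed in this section --- while replacing indistinguishability obfuscation with Minicrypt primitives. First, from a one-way function $f$ one extracts the usual Minicrypt toolkit: a pseudorandom generator, a puncturable pseudorandom function family $\{F_k\}$, and symmetric-key encryption. Second, one would design from these a family of $\Avoid$ instances $\{G_w\}$, indexed by public parameters $w$ that are handed out together with the instance, so that the range of $G_w$ is a computationally opaque ``puzzle'' and any nonoutput of $G_w$ is a self-contained certificate that some $w$-indexed object is invalid. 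Third, one feeds a hypothetical $\FP$-algorithm $\calA$ for $\Avoid$ the instance $G_w$, sets $y^\star(w) := \calA(G_w)$, and notes that $w \mapsto y^\star(w)$ is a deterministic polynomial-time source of certified nonoutputs. Finally, one would argue that such a source breaks the security of $f$, using $\NP \ne \coNP$ to supply whatever nonrelativizing ingredient the ILW argument currently draws from $\iO$ --- for instance, attaching to $w$ an $\NP$-proof that $G_w$ genuinely has a proper range, so that verifying $\calA$'s behavior requires no nondeterministic guessing.

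\textbf{The step I expect to be the main obstacle} is precisely the one where ILW invoke $\iO$. In their construction the range of the hard instance is the set of PRF-randomized obfuscations of a fixed program, and the security reduction must ``swap'' a real PRF value for an independent uniform one \emph{inside} the obfuscated circuit while leaving the instance's input--output behavior --- and hence the would-be avoider's output --- unchanged; this program-hiding swap is exactly what $\iO$ provides and what puncturable PRFs alone do not. Any attempt to run the reduction from a plain PRG or PRF instead meets the obstruction already visible in the proof of \autoref{thm: demi-bit to Avoid notin FP}: to exploit an $\FP$-avoider one is forced to vary the instance over a public seed $r$ and then ask ``for which $r$ does the avoider fail?'', and this existential quantification over $r$ turns the resulting distinguisher into an $\NP$ predicate --- i.e., one needs the generator to be secure against \emph{nondeterministic} adversaries, which is exactly the demi-bits assumption and is not known to follow from one-way functions. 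So the crux is to realize the ILW swap (or some functional substitute for it) \emph{without} $\iO$ and \emph{without} the reduction's adversary becoming nondeterministic.

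Two features distinguish \autoref{conj: hardness of Avoid from OWF} from \autoref{thm: main demi-bit to Avoid notin FP} and might be exploited to get around this: the conjecture targets only deterministic ($\FP$) avoiders rather than $\SearchNP$ ones, and it is allowed to assume $\NP \ne \coNP$. One concrete idea is to make $G_w$ ``self-certifying'' --- bundling with $w$ an $\NP$-witness that $G_w$ is a sound instance --- so that checking $\calA$'s output is polynomial time and the only remaining hardness is that a deterministic avoider together with a sound instance would, via $\NP \ne \coNP$, decide a $\coNP$-hard language. A second idea is to route the argument through meta-complexity: $\Avoid \in \FP$ deterministically produces hard truth tables, which (contrapositive of the natural-proofs connection) is in tension with the one-wayness of $f$; the challenge is to make that tension quantitative and to pinpoint where $\NP \ne \coNP$ is used. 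I expect any full resolution to require a genuinely new idea at the ``swap without $\iO$'' step, since realizing $\iO$-like functionality from Minicrypt is a well-known barrier; a realistic intermediate target would be to prove the conjecture from an assumption strictly weaker than $\iO$ but possibly stronger than one-way functions, such as witness encryption for $\NP$.
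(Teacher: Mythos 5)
You have correctly recognized that this is an open conjecture --- the paper itself states ``Unfortunately, we are not able to prove \autoref{conj: hardness of Avoid from OWF}'' --- so there is no proof to compare against, and your framing it as a line of attack rather than a proof is appropriate. Your diagnosis of the obstacles also tracks the paper's own discussion in \autoref{sec:ILW-interpretation} quite closely: you correctly observe that the only role of $\iO$ in Ilango--Li--Williams is to make the honest avoider's output act as a certificate while hiding the instance's structure, that quantifying over a public seed $r$ turns the reduction's adversary nondeterministic (which is exactly why the present paper needs demi-bits rather than ordinary PRGs), and that a natural intermediate target is \emph{witness encryption} for $\NP$ rather than full $\iO$. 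That last point is precisely what the paper establishes: \autoref{sec:ILW-interpretation} reinterprets ILW through predictable argument systems and shows that witness encryption already suffices, and then poses the question of whether one-way functions alone could yield such an argument system as the concrete open sub-problem. Your second ``self-certifying instance'' idea is essentially this predictable-arguments formulation. So your analysis is consistent with, and anticipates, the paper's perspective; there is no gap to flag, only the shared conclusion that closing the distance from witness encryption down to one-way functions requires a genuinely new idea.
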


Unfortunately, we are not able to prove \autoref{conj: hardness of Avoid from OWF}. Instead, in this section, we present an alternative interpretation of~\cite{DBLP:conf/stoc/IlangoLW23}. Although the results and proofs are not new, we hope that our new perspective helps make progress towards proving~\autoref{conj: hardness of Avoid from OWF}, or in general, basing $\Avoid\not\in\FP$ from minimal assumptions.

\paragraph{Witness encryptions.} Instead of $\iO$, what~\cite{DBLP:conf/stoc/IlangoLW23} actually needs is \emph{witness encryption}~\cite{GargGSW13}. Let $L\in \NP$ (usually we take $L$ to be an $\NP$-complete language such as $\SAT$). A \emph{witness encryption scheme} for $L$ is a pair of algorithms $(\Enc, \Dec)$ with the following interface:\footnote{Witness encryption schemes as defined in~\cite{GargGSW13} can only encrypt a one-bit message $m\in \{0, 1\}$. To obtain a witness encryption scheme that can encrypt arbitrarily long message, one can simply encrypt each message bit independently. The security of this new scheme follows easily from a hybrid argument.}
\begin{itemize}
	\item Given an instance $x \in \{0, 1\}^n$, a message $m$, a security parameter $1^\lambda$, and some random coins $r$, $\Enc(x, b, 1^\lambda; r)$ outputs the encryption of $m$ under $x$.
	\item Given an instance $x \in \{0, 1\}^n$, a witness $w$ that $x\in L$, a ciphertext $ct$, and the security parameter $1^\lambda$, $\Dec(x, w, ct, 1^\lambda)$ outputs the encrypted message $m$. (We assume that $\Dec$ is deterministic.)
\end{itemize}

We require $(\Enc, \Dec)$ to be \emph{correct} and \emph{$2^{-\lambda^\eps}$-secure} for some constant $\eps > 0$; here we say that
\begin{itemize}
    \item $(\Enc, \Dec)$ is \emph{(perfectly) correct} if for every $x\in L$, witness $w$ for $x$, bit $b$, and randomness $r$, it is the case that 
    \[\Dec(x, w, \Enc(x, b, 1^\lambda; r), 1^\lambda) = b.\]
    \item $(\Enc, \Dec)$ is \emph{$\delta(\cdot)$-secure} if for every $x\not\in L$, message $m$, and every non-uniform adversary $\calA$ of size $\poly(n)$,
    \[\mleft|\Pr[\calA(\Enc(x, m, 1^\lambda)) = 1] - \Pr[\calA(\calU) = 1]\mright| < \delta(n).\]
\end{itemize}

\paragraph{Hardness of $\Avoid$ from predictable arguments.} Witness encryption implies the following \emph{predictable argument system}~\cite{DBLP:conf/pkc/FaonioN017} for $L$. Let $x\in \{0, 1\}^n$ be an instance known to both the prover and the verifier. Recall that in an argument system, the prover is computationally bounded. If $x\in L$, then the prover also has access to a witness $w \in \{0, 1\}^m$. Let $\ell > m$ be a parameter and $\lambda := \ell^{2/\eps}$.
\begin{itemize}
	\item First, the $\Verifier$ picks a random message $m\sim \{0, 1\}^\ell$, encrypts $m$ as $ct \sim \Enc(x, m, 1^\lambda)$, and sends $ct$ to the $\Prover$.
	\item Then the $\Prover$ sends a message $m'$. In particular, the honest $\Prover$ sends $m' \gets \Dec(x, w, ct, 1^\lambda)$.
	\item The $\Verifier$ accepts if and only if $m' = m$.
\end{itemize}

Now, suppose that there is a deterministic algorithm $\calA$ solving $\Avoid$, we show that $L \in \coNP$. The idea is to \underline{use $\calA$ as the prover in the above argument system}. In particular, let $C_{x, ct}: \{0, 1\}^m \to \{0, 1\}^\ell$ denote the circuit that given $w\in \{0, 1\}^m$ as input, outputs $\Dec(x, w, ct, 1^\lambda)$. Upon receiving $ct$, $\Prover$ always sends $\calA(C_{x, ct})$ to $\Verifier$.

\begin{claim}
	If $x\in L$, then the $\Prover$ will never convince the $\Verifier$.
\end{claim}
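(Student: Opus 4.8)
The plan is to derive the claim directly from two facts, neither of which requires any security assumption: the \emph{perfect correctness} of the witness encryption scheme $(\Enc, \Dec)$, and the defining property of a solver for $\Avoid$. In particular, the argument will be pointwise over the $\Verifier$'s randomness, which is why perfect (rather than merely statistical) correctness is what is needed.

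First I would fix an instance $x \in L$ together with a witness $w \in \{0,1\}^m$, and consider an arbitrary execution of the predictable argument system, i.e., fix the $\Verifier$'s coins so that it samples a message $\mu \in \{0,1\}^\ell$, computes $ct \gets \Enc(x, \mu, 1^\lambda)$, and sends $ct$ to the $\Prover$. By perfect correctness, $\Dec(x, w, ct, 1^\lambda) = \mu$. Unwinding the definition of the circuit $C_{x, ct}: \{0,1\}^m \to \{0,1\}^\ell$, which on input $w' \in \{0,1\}^m$ outputs $\Dec(x, w', ct, 1^\lambda)$, this says exactly that $C_{x, ct}(w) = \mu$, and hence $\mu \in \Range(C_{x, ct})$.

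Next I would invoke the hypothesis that the deterministic algorithm $\calA$ solves $\Avoid$. Since $\ell > m$, the circuit $C_{x, ct}$ is a legitimate stretching, polynomial-size instance of $\Avoid$ (note $\Dec$ is a fixed polynomial-time algorithm and $\lambda = \ell^{2/\eps} = \poly(\ell)$), so $\calA(C_{x, ct}) \notin \Range(C_{x, ct})$. Combining this with $\mu \in \Range(C_{x, ct})$ yields $\calA(C_{x, ct}) \neq \mu$. But the $\Prover$ in this reduction sends precisely $\calA(C_{x, ct})$, and the $\Verifier$ accepts if and only if the received message equals $\mu$; therefore the $\Verifier$ rejects. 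Since the choice of $\Verifier$ coins was arbitrary (the argument used only that $ct$ is a valid encryption of $\mu$ and that $w$ is a witness, never how $\mu$ or $ct$ were sampled), the $\Prover$ convinces the $\Verifier$ with probability $0$.

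I do not expect any real obstacle here: the claim is essentially a one-line consequence of perfect correctness and the meaning of ``solving $\Avoid$.'' The only points needing care are purely definitional — verifying the stretch and size bounds that make $C_{x, ct}$ an admissible input to $\calA$, and observing that ``$\Prover$ never convinces $\Verifier$'' must be read as a statement holding for \emph{every} run, which is exactly what perfect correctness delivers (a statistically-correct scheme would only give a negligible acceptance probability, weakening the later derivation of $L \in \coNP$).
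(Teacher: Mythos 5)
Your proof is correct and follows essentially the same approach as the paper's: by perfect correctness $\mu = \Dec(x,w,ct,1^\lambda) = C_{x,ct}(w) \in \Range(C_{x,ct})$, while $\calA(C_{x,ct}) \notin \Range(C_{x,ct})$ since $\calA$ solves $\Avoid$, so the $\Prover$'s response can never equal $\mu$. The paper states this more tersely, but your elaboration (fixing the coins, checking the stretch and size bounds, and flagging why \emph{perfect} rather than statistical correctness is what makes ``never'' literal) is exactly the reasoning left implicit there.
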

\begin{proof}
	The only message that convinces the verifier is $\tilde{m} := \Dec(x, w, ct, 1^\lambda)$, where $w$ is a witness of $x\in L$. Clearly, $\tilde{m}$ is in the range of $C_{x, ct}$.
\end{proof}

\begin{claim}
	If $x\not\in L$, then the $\Prover$ has a non-zero probability of convincing the $\Verifier$.
\end{claim}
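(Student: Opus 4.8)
The plan is to show that the challenge $m$ sampled by the $\Verifier$ coincides with the string $\calA(C_{x,ct})$ sent by the $\Prover$ with probability strictly greater than $0$ over the $\Verifier$'s coins, by passing to an \emph{ideal} experiment in which the ciphertext reveals nothing about $m$ and invoking the security of the witness encryption scheme. Write $g(ct) := \calA(C_{x,ct}) \in \{0,1\}^\ell$ for the (deterministic) string the $\Prover$ sends upon receiving a ciphertext $ct$, and let
\[
    p_{\mathrm{real}} := \Pr_{\substack{m \sim \{0,1\}^\ell\\ ct \sim \Enc(x, m, 1^\lambda)}}\bigl[\, g(ct) = m \,\bigr],
\]
which is precisely the probability that the $\Verifier$ accepts. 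In the relevant regime $\ell = \poly(n)$, so $\lambda = \ell^{2/\eps} = \poly(n)$, the circuit $C_{x,ct}$ has size $\poly(n)$ (it simply runs $\Dec(x, \cdot\,, ct, 1^\lambda)$), and since $\calA \in \FP$, the map $ct \mapsto g(ct)$ is computable by a circuit of size $\poly(n)$.

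For each fixed $m \in \{0,1\}^\ell$, consider the adversary $\calB_m$ that on input $ct$ outputs $1$ iff $g(ct) = m$; this is a non-uniform circuit of size $\poly(n)$ with $x$ and $m$ hardwired. Because $x \notin L$, applying the $2^{-\lambda^\eps}$-security of $(\Enc, \Dec)$ to $\calB_m$ (and using $\lambda^\eps = \ell^2$) gives
\[
    \Bigl|\Pr_{ct \sim \Enc(x, m, 1^\lambda)}[\, g(ct) = m \,] - \Pr_{ct \sim \calU}[\, g(ct) = m \,]\Bigr| < 2^{-\ell^2}.
\]
Averaging over $m \sim \{0,1\}^\ell$ and using $\sum_{m \in \{0,1\}^\ell}\Pr_{ct \sim \calU}[g(ct) = m] = 1$ (as $g(ct)$ is a single $\ell$-bit string for each $ct$), we obtain
\[
    p_{\mathrm{real}} \ \ge\ \Ex_{m \sim \{0,1\}^\ell}\Bigl[\Pr_{ct \sim \calU}[\, g(ct) = m \,]\Bigr] - 2^{-\ell^2} \ =\ 2^{-\ell} - 2^{-\ell^2}.
\]
Since $\ell > m \ge 1$ forces $\ell \ge 2$ and hence $\ell^2 > \ell$, the right-hand side is strictly positive, which is exactly the claim.

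I do not anticipate a genuine obstacle; the two points that require a little care are: (i) that $\calB_m$ really has size $\poly(n)$, which relies on $\lambda$ — hence the running time of $\Dec$ — being polynomial in $n$, which is precisely why $\lambda = \ell^{2/\eps}$ with $\ell = \poly(n)$ is chosen; and (ii) that the security loss $2^{-\lambda^\eps}$ is smaller than the ``ideal'' acceptance probability $2^{-\ell}$, again guaranteed by the parameter choice since $\lambda^\eps = \ell^2 \gg \ell$. Equivalently, one may argue by contradiction: if $p_{\mathrm{real}} = 0$ then $\Pr_{ct \sim \Enc(x, m, 1^\lambda)}[g(ct) = m] = 0$ for every $m$, while by averaging some $m^\star$ satisfies $\Pr_{ct \sim \calU}[g(ct) = m^\star] \ge 2^{-\ell} > 2^{-\ell^2}$, so $\calB_{m^\star}$ contradicts the security of $(\Enc, \Dec)$.
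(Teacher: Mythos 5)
Your proof is correct and uses essentially the same idea as the paper: apply the $2^{-\lambda^\eps}$-security of the witness encryption against the $\poly(n)$-size adversary that tests whether $\calA(C_{x,ct})$ equals a fixed target message, and compare against the uniform-$ct$ world where $\calA(C_{x,ct})$ hits the target with probability at least $2^{-\ell}$. The paper phrases it by first choosing a ``heavy'' message $m^\star$ (via pigeonhole, so $\Pr_{ct\sim\calU}[\calA(C_{x,ct})=m^\star]\geq 2^{-\ell}$) and then invoking security for that single $m^\star$; you instead average the security guarantee over all $m$, yielding the slightly tighter bound $p_{\mathrm{real}}\geq 2^{-\ell}-2^{-\lambda^\eps}$ rather than the paper's implicit $2^{-\ell}\cdot(2^{-\ell}-2^{-\lambda^\eps})$. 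Both suffice for positivity, and your ``argue by contradiction'' aside is precisely the paper's route. The only notational hiccup to flag: you reuse $m$ both for the witness length (so $\ell>m$) and for the verifier's message, matching the paper's slightly overloaded notation; the $\ell\ge 2$ observation is correctly needed to make $2^{-\ell}-2^{-\ell^2}$ strictly positive.
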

\begin{proof}
	Let $m^\star \in \{0, 1\}^\ell$ be any string such that $\calA(ct) = m^\star$ with probability at least $2^{-\ell}$ over a truly random $ct$. Since $\calA$ runs in polynomial time (which means the security of witness encryption holds for $\calA$), the probability over $ct \sim \Enc(x, m^\star, 1^\lambda)$ that $\calA(ct) = m^\star$ is at least $2^{-\ell} - 2^{-\lambda^\eps} > 0$. 
\end{proof}

The above claims imply a nondeterministic algorithm for deciding the complement of $L$. On input $x \in \{0, 1\}^n$, guess the $\Verifier$'s first message $m$ and randomness $r$, compute $ct := \Enc(x, m, 1^\lambda; r)$, and accept if the $\Verifier$ accepts when the $\Prover$ replies with $\calA(C_{x, ct})$.

In summary, witness encryption implies that $\NP$ has a \emph{special type of predictable argument system}. Moreover, if $\Avoid \in \FP$, then plugging the $\Avoid$ algorithm as the prover results in the following intriguing situation: if $x\in L$, then the $\Verifier$ \emph{never} accepts, while if $x\not\in L$, then the $\Verifier$ accepts with \emph{non-zero} probability! This allows us to put every language with such argument systems in $\coNP$.

An interesting question is to identify the weakest possible argument system for $\NP$ such that the above situation happens. Can we build such argument systems using only one-way functions? Such an argument system would make progress towards proving~\autoref{conj: hardness of Avoid from OWF}.

\section*{Acknowledgments}
\ifnum\Anonymity=0
We thank Yilei Chen for helpful discussions regarding the LPN assumption, Xin Li for helpful discussions about extractors, and Rahul Ilango for sending us a draft version of~\cite{Ilango25}.
\else
Anonymous acknowledgments.
\fi

{\small \bibliography{ref}}

\appendix

\section{Ilango's Proof}\label{sec: RAHUL}
We present an alternative proof of our main results that demi-bits generators imply $\Avoid\not\in\SearchNP$. This proof is due to Rahul Ilango~\cite{Ilango25} in an independent and concurrent work.

\begin{theorem}[{Essentially~\cite{Ilango25}}]\label{thm: Ilango25}
    Let $G: \{0, 1\}^n \to \{0, 1\}^m$ be a candidate demi-bits generator and $t := 30m$. Suppose that $m > n + \lceil \log t\rceil$. For strings $s_1, s_2, \dots, s_t\in \{0, 1\}^m$, define the generator
    \[C_{\vec{s}}(x, i) := G(x) \oplus s_i.\]
    If there is a nondeterministic algorithm $\calA$ that solves $\Avoid$ on instances of the form $C_{\vec{s}}$, then there is a nondeterministic adversary $\calB$ that breaks the demi-bits generator $G$.
\end{theorem}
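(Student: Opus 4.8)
The plan is to build the adversary $\calB$ with \emph{no advice at all}, by having it guess a full shift-tuple $\vec{s}$ together with an accepting run of $\calA$. First I would record the elementary fact that the hypothesis $m > n + \lceil\log t\rceil$ makes every circuit $C_{\vec s}\colon \{0,1\}^{n+\lceil\log t\rceil} \to \{0,1\}^m$ a genuine $\Avoid$ instance: its image $\bigcup_{i=1}^{t}(\Range(G)\oplus s_i)$ has size at most $t\cdot 2^n \le 2^{n+\lceil\log t\rceil} < 2^m$, so a non-output always exists. Define $\calB(y)$ to nondeterministically guess $\vec s = (s_1,\dots,s_t)\in(\{0,1\}^m)^t$, an index $i\in[t]$, and a computation path of $\calA(C_{\vec s})$; it accepts iff that path is accepting with some output $w$ for which $w\oplus s_i = y$. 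Since $|C_{\vec s}| = \poly(m)$ and $\calA$ runs in polynomial time, $\calB$ is a nondeterministic polynomial-time machine.

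For soundness ($\Pr_{x}[\calB(G(x))=1]=0$), I would use that $\calA$ solves $\Avoid$ in the $\SearchNP$ sense: on the instance $C_{\vec s}$, which has a solution, \emph{every} accepting path outputs some valid $w\notin\Range(C_{\vec s})$, hence $w\oplus s_j\notin\Range(G)$ for every $j\in[t]$; in particular $y = w\oplus s_i\notin\Range(G)$. So $\calB$ rejects every point of $\Range(G)$.

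The substance is the completeness bound $\Pr_{y}[\calB(y)=1]\ge \tfrac12$. Write $A := \{\,y : \calB(y)=1\,\} = \bigcup_{\vec s}\{\,w\oplus s_i : i\in[t],\ w\in W(\vec s)\,\}$, where $W(\vec s)$ is the set of outputs of accepting paths of $\calA(C_{\vec s})$; note $A\subseteq \Range(G)^{c}$ by soundness, and $W(\vec s)\ne\emptyset$ for every $\vec s$ (the instance is expanding and $\calA\in\SearchNP$). Suppose for contradiction $|A| < \tfrac12\cdot 2^m$. For a uniformly random $\vec s$ and any fixed $v\in\{0,1\}^m$ we have $\Pr_{\vec s}[\,\forall i\in[t]\ \ v\oplus s_i\in A\,] = (|A|/2^m)^t < 2^{-t}$, so a union bound over $v$ gives $\Pr_{\vec s}[\exists v\ \forall i\ \ v\oplus s_i\in A] < 2^m\cdot 2^{-t} = 2^{-29m} < 1$ --- this is precisely where $t = 30m$ is used. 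Hence there is a fixed $\vec s^{\star}$ such that no $v$ satisfies $v\oplus s^{\star}_i\in A$ for all $i\in[t]$. But $\calA(C_{\vec s^{\star}})$ has an accepting path outputting some $w^{\star}\in W(\vec s^{\star})$, and then $w^{\star}\oplus s^{\star}_i\in A$ for every $i\in[t]$ by the definition of $A$; taking $v = w^{\star}$ contradicts the choice of $\vec s^{\star}$. Therefore $|A|\ge\tfrac12\cdot 2^m$, so $\calB$ accepts at least half of all strings and thus breaks $G$.

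The one conceptual obstacle is that $\calA$ is adversarial, so $\calB$ cannot steer it toward outputting any prescribed string; the argument sidesteps this by only asking that $\calA$'s output lie in the (possibly enormous) complement of $\Range(G)$ and then averaging over $\vec s$ to locate a tuple on which $\calA$ is nevertheless ``trapped.'' The two design choices that make this go through --- the $\XOR$ form $C_{\vec s}(x,i) = G(x)\oplus s_i$, so that one output $w$ certifies the $t$ strings $w\oplus s_1,\dots,w\oplus s_t$ as non-outputs of $G$, and the choice $t = \Theta(m)$, so that the union bound $2^m\cdot 2^{-t}<1$ closes --- are exactly what separates this from a single run of $\calA$, which would certify only polynomially many non-outputs and hence could never cover a constant fraction of $\{0,1\}^m$.
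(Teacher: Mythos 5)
Your proof is correct and follows essentially the same route as the paper: the same nondeterministic adversary $\calB$ that guesses the shift tuple, the same soundness argument via the $\XOR$ structure of $C_{\vec s}$, and the same Lautemann-style covering argument for completeness. The only cosmetic differences are that you re-derive the covering bound inline rather than invoking it as a standalone fact, and you work with the acceptance set at a $1/2$ threshold rather than the rejection set at a $1/3$ threshold — both immaterial.
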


The proof relies on the following fact underlying Lautemann's proof~\cite{Lautemann83} that $\BPP$ is in the polynomial hierarchy.
\begin{fact}\label{fact: Lautemann}
    Let $R\subseteq \{0, 1\}^m$ be a set of size at least $\eps \cdot 2^m$, and $t := 10m\eps^{-1}$. Then, with probability at least $1-2^{-m}$ over the strings $s_1, s_2, \dots, s_t \sim \{0, 1\}^m$, it holds that for every string $z\in\{0, 1\}^m$, there exist $y\in R$ and $i\in[t]$ such that $z = y \oplus s_i$.
\end{fact}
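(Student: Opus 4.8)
The plan is to run the standard first-moment / union-bound argument. First I would reformulate the covering condition: for a fixed $z \in \{0,1\}^m$, there exist $y \in R$ and $i \in [t]$ with $z = y \oplus s_i$ if and only if $z \oplus s_i \in R$ for some $i \in [t]$, i.e.\ $z \in \bigcup_{i=1}^t (R \oplus s_i)$. So the event we want to hold is $\bigcup_{i=1}^t (R \oplus s_i) = \{0,1\}^m$, and its complement is the event that some ``bad'' $z$ is missed by all $t$ shifts.

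Next I would bound, for each fixed $z$, the probability that $z$ is bad. Since $R \oplus s_i$ is just a translate of $R$ and $s_i$ is uniform, $\Pr_{s_i}[z \notin R \oplus s_i] = \Pr_{s_i}[z \oplus s_i \notin R] = 1 - |R|/2^m \le 1 - \eps$. Because $s_1, \dots, s_t$ are drawn independently, the events $\{z \notin R \oplus s_i\}$ for $i \in [t]$ are independent, so
\[
\Pr_{s_1,\dots,s_t}[z \text{ is bad}] \le (1-\eps)^t \le e^{-\eps t} = e^{-10m},
\]
using $t = 10m\eps^{-1}$ and $1 - u \le e^{-u}$.

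Then I would apply a union bound over the $2^m$ choices of $z$:
\[
\Pr_{s_1,\dots,s_t}\bigl[\exists z \text{ bad}\bigr] \le 2^m \cdot e^{-10m} = \bigl(2/e^{10}\bigr)^m < 2^{-m},
\]
where the final inequality holds since $2/e^{10} < 1/2$. Hence with probability at least $1 - 2^{-m}$ over $s_1,\dots,s_t$ the covering condition holds for every $z$, which is exactly the claim.

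I do not expect a genuine obstacle here; the argument is routine. The only points requiring minor care are (i) the reformulation as a covering-by-translates statement, and (ii) invoking independence of the $s_i$ to multiply the per-shift failure probabilities before the union bound over $z$. If one wants the exact constant $2^{-m}$ rather than something stronger, the crude estimate $e^{10} > 4 \ge 2 \cdot 2$ already suffices, so no delicate numerics are needed.
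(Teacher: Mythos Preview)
Your proposal is correct and follows essentially the same approach as the paper's proof sketch: fix $z$, bound the failure probability by $(1-\eps)^t$ using independence of the $s_i$, then union bound over the $2^m$ choices of $z$. The paper bounds $(1-\eps)^t < 4^{-m}$ to get exactly $2^m \cdot 4^{-m} = 2^{-m}$, while you go through $e^{-10m}$, but the argument is the same.
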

\begin{proof}[Proof Sketch]
    Fix $z\in \{0, 1\}^m$, then the probability over $s_1, s_2, \dots, s_t\sim\{0, 1\}^m$ that $z\ne y\oplus s_i$ for every $y\in R$ and $i\in [t]$ is at most $(1-\eps)^t < 4^{-m}$. \autoref{fact: Lautemann} follows from a union bound.
\end{proof}

\def\Rej{\textsc{Rej}}
\begin{proof}[Proof of \autoref{thm: Ilango25}]
    Let $\calA$ be a nondeterministic adversary such that $\calA(s_1, \dots, s_t)$ outputs a string not in the range of $C_{\vec{s}}$. Let $\calB$ be the nondeterministic adversary $\calB$ that on input $y \in \{0, 1\}^m$, accepts if there exists $s_1, \dots, s_t$ and $i\in [t]$ such that $\calA(s_1, \dots, s_t) = s_i \oplus y$, and rejects otherwise. 
    
    We argue that $\calB$ breaks the demi-bits generator $G$. Clearly, for every $y\in \Range(G)$, every $\vec{s}$, and every $i\in[t]$, letting $y = G(x)$, then $s_i\oplus y = C_{\vec{s}}(x, i) \in \Range(C_{\vec{s}})$, hence $\calB(y)$ always rejects.

    On the other hand, let $\Rej := \{y \in \{0, 1\}^m: \calB(y)\text{ rejects}\}$, it remains to prove that $|\Rej|$ is small. Suppose for contradiction that $|\Rej| \ge 2^m / 3$, then by~\autoref{fact: Lautemann}, there exist strings $s_1, s_2, \dots, s_t \in \{0, 1\}^m$ such that for every string $z \in \{0, 1\}^m$, there exists a string $y \in \Rej$ such that $z = s_i\oplus y$ for some $i\in [t]$. Needless to say, this holds for $z = \calA(s_1, \dots, s_t)$, hence $\calB(y)$ should accept when its nondeterministic guesses are equal to $(s_1, \dots, s_t)$. This is a contradiction to our assumption that $y \in \Rej$. Hence we have $|\Rej| \le 2^m/3$ and $\calB$ indeed accepts most of its inputs.
\end{proof}

\section{Candidate Demi-Bits Generators}\label{sec: candidate demi-bits}

\subsection{Demi-Bits Generators with Polynomial Stretch}
\autoref{assumption: demi-bits with poly stretch} (demi-bits generators with polynomial stretch) follows from Rudich's conjecture on the unprovability of circuit lower bounds~\cite{Rudich97}, with the \emph{truth table generator} $\TT$ being a candidate demi-bits generator.

\def\lb{\mathsf{lb}}
For a Boolean function $f: \{0, 1\}^n \to \{0, 1\}$ and a parameter $s(n) \le \poly(n)$, let $\lb(f, s)$ denote the sentence stating that ``$f$ requires circuit complexity at least $s(n)$''. This sentence can be written as a CNF of size $2^{O(n)}$ and, if true, admits a trivial proof of length $2^{\tilde{O}(s(n))}$ in most reasonable proof systems. \emph{Rudich's conjecture} asserts that there is no (non-uniform) propositional proof system that has length-$2^{O(n)}$ proofs of $\lb(f, s)$ for a large fraction of Boolean functions $f$. This can be equivalently formulated as the non-existence of ``$\NP/_{\poly}$-natural proofs'' against polynomial-size circuits (see~\cite{Rudich97} for more details). Rudich's conjecture was further investigated in~\cite{PichS19, SanthanamT21}.

It is easy to see that Rudich's conjecture is equivalent to the demi-hardness of the ``truth table generator'' $\TT:\{0, 1\}^{O(s\log s)} \to \{0, 1\}^{2^n}$: Given as input the description of a size-$s$ circuit $C$, $\TT(C)$ outputs the truth table of $C$. As we only want demi-bits generators with polynomial stretch, we can set the parameter $s(n)$ to be $2^{\eps n}$ for some constant $\eps > 0$. %

Rudich's conjecture follows from the existence of \emph{super-bits} generator $g: \{0, 1\}^\ell \to \{0, 1\}^{\ell+1}$~\cite{Rudich97,DBLP:journals/jacm/GoldreichGM86}.\footnote{This is true for $s(n) = 2^{\Omega(n)}$. If super-bits generators with \emph{subexponential} security exist, then $\TT$ is a secure demi-bits generator even for $s(n) = \poly(n)$.} It is open whether Rudich's conjecture also follows from the existence of any demi-bits generator $g: \{0, 1\}^\ell \to \{0, 1\}^{\ell + 1}$, i.e., whether $\TT$ is the ``most secure'' demi-bits generator (see Open Problem 4 of~\cite{Rudich97}). Hence, we have:

\begin{proposition}
    \autoref{assumption: demi-bits with poly stretch} follows from either:\begin{itemize}
        \item Rudich's conjecture on the unprovability of random circuit lower bounds; or
        \item the existence of super-bits generators.
    \end{itemize}
\end{proposition}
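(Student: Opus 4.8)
The plan is to prove the two implications separately: the first via the truth table generator $\TT$, and the second by applying the Goldreich--Goldwasser--Micali (GGM) stretch extension to a length-extending super-bits generator and then using the elementary fact that every super-bits generator is a demi-bits generator.

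\emph{From Rudich's conjecture.} Recall the equivalence noted above: Rudich's conjecture is exactly the assertion that the truth table generator $\TT_{n,s}\colon\{0,1\}^{O(s\log s)}\to\{0,1\}^{2^n}$ is a demi-bits generator secure against $\NP/_\poly$, where an $\NP/_\poly$ adversary breaking $\TT_{n,s}$ is read as an $\NP/_\poly$-natural property useful against circuits of size $s$. The only task is therefore to turn a demi-hard $\TT$ into one of a prescribed polynomial stretch. Fix a target constant $c$ and set $\eps:=1/(2c)$ and $s(n):=2^{\eps n}$, so that $\TT_{n,s}$ maps $\ell(n):=O(s(n)\log s(n))=2^{\eps n(1+o(1))}$ bits to $M(n):=2^n=\ell(n)^{(1-o(1))/\eps}$ bits. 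First I would observe that the polynomial-size form of Rudich's conjecture implies its size-$2^{\eps n}$ form: any $\NP/_\poly$-natural property useful against circuits of size $2^{\eps n}$ is, for all large $n$, also useful against polynomial-size circuits (every polynomial-size circuit has size at most $2^{\eps n}$), while its largeness and constructivity are untouched---so no such property exists, i.e., $\TT_{n,2^{\eps n}}$ is demi-hard against $\NP/_\poly$. Next I would re-index $\TT$ by its input length: for a general input length $\nu$, take the largest $n$ with $\ell(n)\le\nu$, feed $\ell(n)$ of the $\nu$ input bits to $\TT_{n,s}$ while ignoring the rest, and truncate the output to $\nu^c$ bits---which is legal because $\nu=\Theta(\ell(n))$ and hence $\nu^c=\Theta(\ell(n)^c)\ll M(n)=\ell(n)^{2c(1-o(1))}$ for large $n$. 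Demi-hardness is preserved by this re-indexing, since truncating the output and padding the input with ignored bits each send a hypothetical adversary back to one of the same size against $\TT_{n,s}$ (both are one-line arguments, with no hybrid needed). As $c$ was arbitrary, this establishes \autoref{assumption: demi-bits with poly stretch}.

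\emph{From super-bits generators.} Let $g\colon\{0,1\}^\ell\to\{0,1\}^{\ell+1}$ be a super-bits generator. I would apply the standard stretch extension---output one bit of $g(x)$, recurse on the remaining $\ell$-bit block of $g(x)$, and iterate $\poly(n)$ times---to obtain, for each constant $c$, a generator $g'_n\colon\{0,1\}^n\to\{0,1\}^{n^c}$ computable in time $\poly(n)$. Its security is the usual hybrid argument over the $\poly(n)$ applications of $g$; the crucial point, due to Rudich~\cite{Rudich97}, is that super-bits (unlike demi-bits) compose, because each hybrid distinguisher can be realized by a nondeterministic circuit of size (the size of the given nondeterministic distinguisher) $+\poly(n)$: it deterministically computes a prefix of the composition, plants a uniform block, and runs the given distinguisher, forwarding the latter's nondeterminism. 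Hence $g'_n$ is again a super-bits generator, secure against $\NP/_\poly$. Finally, every super-bits generator is a demi-bits generator: an $\NP/_\poly$ adversary that accepts a constant fraction of uniform strings and no string in the range is, in particular, a nondeterministic distinguisher with constant advantage. So $\{g'_n\}$ witnesses \autoref{assumption: demi-bits with poly stretch}. (Alternatively, one may invoke the already-cited implication that super-bits generators imply Rudich's conjecture~\cite{Rudich97} and appeal to the previous paragraph.)

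\emph{Expected obstacle.} Everything above is routine (monotonicity of Rudich's conjecture in the circuit-size parameter; truncation and padding preserving demi-hardness; the GGM hybrid argument for super-bits). The one point that requires care is the largeness/advantage convention: the natural-proofs ``largeness'' condition is customarily taken to be a $2^{-O(n)}=1/\poly(2^n)$ fraction of $n$-variable Boolean functions, which corresponds to a demi-bits adversary accepting a $1/\poly(M)$ fraction of $M$-bit strings rather than the constant $\eps$ in the definition of a demi-bits generator. I would handle this either by reading ``secure against $\NP/_\poly$'' in \autoref{assumption: demi-bits with poly stretch} as allowing an inverse-polynomial acceptance threshold (the natural reading of super-polynomial security), or by inserting a largeness-amplification step to boost an inverse-polynomial fraction to a constant one. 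I expect reconciling these parameter conventions---rather than any genuinely new argument---to be the only mild difficulty.
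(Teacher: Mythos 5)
Your proof is correct and takes essentially the same route as the paper: both implications go through the truth-table generator $\TT$ with $s(n)=2^{\eps n}$ for the Rudich-conjecture direction, and both note that super-bits compose (hence stretch extends via GGM) and that a super-bits generator is in particular a demi-bits generator. Your closing remark about the largeness/advantage convention (natural-proofs largeness of $1/\poly(2^n)$ versus a constant acceptance threshold for demi-bits) is a genuine point that the paper itself glosses over by simply asserting the equivalence; the paper's "proof" of this proposition is a one-paragraph informal discussion, and your write-up is a faithful and somewhat more careful elaboration of it.
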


\subsection{Constant-Degree Demi-Bits Generators from Learning Parity with Noise}

\paragraph{Learning Parity with Noise} is the assumption that noisy linear equations over $\F_2$ are hard to solve. Let $n\in \N$ be the number of variables, $m := m(n)\in \N$ be the number of equations, and $\mu := \mu(n) \in (0, 1)$ be the noise rate. Here we work in the regime where $m = n^{1+\eps}$ and $\mu = n^{-\eps}$ for some constant $\eps > 0$. Let $A \sim \F_2^{m\times n}$ be a random matrix, $\vec{s}\in \F_2^n$ be a hidden random vector (i.e., the solution), and $\vec{e}\sim {\sf Ber}(\mu)^m$ be a hidden noise vector where each entry is equal to $1$ w.p.~$\mu$ independently. The LPN assumption asserts that the following two distributions are computationally indistinguishable:
\[(A, A\vec{s} + \vec{e})\quad\text{v.s.}\quad (A, \calU_m).\]
Roughly speaking, we will assume the above indistinguishability holds even for nondeterministic adversaries, in the sense that no non-uniform proof system can efficiently prove a vector $\vec{v} \in \F^m$ is \emph{not} of the form $A\vec{s} + \vec{e}$ when $\vec{e}$ is a $(\mu\cdot m)$-sparse vector. That is:

\begin{assumption}\label{assumption: demi-hardness of LPN}
    For some (public) matrix $A\in \F_2^{m\times n}$, there is no polynomial-size non-uniform nondeterministic circuit $C: \{0, 1\}^m \to \{0, 1\}$ such that $C$ accepts a constant fraction of random strings but rejects every string of the form $A\vec{s} + \vec{e}$, where $\vec{e} \in \F_2^m$ is $(\mu\cdot m)$-sparse and $\vec{s} \in \F_2^n$.
\end{assumption}

\begin{fact}
    \autoref{assumption: demi-hardness of LPN} implies \autoref{assumption: demi-bits in degree 2}.
\end{fact}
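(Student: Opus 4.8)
The plan is to instantiate the generator as $g(\vec s, u) := A\vec s \oplus \phi(u)$, where $A\in\F_2^{m\times n}$ is the public matrix from \autoref{assumption: demi-hardness of LPN} (with $m = n^{1+\eps}$ and noise rate $\mu = n^{-\eps}$, so that $\mu m = n$), and $\phi:\{0,1\}^L\to\F_2^m$ is a constant-degree ``sparse-noise gadget'' whose image contains \emph{every} vector of Hamming weight at most $n$. The linear part $\vec s\mapsto A\vec s$ has $\F_2$-degree $1$, so the work is to build $\phi$ of constant degree $d$ with $L$ small and $\Range(\phi)\supseteq\{e\in\F_2^m : \mathrm{wt}(e)\le n\}$. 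First I would write a weight-$\le n$ vector as an $\XOR$ of $n$ ``slot vectors'', each either a standard basis vector or $\vec 0$; then it suffices to have a degree-$d$ gadget $\psi:\{0,1\}^{L_0}\to\F_2^m$ surjective onto $\{\vec 0\}\cup\{\text{unit vectors}\}$, and set $\phi(u^{(1)},\dots,u^{(n)}) := \bigoplus_{j=1}^n \psi(u^{(j)})$. For $\psi$ I would use the ``tensor'' encoding: view $[m]\subseteq[M]^d$ with $M:=\lceil m^{1/d}\rceil$, take $u^{(j)} = (a^{(j,1)},\dots,a^{(j,d)})$ with each $a^{(j,\ell)}\in\{0,1\}^M$, and set $\psi(u^{(j)})_i := \prod_{\ell=1}^d a^{(j,\ell)}_{i_\ell}$ where $(i_1,\dots,i_d)$ is the base-$M$ expansion of $i$. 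Setting each $a^{(j,\ell)}$ to a basis vector (resp.\ one of them to $\vec 0$) makes $\psi(u^{(j)})$ an arbitrary unit vector (resp.\ $\vec 0$), so $\phi$ is surjective onto all weight-$\le n$ vectors; and each output coordinate of $g$ is the $\XOR$ of the $s$-variables dictated by a row of $A$ together with the degree-$d$ monomials $\prod_\ell a^{(j,\ell)}_{i_\ell}$, i.e.\ an $\XOR\circ\AND_d$ circuit.

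Next I would fix the parameters. The total seed length is $n' := n + n\cdot d\cdot M = O\!\left(d\cdot n^{1+(1+\eps)/d}\right)$ and the output length is $m = n^{1+\eps}$; choosing the constant $d$ large enough that $d > 1 + 1/\eps$ guarantees $(1+\eps)/d < \eps$, hence $m \ge (n')^{1+\eps'}$ for a constant $\eps' > 0$ and all large $n$. Padding the seed with unused bits (which preserves demi-bits security) and, if desired, truncating the output to exactly $(n')^{1+\eps'}$ bits (which also preserves demi-bits security) then puts $g$ into the exact syntactic form demanded by \autoref{assumption: demi-bits in degree 2}, still computed coordinate-wise by $\XOR\circ\AND_d$ circuits.

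For security, suppose some size-$\poly(n)$ $\NP/_\poly$ adversary $\Adv$ breaks $g$, i.e.\ $\Pr_y[\Adv(y)=1]\ge\eps_0$ for a constant $\eps_0>0$ while $\Adv$ rejects every string in $\Range(g)$. The key observation is that we do \emph{not} need $\Range(\phi)$ to consist only of sparse vectors --- the ``dense junk'' that $\phi$ also outputs is harmless, because we only use the \emph{containment} $\Range(g)\supseteq\{A\vec s + e : \vec s\in\F_2^n,\ \mathrm{wt}(e)\le \mu m\}$, which holds by surjectivity of $\phi$. Hence the very same circuit $C := \Adv$ accepts a constant fraction of uniform $m$-bit strings and rejects every $(\mu m)$-sparse noisy codeword $A\vec s + e$, contradicting \autoref{assumption: demi-hardness of LPN}. (The advantage parameters in the two assumptions both say ``constant fraction'', so they match; if one insists on sub-constant advantage, standard one-sided hardness amplification reduces to this case.)

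The main obstacle is conceptual rather than computational: a constant-degree map cannot be surjective onto the sparse vectors while also having its image \emph{contained} in the sparse vectors, since ``select one of $M$ coordinates and output nothing else'' inherently needs $\F_2$-degree $\approx\log M$. The resolution above is to drop the containment requirement and exploit that only the surjectivity direction is needed for the reduction; everything else (the slot/tensor encoding, the choice of $d$ versus $\eps$, the seed-padding and output-truncation bookkeeping, and verifying that demi-bits security is preserved under these syntactic adjustments) is routine.
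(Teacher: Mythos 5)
Your proposal is correct and takes essentially the same route as the paper: set $g(\vec s,u)=A\vec s\oplus\phi(u)$ for a constant-degree map $\phi$ whose range \emph{contains} all $(\mu m)$-sparse vectors, and observe that only this containment (not equality) is needed to turn a demi-bits adversary for $g$ into a distinguisher violating the LPN-style assumption. The only cosmetic difference is that you build the gadget $\phi$ explicitly via the slot/tensor encoding with $M=\lceil m^{1/d}\rceil$, whereas the paper simply invokes Lemma~3.1 of~\cite{GajulapalliGNS23}, which packages exactly this construction with the same seed length $O(\mu m^{1+1/d})$; the paper fixes $d=\lceil 2/\eps\rceil$ while you require $d>1+1/\eps$, and both choices suffice.
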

\begin{proof}
    Let $d := \lceil 2/\eps\rceil$. Since $\mu\cdot m < \frac{m^{1-1/d}}{d}$, by~\cite[Lemma 3.1]{GajulapalliGNS23}, there exists a polynomial-time computable function $f: \F_2^{O(\mu m^{1+1/d})} \to \F_2^m$ whose range contains all vectors of sparsity at most $s$, such that each output of $f$ is a degree-$d$ polynomial.
    
    Now consider the following generator $g: \F_2^{O(\mu m^{1+1/d}) + n}\to \F_2^m$. The input of $g$ consists of $\overrightarrow{e_{\sf enc}} \in \F_2^{O(\mu m^{1+1/d})}$ and $\vec{s} \in \F_2^n$. The output is $A\vec{s} + f(\overrightarrow{e_{\sf enc}})$ (where $A$ is hardwired in the circuit computing $g$). It is easy to see that every output bit of $g$ is computable by a degree-$d$ polynomial over $\F_2$, and the range of $g$ contains every vector of the form $A\vec{s} + \vec{e}$ where $\vec{e}$ is $(\mu\cdot m)$-sparse and $\vec{s} \in \F_2^n$. The input length of $g$ is $O(\mu m^{1+1/d}) + n \le O(n^{1+\eps/2})$, which is polynomially smaller than the output length $m = n^{1+\eps}$.
\end{proof}

\subsection{Constant-Degree Demi-Bits Generators from Goldreich's Generator}

\paragraph{Goldreich's generator} is an influential candidate pseudorandom generator with large stretch that is computable with constant locality (i.e., in $\NC^0$)~\cite{Goldreich11-PRG}. To define the generator $G: \{0, 1\}^n \to \{0, 1\}^m$, fix a $d$-uniform hypergraph with $n$ vertices and $m$ (ordered) hyperedges each of size $d$, and a predicate $P: \{0, 1\}^d \to \{0, 1\}$. For each $i\in[m]$, the $i$-th output bit is obtained by applying $P$ to the input bits on the $i$-th hyperedge. That is, suppose the $i$-th hyperedge contains vertices $v_{i, 1}, v_{i, 2}, \dots, v_{i, d}$, then on input $x \in \{0, 1\}^n$, the $i$-th output bit is
\[G(x)_i := P(x_{v_{i, 1}}, x_{v_{i, 2}}, \dots, x_{v_{i, d}}).\]

It seems plausible to conjecture that Goldreich's generator is a secure demi-bits generator when instantiated with a suitable hypergraph and a suitable predicate $P$:
\begin{assumption}\label{assumption: Goldreich}
    There exist constants $c, d > 1$, a predicate $P: \{0, 1\}^d \to \{0, 1\}$, and a non-uniform family of $d$-hypergraphs $\{\calG_n\}_{n\in\N}$ with $n^c$ hyperedges such that Goldreich's PRG $G: \{0, 1\}^n \to \{0, 1\}^{n^c}$ instantiated with $P$ and $\{\calG_n\}$ is a secure demi-bits generator.
\end{assumption}

Clearly, \autoref{assumption: Goldreich} implies \autoref{assumption: demi-bits in degree 2}.

In fact, the following predicate $P: \F_2^5\to \F_2$ is frequently considered in the literature:
\[P_{\rm MST06}(x_{1\sim 5}) := x_1 + x_2 + x_3 + x_4x_5.\]
Note that $P_{\rm MST06}$ is a degree-$2$ function over $\F_2$. It was shown in~\cite{DBLP:journals/rsa/MosselST06} that some instantiation of Goldreich's PRG with $P_{\rm MST06}$ fools linear tests. Instantiations using this predicate was also conjectured to be secure against polynomial-time adversaries in~\cite{DBLP:conf/tcc/BogdanovKR23}. Instantiating Goldreich's PRG with the predicate $P_{\rm MST06}$ and a suitable family of $5$-hypergraphs gives us a candidate demi-bits generator computable by \emph{degree-$2$ polynomials over $\F_2$}.

To summarize, we have:
\begin{proposition}[Informal]
    \autoref{assumption: demi-bits in degree 2} follows from either:\begin{itemize}
        \item the demi-hardness of Learning Parity with Noise in certain parameter regime; or
        \item the demi-hardness of certain suitably instantiated Goldreich's generator.
    \end{itemize}
\end{proposition}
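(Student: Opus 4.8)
The plan is to obtain the Proposition as an immediate consequence of the two implications already established in this subsection, so the ``proof'' is really a matter of assembling them and checking that the parameters line up with the precise statement of \autoref{assumption: demi-bits in degree 2}.

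For the first bullet, I would invoke the Fact that \autoref{assumption: demi-hardness of LPN} implies \autoref{assumption: demi-bits in degree 2}. Recall the construction there: fix the public matrix $A \in \F_2^{m\times n}$ and set $d := \lceil 2/\eps\rceil \ge 2$; using the degree-$d$ encoding $f$ of sparse vectors from \cite[Lemma 3.1]{GajulapalliGNS23}, define $g(\vec{e}_{\mathsf{enc}}, \vec{s}) := A\vec{s} + f(\vec{e}_{\mathsf{enc}})$. Every output bit of $g$ is then a degree-$d$ polynomial over $\F_2$, i.e.\ an $\XOR\circ\AND_d$ circuit, and $\Range(g)$ contains every LPN sample $A\vec{s}+\vec{e}$ with $\vec{e}$ being $(\mu m)$-sparse; hence any $\NP/_\poly$ adversary distinguishing $g$'s outputs from random (in the one-sided demi-bits sense) would contradict \autoref{assumption: demi-hardness of LPN}. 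The only arithmetic to verify is that the hypotheses of \cite[Lemma 3.1]{GajulapalliGNS23} hold, i.e.\ $\mu m = n < m^{1-1/d}/d$ with $m = n^{1+\eps}$ (true once $d$ is large enough relative to $\eps$), and that the seed length $O(\mu m^{1+1/d}) + n = O(n^{1+\eps/2})$ is polynomially smaller than the stretch $m = n^{1+\eps}$.

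For the second bullet, I would simply unwind the definitions: \autoref{assumption: Goldreich} already posits a demi-bits generator $G:\{0,1\}^n\to\{0,1\}^{n^c}$ secure against $\NP/_\poly$ whose $i$-th output bit is $P(x_{v_{i,1}},\dots,x_{v_{i,d}})$ for a fixed predicate $P:\{0,1\}^d\to\{0,1\}$ on a constant-size hyperedge. Instantiating with a constant-$\F_2$-degree predicate such as $P_{\rm MST06}(x_{1\sim 5}) := x_1+x_2+x_3+x_4x_5$ makes each output bit a degree-$O(1)$ polynomial over $\F_2$ (degree $2$ in this case), i.e.\ an $\XOR\circ\AND_{O(1)}$ circuit; choosing any constant $\eps$ with $1+\eps\le c$ and padding/truncating the output to length $n^{1+\eps}$ gives exactly the object required by \autoref{assumption: demi-bits in degree 2}. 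Combining the two bullets, either hypothesis yields \autoref{assumption: demi-bits in degree 2}, proving the Proposition. There is no genuine obstacle here beyond the parameter bookkeeping mentioned above; the statement is a consolidation of the preceding Fact and the remark following \autoref{assumption: Goldreich}.
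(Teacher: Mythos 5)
Your proposal is correct and matches the paper's (very brief) treatment: the Proposition in question is explicitly a summary that consolidates the Fact ``\autoref{assumption: demi-hardness of LPN} implies \autoref{assumption: demi-bits in degree 2}'' and the one-line observation that \autoref{assumption: Goldreich} implies \autoref{assumption: demi-bits in degree 2} (since any predicate on $d=O(1)$ inputs has $\F_2$-degree at most $d$). Your parameter bookkeeping for the LPN direction, and your instantiation with $P_{\rm MST06}$ for the Goldreich direction, are the same checks the paper performs.
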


\section{\texorpdfstring{$\Res[\oplus]$}{Res[2]} Is Closed Under Simple Parity Reductions}\label{sec: res parity}

\begin{theorem}
    $\Res[\oplus]$ is closed under simple parity reductions. That is, let $F(x_{1\sim n}) = f_1\land f_2\land \dots \land f_m$ and $G(y_{1\sim {n'}}) = g_1\land g_2 \land \dots \land g_{m'}$ be CNF formulas such that $F\le^\oplus G$. If there exists $\Res[\oplus]$ refutation of $G$ in $s$ steps, then there exists a $\Res[\oplus]$ refutation of $F$ in $2nm' + s$ steps.
\end{theorem}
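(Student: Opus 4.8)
The plan is to \emph{substitute} the parity reduction into the given refutation of $G$. Write $\redu(x_1,\dots,x_n)=(r_1(x),\dots,r_{n'}(x))$, where each $r_j$ is the $\XOR$ of a subset of $x_1,\dots,x_n$. For a linear form $\ell=\sum_j c_j y_j$ over the variables of $G$ set $\ell\circ\redu:=\sum_j c_j\, r_j(x)$, again a linear form, now over the variables of $F$; extend this to linear clauses by $\bigl(\bigvee_i(\ell_i=\alpha_i)\bigr)\circ\redu:=\bigvee_i(\ell_i\circ\redu=\alpha_i)$. If $\pi=(L_1,\dots,L_s)$ is the given $\Res[\oplus]$ refutation of $G$ (so $L_s$ is the empty clause), I would first show that $(L_1\circ\redu,\dots,L_s\circ\redu)$ is, after discarding the lines that degenerate to tautologies, a valid $\Res[\oplus]$ refutation of the clause set $\Delta:=\{\,g\circ\redu:g\in G\,\}$ using at most $s$ lines; and then show how to derive every clause in $\Delta$ from the axioms of $F$ in at most $2n$ lines each. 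Concatenating the (at most $2nm'$) lines of the second part with the (at most $s$) lines of the first yields an $\Res[\oplus]$ refutation of $F$ of length at most $2nm'+s$.

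For the first part the point is that substitution by a $\GF(2)$-linear map commutes with the rules of $\Res[\oplus]$. Since $\redu$ is linear, each $L_i\circ\redu$ is again a linear clause. A weakening step $C/C'$ translates to $C\circ\redu/C'\circ\redu$, still valid because $C(\redu(\alpha))=1$ forces $C'(\redu(\alpha))=1$ for every assignment $\alpha$. A resolution step, which from $C\lor(\ell=0)$ and $D\lor(\ell=1)$ infers $C\lor D$, translates to the resolution step on the form $\ell\circ\redu$ whenever $\ell\circ\redu\neq 0$; and when $\ell\circ\redu=0$ the disjunct $(\ell=1)$ becomes unsatisfiable, so $(D\lor(\ell=1))\circ\redu\equiv D\circ\redu$ and the translated conclusion $C\circ\redu\lor D\circ\redu$ follows from $D\circ\redu$ by one weakening. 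More generally, whenever $L_i\circ\redu$ is a tautology one checks that every later line that used $L_i$ is, after translation, already implied by its other premise, so $L_i\circ\redu$ can simply be dropped; the empty clause is never a tautology, so the translated proof still ends at the empty clause. Hence each original line contributes at most one translated line, giving the bound of $s$.

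For the second part I would run through the three cases of a simple parity reduction (\autoref{def: parity reduction}) for a clause $g\in G$. If $g\circ\redu\equiv\mathsf{True}$ there is nothing to derive (by the previous paragraph the refutation of $\Delta$ never needs it). If $g\circ\redu$ is already an axiom of $F$, it is available for free. In the remaining case $g$ is a unit clause and $g\circ\redu$ is the single equation $\bigl(\sum_{j\in S}x_j=\sum_{j\in S}c_j\bigr)$, obtained as the $\XOR$ of unit axioms $(x_j=c_j)$, $j\in S$, of $F$ with $|S|\le n$. Here I would give the standard $\Res[\oplus]$ derivation of a $\GF(2)$-linear combination of unit clauses, adding the equations one at a time --- each step combining the current partial sum with the next axiom by a weakening, a simplification, and, if needed, one resolution to remove the spent disjunct --- which costs a bounded number of lines per equation and hence at most $2n$ lines in all.

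The main obstacle is purely this bookkeeping, and it is sensitive to the precise rule set one fixes for $\Res[\oplus]$ (axioms together with weakening, resolution, and a simplification/linear-combination rule). One must verify that a $\GF(2)$-linear combination of unit clauses really is derivable with only $O(1)$ lines per summand --- this is where the $\pm 1$ shifts in the right-hand sides have to be tracked --- and that the degenerate translated steps (a resolved linear form landing in the kernel of $\redu$, or a line becoming a tautology under the substitution) are absorbed by single weakenings without inflating the line count. Both are routine once the rules are pinned down, but they carry essentially all the work of the proof.
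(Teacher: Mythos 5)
Your proposal is correct and takes essentially the same approach as the paper: substitute the linear map $\redu$ into the given $\Res[\oplus]$ refutation of $G$ so that each line $C_i$ becomes $C_i\circ\redu$, observe this is still a valid derivation ending in $\bot$ from the axioms $\{g\circ\redu\}$, and then derive each $g\circ\redu$ from the axioms of $F$ in at most $2n$ lines by iteratively combining the width-$1$ axioms two steps at a time. Your treatment of the degenerate cases (a line becoming a tautology, or a resolved linear form collapsing to $0$ in the kernel of $\redu$) is more explicit than the paper's one-line ``it is easy to see,'' but it is the same argument.
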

\begin{proof}
    We assume familiarity with $\Res[\oplus]$ (the definition can be found in~\cite{ItsyksonS20}).
    
    Let $C_1, C_2, \dots, C_s$ be a $\Res[\oplus]$ refutation of $G$ where each $C_i$ is a disjunction of linear equations modulo $2$, $C_i = g_i$ for every $1\le i\le m$, and $C_s = \bot$. Let $\redu: \{0, 1\}^n \to \{0, 1\}^m$ be the simple parity reduction from $F$ to $G$. Define $C'_i = C_i\circ \redu$, then $C'_i$ is still a disjunction of linear equations modulo $2$. It is easy to see that $C'_1, C'_2, \dots, C'_s$ is still a valid $\Res[\oplus]$ derivation, and that $C'_s = \bot$. Hence there is an $s$-step $\Res[\oplus]$ refutation from the axioms $\{g_i\circ\redu\}_{1\le i\le m}$.

    It suffices to show that each $g_i\circ\redu$ can be proved from the axioms of $F$. This is easy to see when $g_i\circ\redu\equiv {\sf True}$ or $g_i\circ\redu$ is equal to some $f_i$, hence we only need to consider the third case in~\autoref{def: parity reduction} where $g_i\circ\redu$ is the $\XOR$ of some axioms in $f_i$. Note that one can derive $(a\oplus b = 0)$ from $(a = 0)$ and $(b = 0)$ in $2$ steps\footnote{First weaken $(b = 0)$ to derive $(a = 1\lor a\oplus b = 0)$, then resolve $(a = 0)$ and $(a = 1\lor a \oplus b = 0)$ to derive $(a\oplus b = 0)$.}, hence $g_i\circ\redu$ can be derived from $F$ in $2n$ steps. Since there are at most $m'$ linear clauses of the form $g_i\circ\redu$ that need to be derived, the total number of steps is at most $2nm' + s$.
\end{proof}

\end{document}